\title{Sybil-Proof Diffusion Auction in Social Networks}
\date{November, 2022}
\author{
Hongyin Chen\\
	Center on Frontiers of Computing Studies\\
	Peking University\\
	Beijing, China \\
	\texttt{chenhongyin@pku.edu.cn} \\
	\And
Xiaotie Deng\\
	Center on Frontiers of Computing Studies\\
	Peking University\\
	Beijing, China \\
	\texttt{xiaotie@pku.edu.cn} \\
	\And
Ying Wang\\
	Center on Frontiers of Computing Studies\\
	Peking University\\
	Beijing, China \\
	\texttt{wying2000@pku.edu.cn} \\
	\And
Yue Wu\\
	Center on Frontiers of Computing Studies\\
	Peking University\\
	Beijing, China \\
	\texttt{wu.y@pku.edu.cn} \\
	\And
Dengji Zhao\\
	ShanghaiTech University\\
	Shanghai, China \\
	\texttt{zhaodj@shanghaitech.edu.cn} \\
}
\newcommand{\BibTeX}{\rm B\kern-.05em{\sc i\kern-.025em b}\kern-.08em\TeX}
\newtcolorbox{edited}[1][]{%
    colback=black!5,
    colframe=black!5,
    notitle,
    sharp corners,
    borderline west={2pt}{0pt}{red!80!black},
    enhanced,
    breakable,
    }
\newcommand{\vek}[1]{\boldsymbol{#1}}
\newcommand{\set}[1]{\mathrm{#1}}
\newcommand{\STM}{STM}
\newcommand{\STMfull}{Sybil tax mechanism}
\newcommand{\SCM}{SCM}
\newcommand{\SCMfull}{Sybil cluster mechanism}
\newcommand{\mechanism}{\mathcal{M}}
\newcommand{\maskprofile}{\vek\theta'}
\newcommand{\settype}[1]{\set\Theta_{#1}}
\newcommand{\setbeta}[2][]{#1{\beta}_{#2}}
\newcommand{\action}[2][]{#1{\vek\theta}_{#2}}
\newcommand{\eps}{\varepsilon}
\newcommand*{\defeq}{\stackrel{\text{def}}{=}}
\newcommand{\dom}{\operatorname{dom}}
\newcommand{\maxbid}[2][]{#1{\operatorname{Max}}[#2]}
\newcommand{\revenue}{\mathcal{R}}
\newcommand{\welfare}{\mathcal{SW}}
\newtheorem{observation}{Observation}
\newtheorem{definition}{Definition}
\newtheorem{theorem}{Theorem}
\newtheorem{lemma}{Lemma}
\newtheorem*{rep@theorem}{\rep@title}
\newcommand{\newreptheorem}[2]{%
\newenvironment{rep#1}[1]{%
 \def\rep@title{#2~\ref{##1}}%
 \begin{rep@theorem}}%
 {\end{rep@theorem}}}
\newtheorem*{proof-sketch}{Proof Sketch}
\begin{document}

\maketitle
\begin{abstract}
A diffusion auction is a market to sell commodities over a social network, where the challenge is to incentivize existing buyers to invite their neighbors in the network to join the market. Existing mechanisms have been designed to solve the challenge in various settings, aiming at desirable properties such as non-deficiency, incentive compatibility and social welfare maximization. Since the mechanisms are employed in dynamic networks with ever-changing structures, buyers could easily generate fake nodes in the network to manipulate the mechanisms for their own benefits, which is commonly known as the Sybil attack. We observe that strategic agents may gain an unfair advantage in existing mechanisms through such attacks. To resist this potential attack, we propose two diffusion auction mechanisms, the Sybil tax mechanism (STM) and the Sybil cluster mechanism (SCM), to achieve both Sybil-proofness and incentive compatibility in the single-item setting. Our proposal provides the first mechanisms to protect the interests of buyers against Sybil attacks with a mild sacrifice of social welfare and revenue. 
\end{abstract}

\keywords{Sybil attack \and Mechanism design \and Social network \and Diffusion auction}

\section{Introduction}\label{h1:intro}

Auction is an important method of selling commodities where the seller collects bids from buyers and allocates commodities according to these bids. Previous works~\cite{bulow1996auctions} have shown that more buyers would significantly lead to higher social welfare and revenue in auctions. However, buyers have no incentive to invite others to their auctions because it would cause tougher competition and
hurt their own interests. Recently, there has been an emergence of studies on the diffusion auction over social networks~\cite{guo:DM-survey}, which studies mechanisms that incentivize buyers to invite new agents to an auction via a social network. 
In these works, while the seller only knows her neighbors, any buyer who is informed of the auction may bid to buy, as well as diffuse the information about the sale to her neighbors to improve her utility.

The first work of diffusion auctions~\cite{li:IDM} proposed the information diffusion mechanism (IDM) for selling one item in a social network,
focusing on the incentive
compatibility of its information propagation action. 
Under
IDM, it's a dominant strategy for each bidder to truthfully bid her private valuation and to diffuse the auction information to all her neighbors.
\citet{zhao2019selling} and \citet{kawasaki2020strategy}
further designed diffusion mechanisms in selling multiple homogeneous items in a social network. 
Those works, however, did not consider a common threat 
known as 
the Sybil attack. 

The first study on Sybil attacks~\cite{douceur2002sybil} considered a situation in peer-to-peer systems where malicious agents may gain an unfair advantage by creating fake identities.
One such 
example is presented in
Figure~\ref{fig:sybil-identities} for a social network 
where the agent $x$ creates six false-name identities.
The Sybil attack is
a significant threat to auctions 
(commonly called \emph{false-name bids})
and has been extensively investigated in traditional auction settings~\cite{yokoo2004effect}. One example proven to be vulnerable to Sybil attacks 
is the well-known Vickrey-Clarke-Groves (VCG) auction for combinatorial auctions of at least two items.

Our work studies this fundamental issue in diffusion auctions on social networks, where fake nodes can be easily created. 
In existing mechanisms such as IDM and FPDM~\cite{zhang2021fixed}, intermediate buyers are rewarded for inviting more buyers. This makes Sybil attack highly profitable and harmful.

We consider a strong adversarial model that allows every buyer 
to create 
fake identities, which
can link with each other internally or link to the buyer's neighbors. But they don't have incoming arcs from other agents, since they are only visible to the creator. 
Our goal is to incentivize diffusion without encouraging Sybil attacks in diffusion auctions.

\subsection{Our Contribution}

We propose two Sybil-proof diffusion mechanisms, the Sybil tax mechanism (STM) and the Sybil cluster mechanism (SCM). STM achieves Sybil-proofness by identifying trustworthy agents. In STM, diffusing to ``suspicious'' agents is not beneficial to buyers.
However, agents in STM do not have a strong incentive to diffuse information.
In proposing SCM, we 
provide a stronger incentive for diffusion, where the reachability of each non-Sybil vertex is credited to some selected agents. By a mild sacrifice on
the seller's revenue, SCM creates a strong incentive to invite new
buyers.

Our work resolves several difficulties. 
We are the first to identify and model the Sybil attack in diffusion auctions.
Our adversarial model
is the most general form of Sybil attack possible without collusion.
Existing diffusion mechanisms cannot resist
such Sybil attacks, even after removing ``suspicious'' agents.

Additionally, we discuss the social welfare and revenue of Sybil-proof diffusion auctions. We prove that there is no optimal
SP diffusion auction mechanism for social welfare, and all SP mechanisms perform poorly in the worst case. To further exhibit the performance of STM and SCM, we conduct experiments under different settings. Experimental results indicate that STM and SCM do not sacrifice social welfare and revenue much compared with non-SP mechanisms.

\begin{figure}[t]
    \centering
    \includegraphics[width=0.6\linewidth]{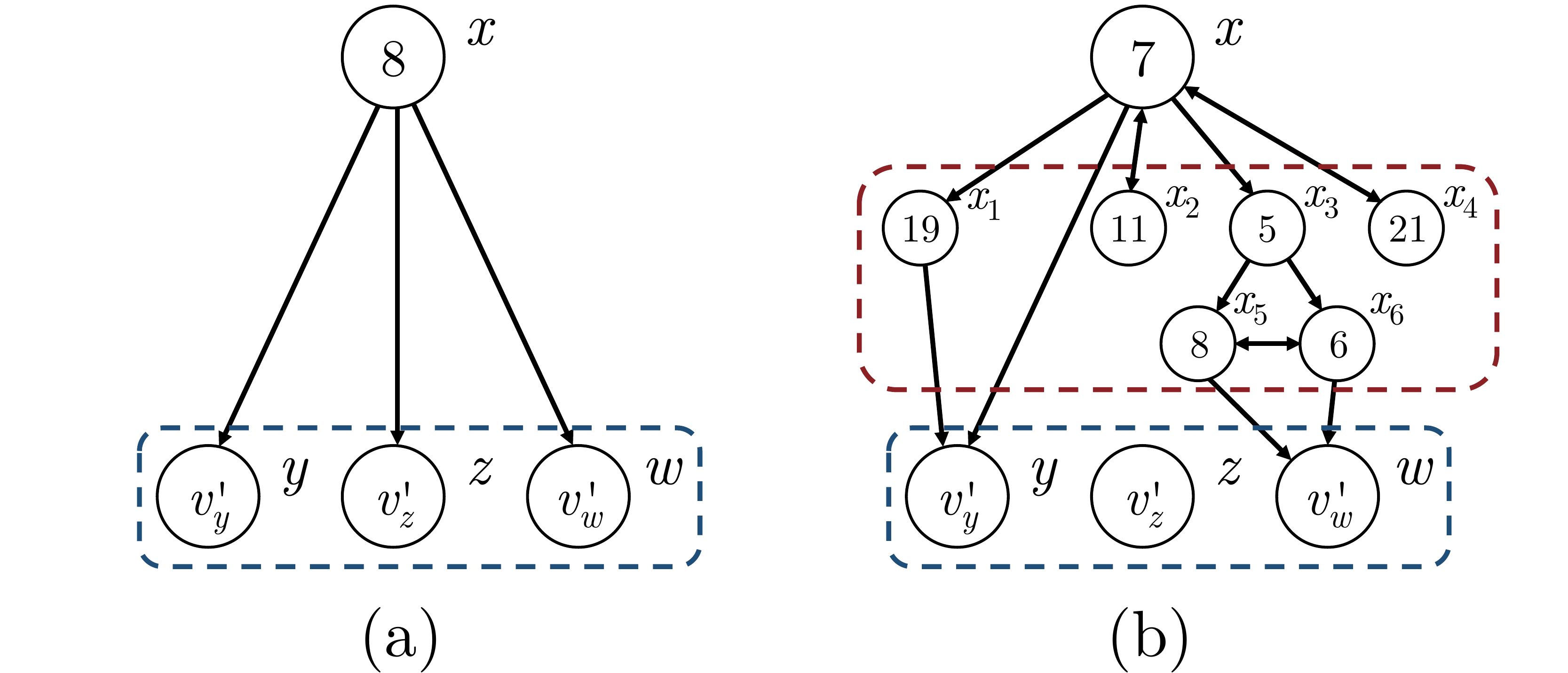}
    \caption{(a) the true type of \(x\); (b) a Sybil attack of \(x\) involving six Sybil identities (in the red rectangle).}
    \label{fig:sybil-identities}
\end{figure}

\subsection{Related Literature}
The Sybil attack has become a fundamental issue
in traditional social networks~\cite{viswanath2010analysis},
where nodes are usually divided into two types: honest ones and Sybil ones. 
In such settings, various protocols have been proposed to
identify Sybil nodes and maintain honest nodes through
the graph structures~\cite{yu2008sybillimit,liu2014defense}.

The Sybil attack is also destructive in auctions. The pioneering work to study
Sybil attacks on 
combinatorial auctions~\cite{yokoo2004effect}
proved that Sybil-proofness and Pareto optimality can't be achieved simultaneously. 
Many other works have followed. For example,
\citet{iwasaki2010worst}
have shown that a Sybil-proof combinatorial auction mechanism may result in extremely low social welfare in some cases.
In dynamic spectrum access auctions, Sybil-proofness has been only achieved when severe restrictions are imposed on Sybil agents. For example, PRAM~\cite{dong2021pram} requires that if an agent performs the Sybil attack, the sum of bids given by herself and her Sybil identities is equal to her private valuation.

\section{Preliminaries}\label{h1:preliminaries}

In a social network, a seller \(s\) is selling one item to a buyer among the set of potential buyers \(N=\{1,2,\dots,n\}\). The set \(N\) is unknown to the seller; instead, she only knows some buyers \(r(s)\subseteq N\). Likewise, each buyer \(i \in N\) has her private social connections, represented as a set of neighbors \(r(i)\subseteq N\). Each buyer $i \in N$ also has a private valuation of the item, which is denoted as $v_i$. Collectively, each buyer \(i\) owns a private type \(\theta_i=(v_i, r(i)) \in \mathbb{R}_{\geq 0}\times 2^N\).

In a diffusion auction, \(s\) can only advertise the sale to her neighbors \(r(s)\) initially. Then, each buyer \(i\) with the information of the sale may diffuse it to some of her neighbors in \(r(i)\). Recursively, many buyers can be informed.
Each buyer is asked to give a bid on the item besides diffusing the sale. The mechanism consequently sells the item to an informed buyer and rewards some buyers for their contribution of inviting others.

We model the bid and diffusion of buyer $i$ as the report type \(\theta'_i=(v'_i, r'(i))\in \mathbb{R}_{\geq 0}\times 2^{r(i)}\), where \(v'_i\in \mathbb{R}_{\geq 0}\) is her bid and \(r'(i)\) is the set of buyers she diffuses to.
A buyer can only diffuse to her neighbors (i.e. $r'(i) \subseteq r(i)$). The input of the mechanism is therefore a report profile \(\vek\theta'=(\theta'_1,\dots,\theta'_n)\), and we suppose the set of seller's neighbors \(r(s)\) is provided in advance and fixed.

The set of all possible types and reports of buyer $i$ is denoted as \(\set\Theta_i=\mathbb{R}_{\geq 0}\times 2^N\), and we denote the set of all possible profiles as \(\set\Theta\).

\begin{definition}[Diffusion auction mechanism]\label{def:diffusion-auction}
A diffusion auction mechanism \(\mathcal{M}\) is defined as a pair of allocation and payment schemes \((\vek\pi(\cdot),\vek{t}(\cdot))\) for arbitrary agent set \(N\):
\begin{itemize}
    \item allocation scheme \(\vek\pi : \set\Theta \to \{0,1\}^n\);
    \item payment scheme \(\vek t : \set\Theta \to {\mathbb{R}}^n\).
\end{itemize}
Given the reported type profile \(\vek\theta'=(\theta'_1,\dots,\theta'_n)\), whose length is not known in advance, \(\pi_i(\vek\theta')=1\) means that agent \(i\) wins the item, and \(0\) otherwise. She then pays \(t_i(\vek\theta')\in\mathbb{R}\) to the seller.
\end{definition}

We assume the following feasibility conditions for diffusion auctions throughout the paper: 

\begin{enumerate}
    \item Allocation feasibility: \(\sum_{i\in N}{\pi_i(\maskprofile)}\leq 1\),
    \item Anonymity: except for ties, the mechanism output is invariant to any permutation on \(N\), and
    \item \label{cond:ignorance} Ignorance of unreachable vertices: if \(i\in N\) is unreachable from \(s\) on the social network represented by \(\maskprofile\), then \(\pi_i(\maskprofile)=t_i(\maskprofile)=0\), and the mechanism output must be invariant with respect to \(\theta'_i\).
\end{enumerate}

We can use graph theory to formalize the third condition above. The social network represented by the true type profile \(\vek\theta\) can be denoted as a graph \(G\) with vertex set \(V(G)=\{s\}\cup N\) and directed edge set \(E(G)=\{(x,y)\mid x\in V(G), y\in r(x)\}\). Likewise, a graph \(G(\theta')\) can be defined for the report profile \(\maskprofile\). The subgraph of $G(\theta')$ with vertices that are reachable from $s$ is denoted as $G_{s}(\theta')$. All vertices unreachable from $s$ are excluded from it. When $\maskprofile$ can be inferred from context, we omit it and write \(G_{s}(\theta')\) as $G_s$.
The ignorance condition means that the mechanism can only use the structural information about \(G_s(\theta')\) and the bids of \(V(G_s)\) as inputs. This is a key difference between a diffusion mechanism and the traditional auction mechanism.

The agents have a quasi-linear utility model. Given a buyer's true type \(\theta_i=(v_i,r(i))\) and the report profile of all agents \(\maskprofile\), her utility under mechanism \(\mathcal{M}=(\vek\pi,\vek{t})\) is \(u_i(\theta_i, \maskprofile,\mathcal{M}) = v_i\cdot \pi_i(\maskprofile) - t_i(\maskprofile)\).
\subsection{Non-deficiency, Individually Rationality and Incentive Compatibility}\label{h2:ir-ic}

In this section, we define the objectives of diffusion mechanisms.

{
We say a mechanism is individually rational if any buyer can achieve a non-negative utility by reporting truthfully, no matter what the other agents do. This means that any agent is at least willing to participate.

\begin{definition}[IR]\label{def:IR}
A diffusion mechanism \(\mechanism=(\vek\pi,\vek{t})\) is \emph{ex-post individually rational} (IR) if for all \(\vek\theta\in\set\Theta\), for all \(i\in N\) with \(\theta_i=(v_i,r(i))\), it is guaranteed that \(u_i(\theta_i,\vek\theta,\mechanism)\geq 0\).\footnote{The definition of IR in previous literature in diffusion auctions does not require the buyer to truthfully diffuse, which differs from the traditional definition in AGT. In the setting of this paper, the two definitions are equivalent, and the traditional definition is presented.}
\end{definition}
}

A desired mechanism encourages agents to behave truthfully, i.e., to bid their private values and to diffuse the information to all their neighbors. In the diffusion auction setting, an agent may be strategic by overbidding, underbidding, or under-diffusion, if such strategies can bring her an advantage. Dominant-strategy incentive compatibility requires that reporting the true type is a dominant strategy for every buyer, ruling out these strategic reports.

\begin{definition}[DSIC]\label{def:IC}
A diffusion mechanism \(\mechanism=(\vek\pi,\vek{t})\) is \emph{dominant-strategy incentive compatible} (DSIC, or IC for short) if, for any buyer \(i\in N\) with type \(\theta_i\), any report profile of other agents \(\action{-i}'\) and any \(\theta'_i\in \set\Theta_i\) satisfying \(r'(i)\subseteq r(i)\), we have \(u_i(\theta_i, \vek\theta,\mechanism)\geq u_i(\theta_i, \maskprofile,\mechanism)\), where \(\vek\theta=(\theta_i,\action{-i}')\) and \(\maskprofile=(\theta'_i,\action{-i}')\).
\end{definition}

Some IC diffusion auction mechanisms, like VCG, may give a negative revenue to the seller~\cite{li:IDM}. We define the following non-deficiency condition to rule out these mechanisms. For a mechanism \(\mechanism=(\vek\pi,\vek{t})\) and a type profile \(\vek\theta\), the revenue to the seller is defined as \(\revenue^\mechanism(\vek\theta)=\sum_{i\in N}t_i(\vek\theta)\).

\begin{definition}\label{def:budget}
A diffusion mechanism \(\mechanism\) is \emph{non-deficit}, or weakly budget balanced, if its revenue for the seller is always non-negative, or formally, \(\revenue^\mechanism(\vek\theta)\geq 0\) for all \(\vek\theta\in\set\Theta\).
\end{definition}

\subsection{The Sybil Attack and Sybil-Proofness}\label{h2:sp}

In our setting, we further consider the desiderata of disincentivizing Sybil attacks. When a buyer \(i\) performs a Sybil attack, she creates an arbitrary number of Sybil identities (or false-name identities) \(i_1,i_2,\dots,i_k\), with a report profile of \(\theta'_{i_1},\dots,\theta'_{i_k}\). The set of all identities of \(i\) is denoted as \(\phi=\{i,i_1,\dots,i_k\}\). For every \(i_j\in \phi\), \(\theta'_{i_j}=(v'_{i_j},r'(i_j))\), it must be guaranteed that \(r'(i_j)\subseteq  r(i) \cup \phi\) because \(i\) does not know any agent besides herself, her neighbors, and her Sybil identities. 
Consequently, such Sybil identities of buyer \(i\) cannot have incoming edges from other buyers because the neighbor sets of other buyers cannot be changed by \(i\).
Refer to Figure~\ref{fig:sybil-identities} for an example of Sybil attacks.

We define Sybil-proofness as a criterion for ruling out such attacks. A mechanism is Sybil-proof if, for every buyer, any form of Sybil attack cannot bring a higher utility.

\begin{definition}[SP]
A diffusion mechanism \(\mechanism=(\vek\pi,\vek{t})\) is \emph{Sybil-proof} (SP) if, for any type profile \(\vek\theta\), any buyer \(i\in N\), and for all \(\theta'_{i},\theta'_{i_1},\dots,\theta'_{i_k}\in\settype{i}\) satisfying \(r'(i)\subseteq \phi\cup r(i)\) and \(\forall i_j\in \phi: r'(i_j)\subseteq \phi\cup r(i)\), we have 
\[u_i(\theta_i,\vek\theta,\mechanism)
\geq
u_i(\theta'_i,\vek\theta',\mechanism)+
\sum_{i_j\in \phi}u_{i_j}(\theta'_{i_j},\vek\theta',\mechanism)
\]
where the Sybil-attack report profile is \(\maskprofile=(\theta'_{i},\theta'_{i_1},\dots,\theta'_{i_k},\vek\theta_{-i})\).
\end{definition}

A Sybil attacker can bring an arbitrary number \(k\) of Sybil identities, and each of the \(k+1\) identities (including the agent herself) can report arbitrarily. This formulation is the most general form of Sybil attacks without collusion. As a degenerate case, a Sybil attack with \(k=0\) is equivalent to a single-agent strategic play in the previous diffusion action setting. Therefore, Sybil-proofness implies incentive compatibility.

\subsection{Vulnerability of Existing Mechanisms}\label{vul_existing}

To the best of our knowledge, none of the existing diffusion auctions is Sybil-proof except for the trivial Neighbor Second-Price Auction (NSP), where only the seller's neighbors are considered with a second price auction 
(see Appendix~\ref{app:idmvcg} for details).\footnote{There are confusingly two mechanisms named VCG in the literature of diffusion auctions: 1) the single-item VCG auction among the seller's neighbors, and 2) the generic VCG mechanism applied to diffusion auction~\cite{li:IDM}. To disambiguate, the former is called Neighbor Second-Price (NSP) in this paper.} As assumed, the seller's neighbors are known to the seller, so there is no chance for them to create fake identities to join NSP. 

Other existing mechanisms for diffusion auctions are all vulnerable to the Sybil attack. Here we use the two typical mechanisms proposed in \cite{li:IDM}, VCG and IDM, to demonstrate the possibility of Sybil attacks. 
Definitions of these mechanisms are given in Appendix~\ref{app:idmvcg}.

\begin{figure}[t]
    \centering
    \includegraphics[width=0.7\linewidth]{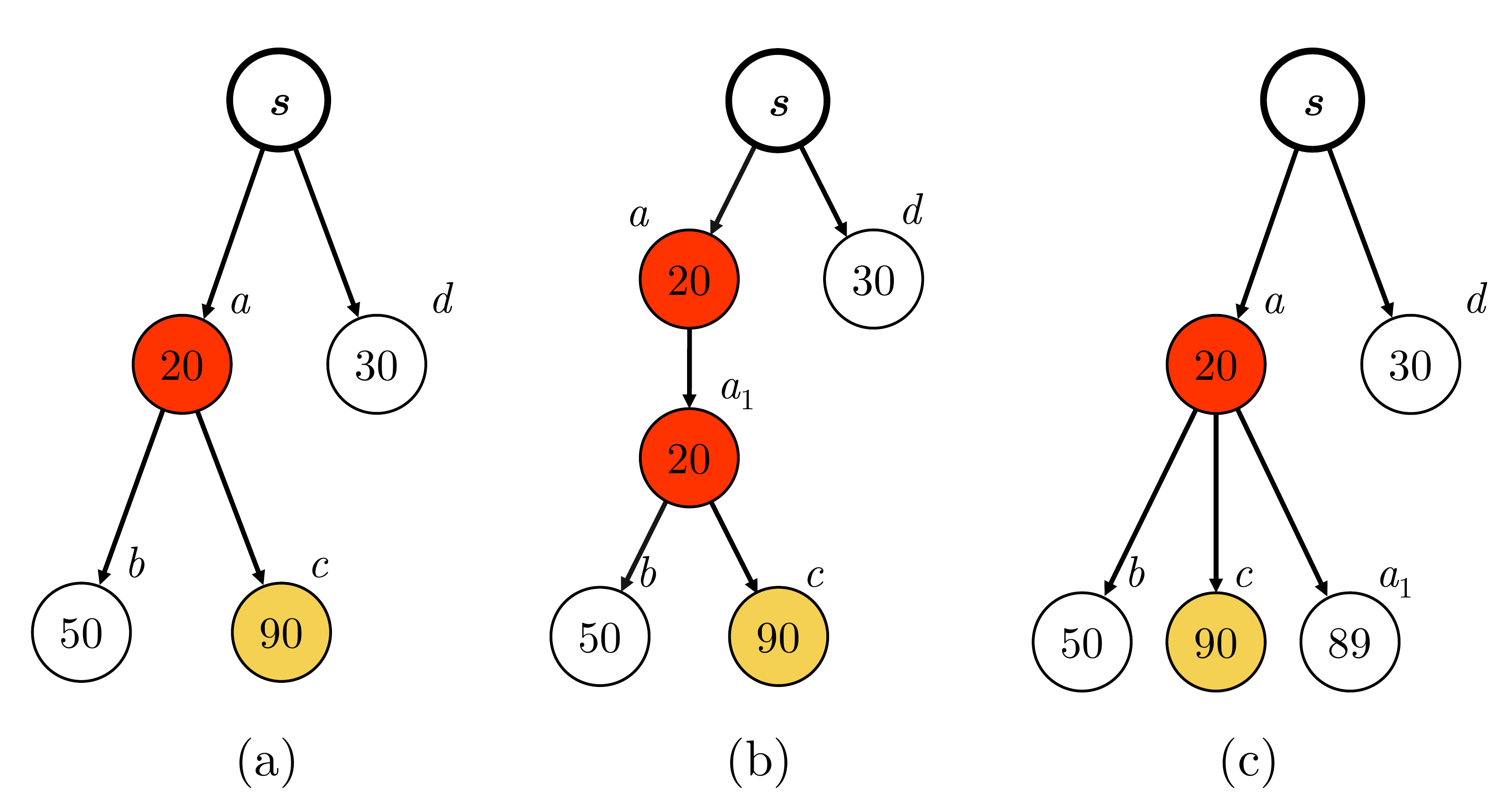}
    \caption{Sybil attack counterexamples of VCG and IDM.}
    \label{fig:nonSP}
\end{figure}

\begin{observation}\label{thm:nonSP-VCG}
VCG and IDM are not Sybil-proof.
\end{observation}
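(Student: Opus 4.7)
The plan is to prove this observation by explicit counterexample construction, which is the standard route for non-Sybil-proofness claims and matches the setup of Figure~\ref{fig:nonSP}. Since the statement is just existential (there exist a type profile, a buyer, and a Sybil deviation that strictly improves her utility), I only need to exhibit one small network for VCG and one for IDM, and then compare truthful utilities against the utilities under a concrete Sybil attack.

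First, I would recall from Appendix~\ref{app:idmvcg} the two payment rules. For VCG in diffusion auctions, each agent pays her marginal damage to the social welfare of the remaining agents; the critical intermediate agents that lie on every $s$-to-winner path get paid the difference between the two optimal welfares with and without them. For IDM, the item travels down the unique critical diffusion sequence between the seller and the highest bidder, and each critical node is rewarded by the gap between the highest bid in the subtree cut off below her and the highest bid without her participation. With these formulas in hand, I would take the networks drawn in Figure~\ref{fig:nonSP}, pick a buyer $x$ who is either (i) a critical intermediary between $s$ and a high bidder, or (ii) the highest bidder herself sitting behind a critical chain.

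Next, I would compute $x$'s truthful utility $u_x(\theta_x,\vek\theta,\mechanism)$ in each mechanism. Then I would describe an explicit Sybil deviation: $x$ inserts a small set of false-name identities $\phi=\{x_1,\dots,x_k\}$ — for VCG, typically a few fake ``low'' bidders hanging off $x$ that raise $x$'s apparent criticality and therefore her VCG reward; for IDM, typically a fake child with a tailored bid that lengthens or reweights the critical sequence so that either $x$ wins at a cheaper price or collects a larger diffusion reward. I would then recompute $u_x$ plus $\sum_{x_j\in\phi}u_{x_j}$ under the manipulated profile $\vek\theta'$ and show the sum is strictly greater than the truthful utility, contradicting the defining inequality of Sybil-proofness.

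The main obstacle here is not conceptual but bookkeeping: one must verify that the chosen fake identities satisfy the admissibility condition $r'(x_j)\subseteq\phi\cup r(x)$ (so the attack is feasible under our adversarial model, not collusion with external buyers), and that the modified graph $G_s(\vek\theta')$ still has $x$ on every path to the winner so the VCG/IDM critical-agent reward formulas actually give $x$ a payment. A second minor pitfall is that IDM's critical diffusion sequence is defined on the reachable subgraph, so I must check that the Sybil children do not accidentally create alternate $s$-to-winner paths that strip $x$ of her criticality — choosing Sybils that attach only as leaves below $x$ avoids this. Once those feasibility checks are made, the utility inequality follows by direct arithmetic on the two profiles shown in Figure~\ref{fig:nonSP}.
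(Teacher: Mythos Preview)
Your plan---exhibit an explicit network, compute the truthful utility, then compute the post-attack utility and compare---is exactly what the paper does, and for IDM your description (a fake child with a tailored high bid that inflates the resale price the next critical node must pay) matches the paper's attack in Figure~\ref{fig:nonSP}.

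For VCG, however, the attack you describe would not work. Adding ``fake low bidders hanging off \(x\)'' as leaves does \emph{not} raise \(x\)'s VCG reward: her payment is \(\maxbid{V(G_s)\setminus\alpha(x)}-\maxbid{V(G_s)}\), and low-bid leaves in \(\alpha(x)\) change neither term. Nor does ``criticality'' come in degrees---\(x\) either dominates the winner or she does not, and leaf Sybils cannot change that. The paper's VCG attack is structurally different: \(a\) inserts a fake identity \(a_1\) \emph{in series} between herself and her real neighbors, so that \(a_1\) becomes an additional dominator of the highest bidder and collects the \emph{same} \(60\)-unit VCG reward that \(a\) does; \(a\) then pockets both. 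The profit comes from duplicating critical nodes along the path, not from padding \(\alpha(x)\). Since you say you will follow Figure~\ref{fig:nonSP}, you would eventually see this, but your written intuition would lead you to the wrong construction if you built it yourself; make sure the Sybil is placed on the \(s\)-to-\(x^*\) path, not as a leaf.
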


The classic VCG mechanism can be easily extended as a diffusion auction. Under VCG, the item is sold to the highest bidder, and other agents are paid the social welfare increase due to their participation. In the example shown in Figure~\ref{fig:nonSP}(a), if the intermediate node $a$ does not participate, $b$ and $c$ will be unable to join, and the social welfare will be $30$. With $a$'s participation, the social welfare is $90$, so VCG will pay $60$ to $a$. Now, if $a$ creates a fake identity $a_1$, then both $a$ and $a_1$ will be paid $60$ (a successful Sybil attack).

Since VCG paid a lot to the agents connecting the highest bidder to the seller, it cannot be non-deficit. Thus, IDM was proposed to guarantee that the seller's revenue is non-negative. IDM does not directly sell the item to the highest bidder; it uses a resale process to find the winner. It first allocates the item to the first cut point to reach the highest bidder, and the buyer pays the highest bid without her participation. In the example shown in Figure~\ref{fig:nonSP}(a), the item is first allocated to $a$ and $a$ pays $30$. Then $a$ can choose to resell it to $c$ and $c$ has to pay the highest without $c$ to $a$, which is $50$. Now, if $a$ creates a fake neighbor $a_1$ with bid $89$, then $c$ will need to pay $89$ to $a$ (another successful Sybil attack). 

We also proved that the other existing diffusion auction mechanisms~\cite{10.5555/3398761.3398947,zhang2021fixed} are not Sybil-proof in Appendix~\ref{app:attack}.

\section{Analysis of the Sybil Attack}\label{h1:graph-theory}

In this section, we study the features of Sybil attacks.
In our model, a Sybil identity \(y\) created by a real agent \(x\) can only be connected by her other Sybil identities or by \(x\) herself. This implies that every path from \(s\) to \(y\) contains \(x\).
In graph theory \cite{lengauer:dominators}, this is noted as \(x\) \emph{dominates} \(y\), or \(x \dom y\), and \(x\) is called a \emph{dominator} of \(y\). 
If a vertex $y$ has no dominator except $s$, one can be sure that $y$ is not a Sybil identity. Conversely, when vertex $y$ has a dominator $x \neq s$, there is a chance that $y$ is a Sybil identity of $x$.

In previous diffusion auction mechanisms like VCG and IDM, being a dominator can bring the agent profit (i.e., the mechanisms reward her for inviting new agents),
leaving room for one to profit from Sybil attacks. This explains why Sybil-proofness is hard to achieve in diffusion auctions.

An \emph{immediate dominator} of \(x\), denoted as \(v=idom(x)\), is defined as the unique vertex \(v\) who dominates \(x\) and is dominated by every other dominator \(w\neq v\) of \(x\).

\begin{theorem}\label{thm:dom-tree}
Every vertex on the graph \(G_s(\theta')\) except \(s\) has an immediate dominator, and the edges \(\{(idom(x),x)\mid x\in N\}\) form a directed tree with \(s\) being its root, called the \emph{dominator tree} of \(G_s(\theta')\) rooted at \(s\).
\end{theorem}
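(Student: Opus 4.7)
The plan is to follow the classical Lengauer--Tarjan construction of the dominator tree. First I would record that the relation $\dom$, restricted to the vertices of $G_s(\theta')$, is reflexive, antisymmetric, and transitive, and hence is a partial order on $V(G_s)$; this is immediate from the defining condition that $x \dom y$ means every path from $s$ to $y$ passes through $x$. The central technical step is to upgrade this partial order to a total order when restricted to the set of dominators of any fixed vertex $x$, since totality is what forces a unique ``closest'' dominator.

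The main obstacle is exactly this totality claim: if $u, v$ both dominate $x$ and $u \neq v$, then $u \dom v$ or $v \dom u$. I would argue by contradiction. Assume neither holds; then there exist paths $P_u$ from $s$ to $u$ avoiding $v$, and $P_v$ from $s$ to $v$ avoiding $u$. Fix any path $Q$ from $s$ to $x$. Since both $u$ and $v$ dominate $x$, $Q$ contains both; without loss of generality $u$ appears before $v$ along $Q$. Splicing $P_v$ (which avoids $u$) with the suffix of $Q$ running from $v$ to $x$ (which also avoids $u$, since $u$ preceded $v$ on $Q$) produces a walk from $s$ to $x$ that never visits $u$, contradicting $u \dom x$. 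This settles totality.

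With totality in hand I would define, for each $x \neq s$, the immediate dominator $idom(x)$ to be the $\dom$-greatest strict dominator of $x$: this set is finite, nonempty (it contains $s$), and linearly ordered, so the maximum exists and is unique, verifying the statement's defining property of $idom(x)$. To finish, I would check that $T = \{(idom(x), x) : x \in N\}$ is a directed rooted tree: every $x \neq s$ has exactly one incoming edge in $T$, namely from $idom(x)$, while $s$ has none; acyclicity holds because following parent pointers $x \mapsto idom(x)$ strictly shrinks the dominator set at each step in the finite partial order; and every vertex reaches $s$ by iterating $idom$, since the only vertex with no strict dominator is $s$ itself. These are precisely the defining properties of a directed tree rooted at $s$, and the conceptually nontrivial content lies entirely in the totality step above.
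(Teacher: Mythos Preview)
Your argument is correct and is the standard proof of the existence of the dominator tree. The paper does not actually prove this theorem: immediately after the statement it simply writes ``This is exactly Theorem~1 of \cite{lengauer:dominators}'' and moves on. So there is no paper-side proof to compare against beyond the citation, and what you have written is precisely the classical argument underlying that cited result. One cosmetic remark: in the totality step you should make explicit that $Q$ is taken to be a \emph{simple} path (so that $u$ cannot reappear after $v$), and that the spliced object is a walk from which a simple $s$--$x$ path avoiding $u$ can be extracted; you clearly have this in mind, but stating it removes any ambiguity.
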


This is exactly Theorem~1 of \cite{lengauer:dominators}. The definition of dominators is identical to \emph{diffusion critical nodes} in \cite{li:IDM}, and the path from \(s\) to \(x\) on the dominator tree is the \emph{diffusion critical sequence} of \(x\).

\subsection{Graphical Non-Sybil Agents}\label{h2:gamma}

In this subsection, we use graph theory to characterize the set of vertices that cannot be Sybil identities.
Firstly, the seller and her neighbors are not Sybil identities. In a real-world scenario, there are sometimes trustworthy entities like public figures and centralized institutions. Thus, we introduce an optional set of vertices \(\Gamma_0\), which is provided externally and guaranteed not to contain any Sybil identities. If no such vertices are provided, \(\Gamma_0=\emptyset\). Allowing such external information makes our mechanisms more flexible.

For the convenience of expression, we first give the definition of meeting points.

\begin{definition}[Meeting points]
For a pair of vertices \(x, y\), a vertex \(z\) is defined to be a \emph{meeting point} of \(x\) and \(y\) if there are two vertex-disjoint paths to \(z\), from \(x\) and \(y\) respectively.
\end{definition}

If a vertex is a meeting point of two other non-Sybil vertices \(x, y\), it must not be a Sybil identity. This is because all paths from non-Sybil vertices to a Sybil identity $i_j$ must contain its owner $i$ which contradicts the definition of meeting points. Therefore, we have the following definition of graphical non-Sybil agents which iteratively collects meeting points of existing members.

\begin{definition}[Graphical non-Sybil agents]
The set \(\Gamma(\theta') \subseteq V(G_s)\) is defined as follows:
\begin{enumerate}
    \item Initialize the set as \(\Gamma(\theta') := \{s\} \cup r(s) \cup \Gamma_0\).
    \item \label{rule:trusted-join} For each pair of vertices \(x,y\in \Gamma(\theta') \), if \(z\) is a meeting point of them in graph $G_s$, then add \(z\) to the set, i.e. \(\Gamma(\theta') :=\Gamma(\theta') \cup \{z\}\).
    \item Repeat step \ref{rule:trusted-join} until there are no more vertices to add.
\end{enumerate}
\end{definition}

It can be shown that
\(\Gamma(\theta')\) is precisely the maximal set of vertices that cannot be Sybil identities. This will be proven in Lemma~\ref{lem:maximal-non-sybil} after the introduction of Sybil clusters.

\subsection{Overly Sensitive Mechanism}\label{h2:osm}

Given the graphical non-Sybil agents, a straightforward idea to achieve Sybil-proofness is to apply the existing diffusion mechanisms on non-Sybil agents.
This idea of detection and removal is a common solution to Sybil attacks in social networks~\cite{liu2014defense,wang2020structurebased,xu2010resisting}. However, we find that such an approach doesn't work because an agent can misreport her neighbor set and turn non-Sybil agents into suspicious ones.

We propose the \emph{overly sensitive} mechanism (OSM) to show why such an idea does not work. In OSM, we ignore all potential Sybil identities (i.e. all $i \notin \Gamma(\maskprofile)$) and focus on the reachable part of the induced subgraph inducing from $\Gamma(\maskprofile)$, denoted as \(G_s[\Gamma(\maskprofile)]\). The subgraph $G_s[\Gamma(\maskprofile)]$ contains only vertices in $\Gamma(\maskprofile)$, and for each vertex $x$ in it, there is a path from $s$ to $x$ that only passing non-Sybil agents. We adopt IDM on \(G_s[\Gamma(\maskprofile)]\).

OSM seems Sybil-proof because Sybil identities are all ruled out. However, we find that OSM is not even incentive compatible. In OSM, the detection-and-removal process can be exploited by malicious agents. In Figure \ref{fig:overly-sensitive-counterexample}a, every vertex will be in \(\Gamma(\vek\theta)\). Under IDM, \(a\) will buy the item with the second-highest price \(v'_c=9\). However, if \(a\) chooses not to diffuse the information to \(c\) as in Figure \ref{fig:overly-sensitive-counterexample}b, $c$ would be excluded from \(G_s[\Gamma(\maskprofile)]\), and $a$ would get the item with a lower payment of \(7\).

Since SP implies IC, OSM is not Sybil-proof either. Therefore, we need a new approach to resist Sybil attacks in diffusion auctions.

\begin{figure}[thb]
    \centering
    \includegraphics[width=0.7\linewidth]{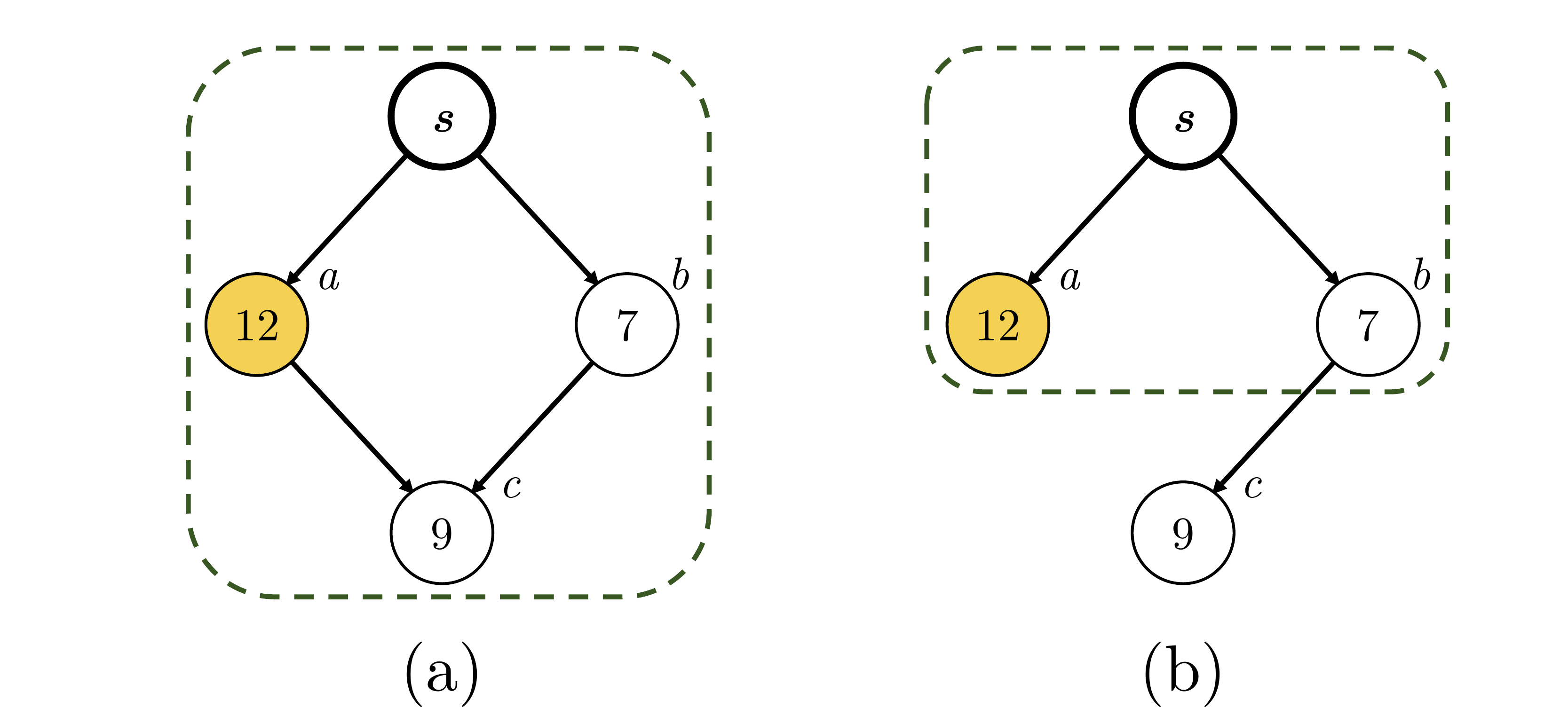}
    \caption{A counterexample for the overly sensitive mechanism. The set \(\Gamma(\maskprofile)\) is denoted by the dashed border rectangle.}
    \label{fig:overly-sensitive-counterexample}
\end{figure}

\section{Sybil Tax Mechanism}\label{h1:stm}

In this section, we present the first main contribution of this paper, our first Sybil-proof diffusion mechanism, called Sybil Tax Mechanism (STM).

Before describing it, we introduce some notations. We use \(\maxbid{S}\) to denote the highest bid in a set \(S\), that is,
\(\maxbid{S}=\max_{x\in S} v'_x\).
We also denote the vertices she dominates as \(\alpha(x)=\{y\mid x\dom y\}\) for every vertex \(x\). The vertex \(x\) is critical for these \(y\in \alpha(x)\), because without her diffusion, these vertices are not reachable from $s$. It is also known as \emph{diffusion critical children} in the terminology of previous literature on diffusion auctions.

\begin{framed}
 \noindent\textbf{Sybil Tax Mechanism (STM)}\\
 \rule{\textwidth}{0.5pt}

\begin{enumerate}
    \item Given the reported type profile \(\maskprofile\), we first calculate the reachable reported graph $G_s(\maskprofile)$ and the graphical non-Sybil agent set $\Gamma(\maskprofile)$. Let's write them as $G_s$ and $\Gamma$ for short.
    
    \item Find the reachable buyer with the highest bid, denoted by \(x^*\), where \(v'_{x^*}=\maxbid{V(G_s)}\).
    
    \item Compute the dominator sequence \(C_{x^*}=\{c_0=s,c_1,\dots,c_\ell=x^*\}\). Specifically, we have \(c_j = idom(c_{j+1})\), for all \(0 \le j < \ell\).

    \item We define \(p_j\), the buying price of $c_j$,
    as the highest bid without the participation of \(c_j\). The selling price of $c_j$, denoted as \(q_j\), is defined as the highest bid among all vertices that are identified as not the Sybil identity of \(c_j\). Formally,
\begin{align*}
    p_j &= \maxbid{V(G_s)\setminus \alpha(c_j)}&\text{for }1 < j \le \ell,\\
    q_j &= \maxbid{(V(G_s)\setminus \alpha(c_j))\cup \setbeta{j}} & \text{for }1 \le j < \ell,
\end{align*}
where 
$$\setbeta{j}=\{x \mid \exists{y\neq c_j}, y\in (\alpha(c_j) \setminus \alpha({c_{j+1}})) \cap \Gamma, x \in \alpha(y)\}.$$

\item Pick a \(c_d\) with the lowest index \(d\) that satisfies \(v'_{c_d}\geq q_d\). If such $d$ does not exist among index $1 \le d < \ell$, we set $d = \ell$. This agent $c_d$ wins the item with \(\pi_{c_d}(\maskprofile)=1\). The payment function is calculated as 
\[
t_{c_j}(\maskprofile)=\left\{
\begin{aligned}
p_j &{} - q_j &\text{for } & 1 \le j < d,\\
p_j &         &\text{for } & j = d.
\end{aligned}
\right.
\]

\item The payment and allocation of all other buyers are zero.

\end{enumerate}
\end{framed}

In this mechanism, the item is sold along the dominator sequence from $s$ to $x^*$ as a series of successive transactions between neighboring agents. Agent \(j\)'s buying price $p_j$ is set as other agents' optimal social welfare (i.e. the highest bid of them) when she does not participate in the auction. This ensures that her report cannot lower her buying price. The agent \(j\) can sell the item further down the critical sequence to reach more potential buyers with a selling price of $q_j$. To achieve Sybil-proofness, we need to Sybil-attacking to be not profitable, i.e. not able to increase \(q_j\). When the item is passed from \(c_j\) to \(c_{j+1}\), since the latter may be a Sybil identity of the former, the selling price \(q_j\) of \(c_j\) must be irreverent to the report of \(c_{j+1}\). Indeed, \(q_j\) is defined as the highest bid among those who are guaranteed not the Sybil identity of her.
The set $\setbeta{j}$ is defined in a way that it is monotonically increasing with the report of \(c_j\) to incentivize diffusion, and that it contains no Sybil identity.

Conceptually, a buyer who gets the item can choose to keep it or to resell. She will pass the item only when her selling price is higher than her private value. STM simulates this choice based on buyers' bid through the choice of the winner \(d\).

This series of transactions are summed up by STM. In a single transaction, buyer \(c_j\) will receive \(q_j\) units of money and \(c_{j+1}\) pays \(p_{j+1}\) for it.
The price difference \(p_{j+1}-q_j\) can be considered as a ``tax'' paid by the intermediate buyers (which we call \emph{brokers}) to prove their innocence.

Since \(V(G_s)\setminus \alpha(c_j)\subseteq (V(G_s)\setminus \alpha(c_j))\cup \setbeta{j} \subseteq V(G_s)\setminus \alpha(c_{j+1})\), we have \(p_j\leq q_j\leq p_{j+1}\), so the monetary gain of brokers and the tax are all non-negative. This leads to individual rationality and non-deficiency.

\begin{figure}[tb]
    \centering
    \includegraphics[width=0.7\linewidth]{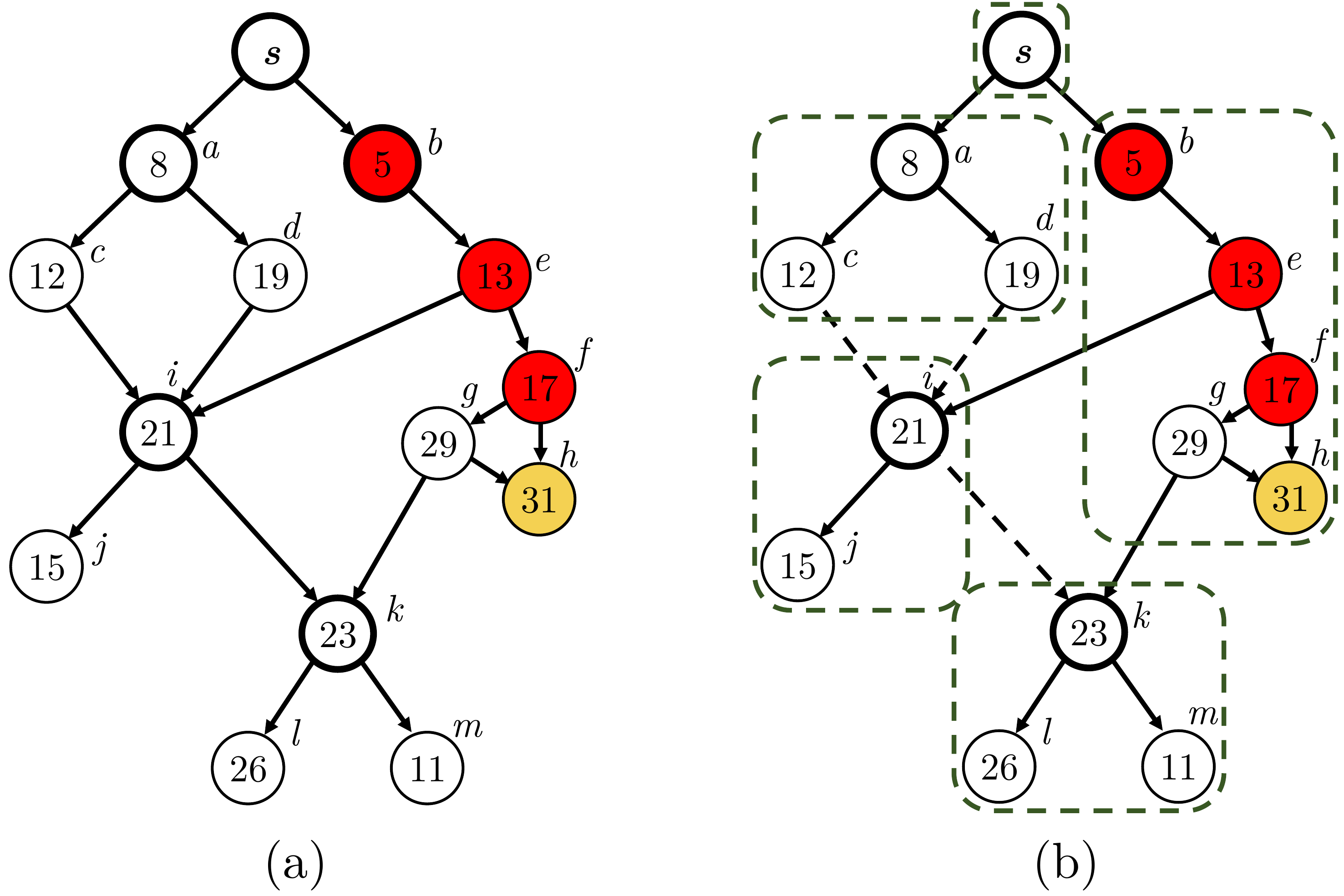}
    \caption{Examples of the {\STMfull} and the \SCMfull.}
    \label{fig:runningExample}
\end{figure}

Figure \ref{fig:runningExample}(a) illustrates {\STM} with an example. We assume that the externally provided set \(\Gamma_0\) is empty. When all buyers report their type truthfully, the mechanism runs as follows.

The set of graphical non-Sybil agents \(\Gamma(\vek\theta)\) is calculated as $\{s,a,b,\allowbreak{}i,k\}$. The mechanism identifies the highest-bidder $h$ and calculates the dominator sequence $C_{h} = \{c_0 = s, c_1 = b, c_2 = e, c_3 = f, c_4 = h\}$. The item is sold to $c_4=h$ because $c_4$ is the only buyer on the dominator sequence that satisfies \(v'_{c_j}\geq q_j\). Then we calculate the payments. For brokers \(c_1, c_2, c_3\), \(p_1 = p_2 = p_3 = 26 = q_1 = q_2 = q_3\), so they get paid \(0\). The winner \(c_4 = h\) pays \(p_4 = 29\). The seller gets a revenue of \(29\). 
In short, the buyer $h$ will pay $29$ to buy the item. Other buyers get zero utility.

\begin{theorem}[Main]\label{thm:main}
{\STM} is IR, non-deficit and Sybil-proof.
\end{theorem}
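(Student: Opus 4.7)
The plan is to tackle the three desiderata separately. IR and non-deficit fall out of the price-chain inequalities the authors sketch immediately after the mechanism definition; Sybil-proofness is the substantive content.

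For IR I would case on the role of $i$ under truthful reporting: a non-participant in the sold chain has $u_i=0$ by the allocation rule; a broker $c_j$ ($1\le j<d$) enjoys $-t_{c_j}=q_j-p_j\ge 0$ from the containment $V(G_s)\setminus\alpha(c_j)\subseteq(V(G_s)\setminus\alpha(c_j))\cup\setbeta{j}$; and the winner $c_d$ pays $p_d$, with $v_i\ge q_d\ge p_d$ when $d<\ell$ by the selection rule and $v_{x^*}\ge p_\ell$ when $d=\ell$ because $c_\ell=x^*$ is the highest bidder and $p_\ell$ is just the top bid outside $\alpha(x^*)$. For non-deficit I would telescope $\revenue^{\mathrm{STM}}=p_1+\sum_{j=1}^{d-1}(p_{j+1}-q_j)$ and check each term is non-negative via $(V(G_s)\setminus\alpha(c_j))\cup\setbeta{j}\subseteq V(G_s)\setminus\alpha(c_{j+1})$; this last containment reduces to $\setbeta{j}\cap\alpha(c_{j+1})=\emptyset$, which follows from Theorem~\ref{thm:dom-tree}: any sibling witness $y\neq c_j$ for $\setbeta{j}$ is incomparable with $c_{j+1}$ in the dominator tree, so $\alpha(y)\cap\alpha(c_{j+1})=\emptyset$.

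For Sybil-proofness I would fix an attacker $i$ with true type $\theta_i$ and an arbitrary attack $(\theta'_i,\theta'_{i_1},\dots,\theta'_{i_k})$ with identity set $\phi$; write $\vek\theta^A$ for the joint report and tag attack-side quantities with superscript $A$. The structural starting point is that every Sybil of $i$ is dominated by $i$ in $G_s(\vek\theta^A)$, so on the attack dominator sequence $C^A$ either no identity of $\phi$ appears---whence $\phi$ collects no allocation and no payment and SP reduces to IR---or $i$ sits at some position $j$ with every later $\phi$-entry confined to $\{c_{j+1}^A,\dots,c_{\ell^A}^A\}$. Let $R_A\subseteq r(i)$ denote the subset of $i$'s true neighbors that remain reachable in $G_s(\vek\theta^A)$. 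The core claim I would prove is a contraction lemma: the aggregate utility of $\phi$ under the attack is bounded above by the utility $i$ alone would obtain under the plain-diffusion deviation $\theta_i^*=(v_i,R_A)$ with no Sybils. The proof compares the dominator trees of $G_s(\vek\theta^A)$ and of the contracted reachable graph; because Sybils only link into $\phi\cup r(i)$ and admit no incoming edges from outside $\phi$, the non-Sybil reachable subgraphs agree, every non-$\phi$ position along $C^A$ has $p$ and $q$ values matching those of the contracted scenario, and the Sybil segment telescopes the $\phi$-payment down to $i$'s single-deviator buying price at her contracted position. Finally, DSIC---which emerges as the $k=0$ special case of SP and can be verified by the same broker/winner/non-participant sub-analysis applied in the absence of Sybils---forces the truthful report $\theta_i$ to dominate $\theta_i^*$.

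The hardest step will be the contraction lemma: making rigorous that an arbitrary Sybil constellation collapses to a single-agent diffusion deviation without changing any non-$\phi$ agent's allocation or payment, and that the aggregate $\phi$-payment is at least $i$'s solo payment in the contracted scenario. The delicate case arises when the attacker hides a true neighbor from her own report and routes reachability through a Sybil, which can introduce new meeting points and reshape the $\setbeta{\cdot}$ sets; handling this requires an inductive walk through the meeting-point construction of $\Gamma$, using the invariant that any non-Sybil vertex whose every path from $s$ traverses some Sybil of $i$ is also dominated by $i$ itself (since paths into $\phi$ from outside must enter through $i$).
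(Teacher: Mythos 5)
Your overall architecture is the same as the paper's: IR and non-deficiency from the chain of containments giving \(p_j\le q_j\le p_{j+1}\) (and your dominator-tree argument that \(\setbeta{j}\cap\alpha(c_{j+1})=\emptyset\) is a fine way to justify the second containment), plus Sybil-proofness by collapsing an arbitrary attack to a single-agent deviation and then invoking DSIC (which, as in the paper, must be proved as a standalone lemma by the broker/winner/shadowed case analysis --- it cannot literally be ``the \(k=0\) case'' if it is also an ingredient of the SP proof). The problem is that the heart of the argument, your contraction lemma, is both left unproved and aimed at the wrong deviation. You contract to \(\theta_i^*=(v_i,R_A)\) where \(R_A\) is the set of \emph{all} true neighbors of \(i\) that are reachable in the attack graph. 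But \(R_A\) may contain a neighbor \(w\) to whom no identity in \(\phi\) diffused (reachable only via some other branch), so the deviation graph is \emph{not} the quotient of the attack graph: it has an extra edge \((i,w)\). Such an edge can create new meeting points (enlarging \(\Gamma\)), destroy domination by sibling witnesses \(y\) (shrinking some \(\alpha(y)\)), and hence change the sets \(\beta_{j'}\) and prices \(q_{j'}\) of ancestors \(c_{j'}\) on the chain, possibly shifting the winner to an earlier broker. Your stated invariant that ``every non-\(\phi\) position along \(C^A\) has \(p\) and \(q\) values matching those of the contracted scenario'' is therefore false for this choice of deviation, and the telescoping comparison of the aggregate \(\phi\)-payment against \(i\)'s solo payment does not go through as described; whether the lemma even holds with \(R_A\) would require a separate monotonicity argument about how a descendant's extra diffusion affects ancestors' \(q\)-values, which you do not supply.

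The paper avoids all of this by contracting to \(\tilde r(i)=\bigcup_{x\in\phi}r'(x)\setminus\phi\) and bid \(\max_{x\in\phi}v'_x\): then the deviation graph is exactly the attack graph with the Sybil-internal structure removed, so reachability, domination among non-\(\phi\) vertices, \(\Gamma\), and the chain prefix up to \(i\) are preserved verbatim, the identities of \(\phi\) on the chain form a consecutive block starting at \(i\), and the two cases (block contains the winner / block is all brokers) can be settled by the telescoping bound together with the elementary inequality \(\max\{x,y,z\}-\max\{y,z\}\le\max\{x,y\}-\max\{y\}\). Any remaining misreport in bid or diffusion is then handled by the separately proven DSIC lemma. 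Choosing the true bid \(v_i\) in the contraction also creates avoidable case friction (e.g.\ if \(v_i\) exceeds the attack's highest outside bid the deviation changes \(x^*\); if \(v_i\) is small, \(i\) may fail to win where her Sybil block won), which is manageable only once the exact-quotient contraction is in place. As written, your proposal identifies the right skeleton but defers precisely the step that carries the proof, and the deviation you propose to contract to does not support the invariants your sketch relies on.
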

The proof of Theorem~\ref{thm:main} will be elaborated in Appendix~\ref{app:stm}. We provide a proof scketch here.

Intuitively, {\STM} is individually rational because \(q_j\geq p_j\) and \(v'_{c_d}\geq p_d\). A non-Sybil-attacking buyer \(c_j\) would want to maximize \(\beta_j\) to maximize her utility, which can be achieved by maximal diffusion. By the graph-theoretic properties of Sybil attacks, if a Sybil attack happens on the dominator sequence, the identities of the same buyer must be contiguous on the sequence, and the tax paid by such brokers would disincentivize this attack.

The above theorem shows that STM is incentive compatible because SP implies IC. In a previous work \cite{li:CDM}, \citeauthor{li:CDM} identified one class of diffusion mechanisms called critical diffusion mechanism (CDM) on social graphs, which covers a large class of incentive compatible mechanisms. The successive reselling in STM resembles CDM, but STM is not a member of that class. By introducing non-Sybil agents externally (i.e. $\Gamma_0\not=\emptyset$), STM can contribute the occurrence of some ``isolated'' non-Sybil agents to buyers in the dominator sequence.

Recall the example in Figure~\ref{fig:runningExample}, and we can see that every buyer other than the item's winner has zero utility. The following lemma shows that this is not a fluke. In essence, all possible profits of the brokers are taxed by the seller.
The proof can also be found in Appendix~\ref{app:stm}.
\begin{lemma}\label{lem:0spdm-zero}
In \emph{{\STM}} when \(\Gamma_0=\emptyset\), every buyer, except the item winner, has a payment of zero, and thus zero utility.
\end{lemma}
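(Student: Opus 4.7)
The plan is to reduce the payment-is-zero claim to showing $p_j = q_j$ for every broker $c_j$ with $1 \le j < d$, and then to establish this by proving the structural fact $\Gamma(\maskprofile) \cap \alpha(c_j) \subseteq \{c_j\}$ whenever $\Gamma_0 = \emptyset$. First, every buyer not on the dominator sequence is assigned payment zero outright by STM, and by hypothesis the winner $c_d$ is excluded from the claim, so the only non-trivial case is a broker, for which $t_{c_j}(\maskprofile) = p_j - q_j$. Since $V(G_s) \setminus \alpha(c_j) \subseteq (V(G_s) \setminus \alpha(c_j)) \cup \setbeta{j}$, we have $p_j \le q_j$ unconditionally, so it suffices to prove the reverse inequality, which is immediate once $\setbeta{j} = \emptyset$. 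By the definition of $\setbeta{j}$, emptiness follows from the absence of any $y \neq c_j$ in $(\alpha(c_j) \setminus \alpha(c_{j+1})) \cap \Gamma$, which is implied by the structural claim above.

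The structural claim $\Gamma(\maskprofile) \cap \alpha(c_j) \subseteq \{c_j\}$ I would prove by induction on the round at which each vertex is added to $\Gamma$. In the base case $\Gamma$ is initialised to $\{s\} \cup r(s)$: $s \notin \alpha(c_j)$ because $s$ is the root of the dominator tree, and if $u \in r(s) \cap \alpha(c_j)$ with $u \neq c_j$ then the direct edge $s \to u$ supplies a path from $s$ to $u$ avoiding $c_j$, contradicting $c_j \dom u$. For the inductive step, suppose $z$ is added as a meeting point of two vertices $a,b \in \Gamma$ via internally vertex-disjoint paths $P_a, P_b$ terminating at $z$, and assume for contradiction that $z \in \alpha(c_j) \setminus \{c_j\}$. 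I would show that $c_j \in V(P_a)$ in both sub-cases: if $a \in \alpha(c_j)$ then by the induction hypothesis $a = c_j$, so $c_j$ is already the start of $P_a$; if $a \notin \alpha(c_j)$ then concatenating a $c_j$-avoiding path $s \to a$ with $P_a$ yields a walk $s \to z$ whose vertex set must contain $c_j$ (by $c_j \dom z$), and since $c_j$ is absent from the $s \to a$ segment it must lie on $P_a$. The identical argument places $c_j$ on $P_b$, so $c_j \in V(P_a) \cap V(P_b)$ with $c_j \neq z$, contradicting vertex-disjointness once the degenerate case $a = b = c_j$ is dispatched. This completes the induction, hence $\setbeta{j} = \emptyset$, and therefore $t_{c_j}(\maskprofile) = 0$ for every broker, giving $u_{c_j} = 0$ as well.

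The main obstacle I anticipate is the path-concatenation step in the inductive case, combined with the case analysis on whether $a$ and $b$ each lie in $\alpha(c_j)$ and the degenerate situations where one of them coincides with $c_j$. The subtlety is arguing that $c_j$'s presence on either path, even as a starting endpoint rather than an internal vertex, is enough to break the internal vertex-disjointness condition because the other path forces $c_j$ into the shared vertex set as well. Framing the claim as the clean inclusion $\Gamma \cap \alpha(c_j) \subseteq \{c_j\}$ and proving it by induction on the construction of $\Gamma$ is what makes the case analysis uniform; once this is nailed down, the chain $\setbeta{j} = \emptyset \Rightarrow q_j = p_j \Rightarrow t_{c_j} = 0 \Rightarrow u_{c_j} = 0$ is immediate.
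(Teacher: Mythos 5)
Your proof is correct and takes essentially the same route as the paper: reduce the claim to $\beta_j=\emptyset$ for every broker (all other non-winners pay zero by the mechanism's definition), which gives $q_j=p_j$ and hence $t_{c_j}=0$. The only difference is that the paper merely asserts that with $\Gamma_0=\emptyset$ the set $\Gamma$ consists of $s$ and vertices whose immediate dominator is $s$, whereas you explicitly prove the needed inclusion $\Gamma\cap\alpha(c_j)\subseteq\{c_j\}$ by induction on the construction of $\Gamma$, filling in a detail the paper leaves implicit.
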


\section{Sybil Cluster Mechanism}\label{h1:scm}
In STM, we reward the brokers for their contribution to introducing agents in \(\Gamma(\maskprofile)\). However, when \(\Gamma_0\) is empty, no one other than the seller can bring a graphical non-Sybil agent on her own. This leads to zero profit for brokers, as shown in Lemma~\ref{lem:0spdm-zero}. As a result, their interests would be neither increased nor decreased through diffusions. Therefore, although STM is incentive-compatible, buyers' incentive to invite other agents is weak.

In this section, we create a positive incentive for inviting without losing Sybil-proofness. By removing some edges from the reachable reported graph $G_s(\maskprofile)$ while keeping \(\Gamma(\maskprofile)\) unchanged, we attribute the introduction of non-Sybil agents to some brokers and reward them. We will introduce a clustering process to accomplish this.

\subsection{Clustering}\label{h2:clustering}

\begin{definition}[Sybil clusters]
For every \(x\in \Gamma(\maskprofile)\), we define its \emph{Sybil cluster} \(K_x\) as below:

The cluster \(K_x\) contains vertex \(t\) if and only if there is a path from \(x\) to \(t\) on $G_s(\maskprofile)$ that does not contain any vertex in \(\Gamma(\maskprofile)\) other than \(x\) itself.

\end{definition}

The Sybil cluster $K_x$ includes $x$ and all vertices that are suspected of being the Sybil identities of $x$. Call the vertex \(x\) the \emph{root} of \(K_x\), who is the only member of $K_x$ that is also in the non-Sybil set $\Gamma(\maskprofile)$. The clusters $\{K_x \mid x \in \Gamma(\maskprofile)\}$ get the name because they form a partition of \(V(G_s(\maskprofile))\).

\begin{lemma}\label{lem:cluster-disjoint}
Sybil clusters are disjoint, and every vertex \(t\) in \(V(G_s(\maskprofile))\) belongs to some Sybil cluster \(K_x\).
\end{lemma}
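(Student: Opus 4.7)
The plan is to verify the two assertions separately. The covering part is a straightforward reachability argument; the substantive content is pairwise disjointness, which I will obtain by producing a meeting point of two supposedly distinct cluster roots and invoking the closure rule that defined \(\Gamma(\maskprofile)\).

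For covering, fix \(t\in V(G_s(\maskprofile))\). Since \(t\) is reachable from \(s\), pick any directed path from \(s\) to \(t\) in \(G_s(\maskprofile)\). Because \(s\in \Gamma(\maskprofile)\), the path contains at least one vertex of \(\Gamma(\maskprofile)\); let \(x\) be the last such vertex along the path (so \(x=t\) in the degenerate case \(t\in\Gamma(\maskprofile)\), with the trivial zero-length path serving as witness). Then the sub-path from \(x\) to \(t\) avoids every vertex of \(\Gamma(\maskprofile)\) other than \(x\), witnessing \(t\in K_x\).

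For disjointness, suppose towards contradiction that \(t\in K_x\cap K_y\) with distinct \(x,y\in \Gamma(\maskprofile)\), and let \(P_1,P_2\) be witnessing paths from \(x\) and \(y\) to \(t\) respectively, each avoiding \(\Gamma(\maskprofile)\) except at its own starting vertex. Walk along \(P_1\) starting at \(x\) and let \(z\) be the first vertex encountered that also lies on \(P_2\); such a \(z\) exists because \(t\) is common to both paths. By construction, the prefix of \(P_1\) from \(x\) to \(z\) and the prefix of \(P_2\) from \(y\) to \(z\) meet only at \(z\). Observe that \(y\notin P_1\) (else \(y\in\Gamma(\maskprofile)\) on \(P_1\) would force \(y=x\)) and symmetrically \(x\notin P_2\), so both prefixes are non-trivial; hence \(z\) is a meeting point of \(x\) and \(y\) in the sense of the earlier definition. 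The closure rule building \(\Gamma(\maskprofile)\) then puts \(z\in \Gamma(\maskprofile)\). But \(z\) lies on \(P_1\), whose only \(\Gamma(\maskprofile)\)-vertex is \(x\), forcing \(z=x\); symmetrically \(z=y\), contradicting \(x\neq y\).

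The main care point I anticipate is ensuring that the two sub-paths from \(x\) and \(y\) to \(z\) are genuinely internally vertex-disjoint, as required to legitimately invoke the meeting-point definition. The ``first common vertex along \(P_1\)'' choice of \(z\) is what makes this clean: any earlier vertex of \(P_1\) that also sat on \(P_2\) would violate the choice of \(z\), so no vertex strictly preceding \(z\) on \(P_1\) appears on \(P_2\), and in particular none appears on the \(y\)-to-\(z\) prefix. Beyond this, no subtleties seem to arise, and the overall argument is short.
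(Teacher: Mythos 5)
Your proof is correct. The covering half is identical to the paper's (take any path from \(s\) to \(t\), let \(x\) be its last \(\Gamma(\maskprofile)\)-vertex, and the remaining suffix witnesses \(t\in K_x\)). For disjointness both arguments hinge on the same key ingredient, namely that \(\Gamma(\maskprofile)\) is closed under taking meeting points, but the mechanics differ: the paper picks a minimal counterexample \(t\in K_x\cap K_y\) (minimizing the shortest-path distance from \(x\)), uses the closure property to argue the two \(\Gamma\)-free paths must share an internal vertex \(t'\), and then contradicts minimality by noting \(t'\in K_x\cap K_y\) is closer to \(x\); you instead take the \emph{first} vertex \(z\) of \(P_1\) lying on \(P_2\), verify that the two prefixes to \(z\) are internally disjoint (using \(x\notin P_2\), \(y\notin P_1\)), conclude \(z\) is a meeting point and hence \(z\in\Gamma(\maskprofile)\), and get an immediate contradiction since each path's only \(\Gamma\)-vertex is its own root. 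Your variant is arguably cleaner: it dispenses with the extremal choice and with the paper's slightly delicate step that the common vertex is strictly closer to \(x\) (which implicitly requires the witnessing path to be chosen compatibly with the distance being minimized), and the ``first common vertex'' device makes the vertex-disjointness needed for the meeting-point definition explicit rather than assumed. The trade-off is negligible; both proofs are short, and each makes the closure property of \(\Gamma(\maskprofile)\) do the real work.
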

The proof can be found in Appendix~\ref{app:clustering}. 

Using Sybil clusters, one can
prove that \(\Gamma(\vek\theta')\) is the maximal set of guaranteed non-Sybil vertices.
\begin{lemma}\label{lem:maximal-non-sybil}
Any vertex \(t\) in \(G_s(\vek\theta')\setminus\Gamma(\vek\theta')\) may be a Sybil identity of some other vertex in \(\Gamma(\vek\theta')\).
\end{lemma}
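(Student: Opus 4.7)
The plan is to exhibit, for an arbitrary $t \in V(G_s(\vek\theta')) \setminus \Gamma(\vek\theta')$, a true-type profile $\vek\theta$ and a designated attacker $x \in \Gamma(\vek\theta')$ with $x \neq t$ such that the observed reports $\vek\theta'$ are produced by a Sybil attack of $x$ whose fake-identity set is exactly $K_x \setminus \{x\}$. By Lemma~\ref{lem:cluster-disjoint} the vertex $t$ lies in a unique Sybil cluster $K_x$; since the root $x$ is the only element of $K_x$ that belongs to $\Gamma(\vek\theta')$ and $t \notin \Gamma(\vek\theta')$, we automatically have $x \neq t$. The intended construction declares every $w \in K_x \setminus \{x\}$ to be a Sybil identity of $x$, leaves every agent outside $K_x$ honest with true type equal to its report, sets $x$'s bid to $v'_x$, and picks $x$'s real neighbour set to be
\[
r(x) \;:=\; \Bigl(r'(x) \cup \bigcup_{w \in K_x \setminus \{x\}} r'(w)\Bigr) \setminus K_x.
\]

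To verify that this is a legal Sybil attack I would check two properties: (a) no vertex outside $K_x$ has a reported edge into $K_x \setminus \{x\}$, since real agents cannot neighbour Sybils they did not create; and (b) each of $r'(x)$ and $r'(w)$ for $w \in K_x \setminus \{x\}$ lies inside $\phi \cup r(x) = K_x \cup r(x)$. Property (b) is immediate from the definition of $r(x)$, because $r'(x) \setminus K_x$ and $r'(w) \setminus K_x$ are by construction subsets of $r(x)$. Property (a) is the main obstacle, and I would settle it by a short case analysis that crucially uses the disjointness of Sybil clusters.

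For (a), suppose towards a contradiction that $(y,z) \in E(G_s(\vek\theta'))$ with $y \notin K_x$ and $z \in K_x \setminus \{x\}$; note that $z \notin \Gamma(\vek\theta')$ because $K_x \cap \Gamma(\vek\theta') = \{x\}$. If $y \in \Gamma(\vek\theta')$, the single edge $y \to z$ is a path from $y$ to $z$ that avoids every element of $\Gamma(\vek\theta')$ except $y$ itself, witnessing $z \in K_y$; disjointness then forces $K_x = K_y$ and hence $y = x$, contradicting $y \notin K_x$. Otherwise $y$ lies in some other cluster $K_{x'}$ with $x' \neq x$, and concatenating the witnessing $x'$-to-$y$ path inside $K_{x'}$ with the edge $(y,z)$ yields a path from $x'$ to $z$ whose intermediate vertices all lie in $K_{x'} \setminus \{x'\}$, hence outside $\Gamma(\vek\theta')$. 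This shows $z \in K_{x'}$ and again contradicts disjointness.

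Once (a) and (b) are in hand, the reports $\vek\theta'$ are exactly the output of a valid Sybil attack by $x$ on the profile $\vek\theta$ constructed above, so $t$ is realized as a Sybil identity of some $x \in \Gamma(\vek\theta') \setminus \{t\}$, as claimed. The only delicate point I foresee is making the case analysis of (a) uniform across $y \in \Gamma(\vek\theta')$ and $y \notin \Gamma(\vek\theta')$, and in particular noting up front that $K_x \cap \Gamma(\vek\theta') = \{x\}$ so that interior vertices of the constructed paths are automatically non-$\Gamma$; the rest is bookkeeping.
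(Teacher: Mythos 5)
Your proposal is correct and follows essentially the same route as the paper: identify the cluster $K_x$ containing $t$ via Lemma~\ref{lem:cluster-disjoint}, give the root $x$ the aggregated outside-neighbour set $\bigcup_{u\in K_x}r'(u)\setminus K_x$, and realize $\vek\theta'$ as a Sybil attack of $x$ with $\phi=K_x$. The only difference is that you explicitly verify the no-incoming-edge condition, which the paper leaves implicit (it is exactly its Lemma~\ref{lem:cluster-graph-edge} on inter-cluster edges), so your write-up is a slightly more detailed version of the same argument.
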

\begin{proof}
Given a report profile \(\vek\theta'\), we can compute \(\Gamma(\vek\theta')\) and the Sybil clusters by definition. 
For any \(t\not\in \Gamma(\vek\theta')\), there exists \(x\in \Gamma(\vek\theta')\) such that \(t\in K_x\) from Lemma~\ref{lem:cluster-disjoint}.
Let \(\tilde{r}(x)=\bigcup_{u\in K_x}{r(u)}\setminus K_x\), and \(\tilde\theta_x=(\tilde{r}(x), v'_x)\). 
We can see that, under the true type profile \(\vek{\tilde\theta}=(\tilde\theta_x, \vek\theta'_{-K_x})\), the agent \(x\) may create Sybil identities \(\phi=K_x\) and make the report profile identical to \(\vek\theta'\). This shows that \(t\) may be a Sybil identity of \(x\).
\end{proof}

\subsection{SCM Mechanism}\label{h2:scm}

\begin{framed}
 \noindent\textbf{Sybil Cluster Mechanism (SCM)}\\
 \rule{\textwidth}{0.5pt}

\begin{enumerate}
    \item Given the reported type profile \(\maskprofile\) as input, we reconstruct a social network graph $H$ with vertices in $\Gamma(\maskprofile)$. Formally, \(H=(\Gamma(\maskprofile),E(H))\), where
\[
E(H)=\left\{\left(x,y\right)\mid \exists{i\in K_x, j \in K_y} \text{ s.t. }(i,j)\in E(G_s(\maskprofile))\right\}.
\]

    \item Sample a random shortest-path tree{\footnotemark} of \(H\) with equal probability and denote it as \(T_H\).  

    \item We construct a subgraph \(\hat{G}\) of \(G_s(\maskprofile)\) using \(T_H\). 
    
    Formally, \(\hat{G}=(V(G_s(\maskprofile)),E(\hat{G}))\) where $E(\hat{G})$ is defined as
$$ E(G_s(\maskprofile)) \setminus \{(i,j) \mid 
i \in K_x, j \in K_y, (x,y) \in E(H) \setminus E(T_H)\}.$$
Specifically, edge \((i,j)\) on graph $G_s(\maskprofile)$ is deleted if \(i \in K_x, j \in K_y, x \not= y\) and \((x,y)\not\in T_H\). All the remaining edges form a new graph \(\hat{G}\).

    \item Perform {\STM} with \(G_s = \hat{G}\), \(\Gamma = \Gamma(\maskprofile)\) rather than \(G_s = G_s(\maskprofile)\), \(\Gamma = \Gamma(\maskprofile)\) on the agents' reports.
\end{enumerate}
\end{framed}
\footnotetext{For every vertex \(x\in V(H)\), we denote the shortest-path length from \(s\) to it on graph $H$ as \(dis_x(H)\). A spanning tree \(T_H\) of \(H\) is a subgraph of $H$ with $V(T_H) = V(H)$, which is also a directed tree. A spanning tree $T_H$ is said to be a shortest-path tree if, for every vertex \(x\in V(H)\), \(dis_x(T_H) = dis_x(H)\). A uniformly distributed random shortest-path tree can be generated by independently selecting a parent \(y\) for each \(x\neq s\), where $y$ is selected from $\{y\mid dis_x(H)=dis_y(H) +1, (x,y) \in E(H)\}$ with equal probability.}

In SCM, we remove some edges in $G_s(\maskprofile)$ according to the randomly selected shortest-path tree $T_H$ and keep $\Gamma(\maskprofile)$ as graphical non-Sybil agents. The appearance of some vertices in $\Gamma$ can be attributed to some brokers, thus increasing their profit.

SCM is also Sybil-proof and individually rational. We will provide a proof sketch here; rigorous proof can be found in Appendix~\ref{app:scm}.

\begin{theorem}\label{thm:scm-main}
    {\SCMfull} is IR, non-deficit, and Sybil-proof.
\end{theorem}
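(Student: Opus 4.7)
The plan is to leverage Theorem~\ref{thm:main} (Sybil-proofness of STM) through a coupling argument. For IR and non-deficit: conditional on any realization of the random shortest-path tree $T_H$, the mechanism SCM is exactly STM applied to the graph $\hat{G}$ with non-Sybil set $\Gamma(\maskprofile)$. Since STM is IR and non-deficit on any valid input by Theorem~\ref{thm:main}, SCM inherits both properties pointwise over the randomness of $T_H$, and hence in expectation.

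For Sybil-proofness, the key step is to establish a structural invariance. Suppose agent $i$, who truthfully belongs to some Sybil cluster $K_x$ under $\vek\theta$, performs an arbitrary Sybil attack with identities $\phi$. Every path from $s$ to a Sybil identity of $i$ must pass through $i$ itself by the dominator property of Section~\ref{h1:graph-theory}; hence no Sybil of $i$ can escape the cluster $K_x$. This implies that $\Gamma$ is preserved under the attack, and that every inter-cluster edge contributed by a Sybil vertex is already contributed by $i$'s own edges, since Sybils may only point to $i$'s real neighbors or to other Sybils inside $K_x$. Consequently, the graph $H$, and therefore the distribution over the sampled shortest-path tree $T_H$, is invariant between the truthful profile $\vek\theta$ and the Sybil-attacking profile $\maskprofile$.

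Given this invariance, I would couple the randomness of SCM between the two profiles so that the same $T_H$ is sampled in both. Under this coupling, the two realizations of $\hat{G}$ agree on all edges outside cluster $K_x$ and differ only inside $K_x$ in a way that corresponds exactly to a Sybil attack by $i$ within the fixed graph $\hat{G}$. Applying Sybil-proofness of STM (Theorem~\ref{thm:main}) to this fixed $\hat{G}$ yields the required bound on the combined utility of $i$ and her Sybil identities for every realization of $T_H$; taking expectation over $T_H$ then completes the Sybil-proofness of SCM.

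The hardest part will be rigorously handling combined strategies in which $i$ simultaneously creates Sybils and strategically under-reports her own neighbor set, since such reports could in principle reshape the internal structure of $K_x$ and, more worryingly, trim edges in $H$. I would resolve this by observing that any inter-cluster edge reachable through a Sybil can equivalently be realized through $i$'s own appropriately chosen direct edge, so any $H$ producible by a combined Sybil-and-misreport strategy coincides with some $H$ producible by a purely non-Sybil strategic report; the Sybil-proofness of STM, which already subsumes the purely strategic case, then closes the remaining gap and delivers the theorem.
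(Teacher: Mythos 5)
Your reduction has the right skeleton for one half of the argument: fixing the randomness, collapsing the identities $\phi$ into a single report $\tilde\theta'_i$ with $\tilde r(i)=\bigcup_{x\in\phi}r'(x)\setminus\phi$, observing that $\Gamma$ and the cluster structure are unchanged by this collapse, and then invoking STM's Sybil-proofness on the post-removal graph is essentially the paper's Lemma~\ref{lem:*aspdm-sp} (which makes the coupling precise by derandomizing with input-independent edge variables $f(x,y)$ and handling ``shielded'' targets $S_i$, whose tree edges can never survive, through a fictional instance with neighbor set $r(i)\setminus S_i$). However, the invariance you assert in your second paragraph --- that $H$, and hence the law of $T_H$, is the same under the truthful profile and under the attack --- is false in general: an attack may combine Sybil creation with arbitrary under-diffusion by $i$ and her identities, which can remove vertices from $\Gamma(\maskprofile)$, reshape the clusters, and delete edges of $H$. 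Your coupling ``sample the same $T_H$ in both worlds'' is not even well defined when the two profiles induce different graphs $H$; this is precisely why the paper couples at the level of the variables $f$ rather than at the level of $T_H$.

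The more serious gap is your closing step, ``the Sybil-proofness of STM already subsumes the purely strategic case,'' which conflates STM with SCM. Sybil-proofness of SCM needs two links: (a) attack utility is at most the utility of some non-Sybil report under \emph{SCM}, and (b) the utility of any non-Sybil report under \emph{SCM} is at most the truthful utility under \emph{SCM}. Link (b) is SCM's own incentive compatibility, and it does not follow from any property of STM on a fixed graph: a strategic under-diffusion changes $\Gamma$, the shortest-path distances in $H$, the sampled tree, and therefore which inter-cluster edges are deleted before STM is ever run --- effects about which STM's IC/SP statements (formulated for STM applied to the graph generated by the report) say nothing. The paper proves this link separately in Lemma~\ref{lem:*aspdm-ic}, by a per-$f$ case analysis showing that misreporting can only enlarge $A(\cdot)$ and shrink $\beta_j$ on the post-removal graph. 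Without such an argument, your proof shows at best that attacking is no better than some strategic non-Sybil report, not that it is no better than truthfulness, so the theorem is not yet delivered. (Your pointwise IR and non-deficit argument over realizations of the tree matches the paper and is fine.)
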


Individual rationality and non-deficiency follow trivially from the fact that {\STM} is IR and non-deficit. 
By the selection of the shortest-path tree, the diffusion choice of a buyer can affect vertices on the tree whose distance from $s$ is higher than her distance from $s$. Maximally diffusing for a buyer would bring her a more favorable tree structure and give her a better income.
Moreover, we find that Sybil attacks are completely ineffective in the clustering process. Combined with the Sybil-proofness of {\STM}, we can show that {\SCM} is Sybil-proof.

\begin{figure}[t]
    \centering
        \includegraphics[width=0.825\linewidth]{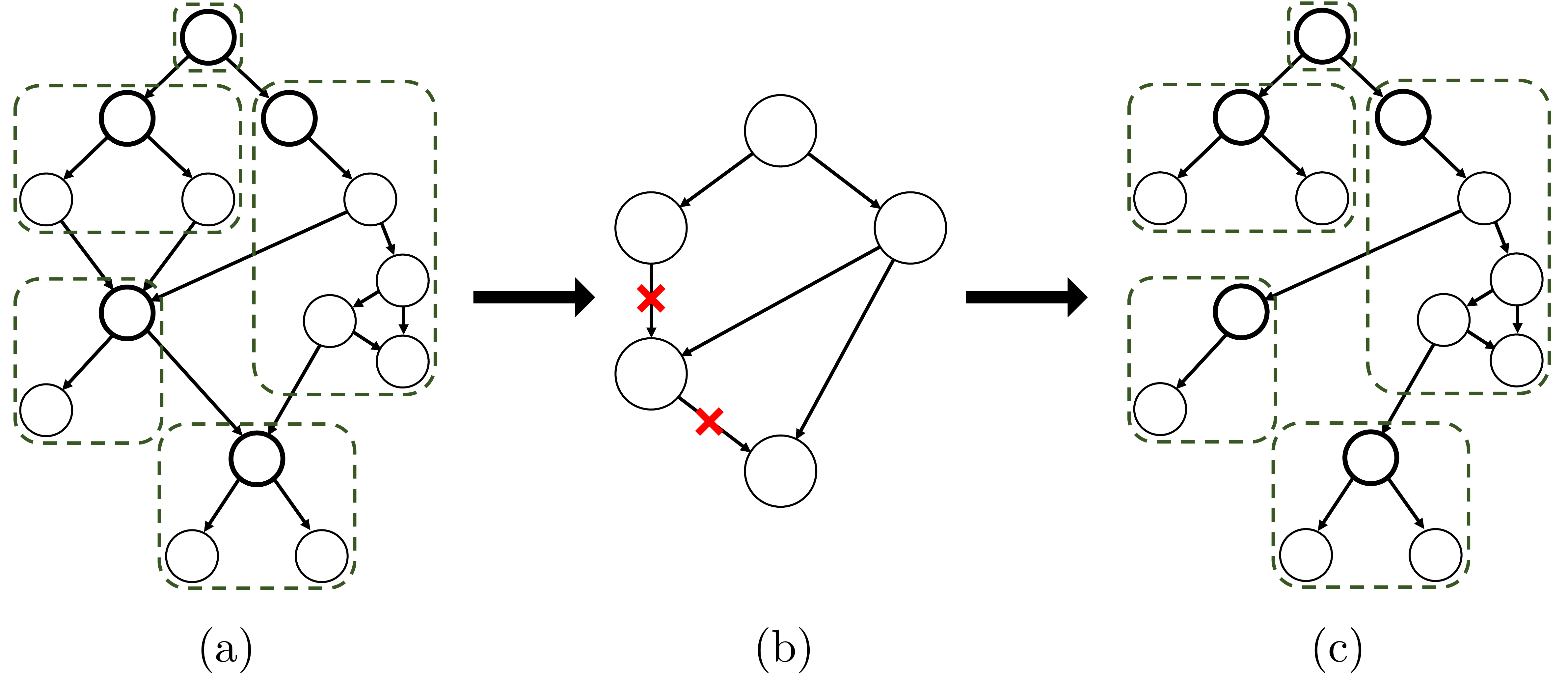}
    \caption{A visualization of SCM.}
    \label{fig:clusters}
\end{figure}

The example in Figure~\ref{fig:runningExample}b shows an example of the \SCMfull. The clustering process and a possible edge-removing process are shown in Figure~\ref{fig:clusters}. 
Assuming that all buyers report their true type, {\SCM} runs as follows:

The mechanism divides \(V(G_s(\vek\theta))\) into five Sybil clusters \(\{K_s, K_a, K_b, \allowbreak K_i, K_k\}\), where \(K_s = \{s\}, K_a = \{a, c, d\}, K_b = \{b, e, f, g, h\}, K_i = \{i, j\},\) and \(K_k = \{k, l, m\}\).
The mechanism randomly picks a shortest-path tree \(T_H\) and constructs a subgraph \(\hat{G}\).
We only show the case when the mechanism picks tree $T_H$ as Figure~\ref{fig:clusters}(b), where the mechanism deletes edge \((a, i)\) and edge \((i, k)\) in graph \(H\). In this case, edges \((c, i)\), \((d,i)\) and \((i, k)\) are removed from \(G_s(\vek\theta)\).
With \(\Gamma = \{s, a, b, i, k\}\), we perform {\STM} on \(\hat{G}\).

{\STM} identifies the buyer with the highest bidder to be $h$ and calculates the dominator sequence \(C_{h} = \{c_0 = s, c_1 = b, c_2 = e, c_3 = f, c_4 = h\}\). Since \(q_1 = 19 > v'_{c_1} = 5\), \(q_2 = 21 > v'_{c_2} = 13\) and \(q_3 = 26 > v'_{c_3} = 17\), we select \(c_4 = h\) as the winner of the item.

For the payments, broker \(c_1 = b\) pays \(t_{b} = p_1 - q_1 = 19 - 19 = 0\), \(c_2 = e\) gets \(-t_{e} = q_2 - p_2 =21 - 19= 2\) units of money, and \(c_3=f\) gets \(-t_{f} = 5\). The winner \(c_4 = h\) pays \(t_{h} = p_4 = 29\). The seller gets a revenue of \(22\).

\section{Discussion}\label{h1:discussion}

In this paper, we propose two Sybil-proof mechanisms, {\STM} and {\SCM}. In this section, we evaluate their performance on social welfare and revenue. Comparing our mechanism with the non-diffusion mechanism (i.e., NSP), other potential SP mechanisms and existing diffusion mechanisms (e.g., IDM, VCG) which are not SP, we raise three key questions.
\begin{enumerate}
    \item Do our diffusion mechanisms have better performance than non-diffusion ones?
    \item Does {\STM} or {\SCM} achieve optimal social welfare and revenue among all SP mechanisms?
    \item Compared with existing diffusion mechanisms, how much do our mechanisms sacrifice to achieve Sybil-proofness?
\end{enumerate}

We conduct theoretical and experimental analysis to answer these questions. For the first question, we prove that our mechanisms always achieve higher (or equal) social welfare and revenue than NSP. Our experimental results indicate that advantages of {\STM} and {\SCM} are significant. For the second question, we conduct worst-case analysis and show that every SP mechanism has extremely lower social welfare and revenue than another in some cases. Therefore, there is no optimal SP mechanism 
in terms of worst-case performance.
For the last question, our experimental results show that STM and SCM do not sacrifice social welfare and revenue much compared with non-SP mechanisms.
To eliminate the external effect, we assume that \(\Gamma_0 = \emptyset\) in this section. 

\subsection{Comparison}\label{h2:comparison}

We use $\welfare^\mechanism(\vek\theta)$ and  \(\revenue^{\mechanism}(\vek\theta)\) to denote the social welfare and revenue of the mechanism \(\mechanism\) under $\vek\theta$ respectively. We have
\[\welfare^\mechanism(\vek\theta)\defeq\sum_{x\in N}{\pi_x^{\mechanism}(\vek\theta)\cdot v_x}.\]
Recall that we have defined \(\revenue(\vek\theta) = \sum_{i\in N}t_i(\vek\theta)\) in Section~\ref{h2:ir-ic}.

The following theorem
shows that
both of our mechanisms outperforms the non-diffusion NSP mechanism. Under our mechanisms, agents' invitations indeed benefit the seller and the society.

\begin{theorem}\label{thm:compare_STM&SCM}
For all possible type profile \(\vek\theta\), we have
\begin{gather*}
\revenue^\mathrm{STM}(\vek\theta) \ge \revenue^\mathrm{SCM}(\vek\theta) \ge \revenue^\mathrm{NSP}(\vek\theta),\\
\welfare^\mathrm{STM}(\vek\theta) \ge \welfare^\mathrm{SCM}(\vek\theta) \ge \welfare^\mathrm{NSP}(\vek\theta).
\end{gather*}

\end{theorem}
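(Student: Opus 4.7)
The plan is to split the chain of six inequalities into two pieces, SCM $\ge$ NSP and STM $\ge$ SCM, handling welfare and revenue in parallel. Both parts rely on the telescoping identity
\[
\revenue = p_1 + \sum_{j=1}^{d-1}(p_{j+1}-q_j)
\]
(a rearrangement of the STM payment formula) and on the established chain $p_j \le q_j \le p_{j+1}$.

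For SCM $\ge$ NSP, the key observation is that every direct neighbor $y\in r(s)$ lies at distance one from $s$ in the auxiliary graph $H$, so $T_H$ must contain the edge $(s,y)$, which is in turn preserved in $\hat G$. Hence no $y\in r(s)\setminus\{c_j\}$ is dominated by any $c_j$ on the SCM critical sequence. For welfare, the SCM winner is either $x^*$ (whose bid trivially beats $\welfare^{\mathrm{NSP}}$) or a broker $c_d$ with $v'_{c_d}\ge q_d\ge \max_{y\in r(s)\setminus\{c_d\}} v'_y$; either way the winner's bid dominates $\welfare^{\mathrm{NSP}}=\max_{y\in r(s)} v_y$. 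For revenue, the telescoping identity combined with $q_j\le p_{j+1}$ yields $\revenue^{\mathrm{SCM}}\ge p_1^{\mathrm{SCM}}$; since $r(s)\setminus\{c_1\}$ is disjoint from $\alpha^{\hat G}(c_1)$, $p_1^{\mathrm{SCM}}$ already bounds the second-highest bid in $r(s)$, which equals $\revenue^{\mathrm{NSP}}$.

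For STM $\ge$ SCM, I exploit the fact that SCM is STM applied to the subgraph $\hat G\subseteq G_s$ with identical vertex set and identical non-Sybil set $\Gamma$. Since every dominator in $G_s$ remains a dominator in $\hat G$, the STM critical sequence for $x^*$ is a vertex-order-preserving subsequence of the SCM critical sequence, say via an inclusion $f$. At corresponding indices $\alpha^{G_s}(v)\subseteq\alpha^{\hat G}(v)$, and I would use this to establish the pointwise bounds $p_j^{\mathrm{STM}}\ge p_{f(j)}^{\mathrm{SCM}}$ and $q_j^{\mathrm{STM}}\ge q_{f(j)}^{\mathrm{SCM}}$. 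These imply that whenever SCM passes a common vertex $c_j^{\mathrm{STM}}$, STM also passes it, so SCM's winner index is at most $f(d^{\mathrm{STM}})$. For welfare I split on the SCM winner: if it equals $c_d^{\mathrm{STM}}$ the welfares agree; if it is an earlier common $c_j^{\mathrm{STM}}$ with $j<d^{\mathrm{STM}}$, then $v'_{c_j}<q_j^{\mathrm{STM}}\le p_d^{\mathrm{STM}}\le v'_{c_d^{\mathrm{STM}}}$ by the monotonicity of $p_j$; if it is an inserted vertex $W'$, then $W'$ cannot simultaneously dominate and be dominated by $c_d^{\mathrm{STM}}$ in a DAG, so $W'\notin\alpha^{G_s}(c_d^{\mathrm{STM}})$ and hence $v'_{W'}\le p_d^{\mathrm{STM}}\le v'_{c_d^{\mathrm{STM}}}$. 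The revenue inequality then falls out by plugging the pointwise price bounds together with the winner-index comparison into the telescoping identity.

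The main obstacle is the pointwise comparison $q_j^{\mathrm{STM}}\ge q_{f(j)}^{\mathrm{SCM}}$, because $\beta_j$'s definition involves the \emph{next} element on the critical sequence, and SCM may insert new dominators strictly between $c_j^{\mathrm{STM}}$ and $c_{j+1}^{\mathrm{STM}}$. I would resolve this by observing that the first inserted vertex $c'_{f(j)+1}$ dominates $c_{j+1}^{\mathrm{STM}}$ in $\hat G$, so $\alpha^{\hat G}(c'_{f(j)+1})\supseteq\alpha^{G_s}(c_{j+1}^{\mathrm{STM}})$; this contains the SCM $\beta$-contribution inside the STM one over the common $\Gamma$, and combined with $V(\hat G)\setminus\alpha^{\hat G}(c_j)\subseteq V(G_s)\setminus\alpha^{G_s}(c_j)$ yields the required bound.
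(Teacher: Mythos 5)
Your SCM~$\ge$~NSP half and the overall skeleton of the STM~$\ge$~SCM half (STM's dominator sequence embeds into SCM's via $f$, pointwise price comparison, winner-index comparison, telescoping) are essentially the paper's own route, and that bookkeeping is sound \emph{granted} the key pointwise comparison. Note, however, that the theorem is proved under the standing assumption $\Gamma_0=\emptyset$ of the Discussion section, which you never invoke but do need: your revenue step implicitly uses $\revenue^{\mathrm{STM}}=p^{\mathrm{STM}}_{d}$, i.e.\ that STM pays its brokers nothing (Lemma~\ref{lem:0spdm-zero}); without $q^{\mathrm{STM}}_j=p^{\mathrm{STM}}_j$ the telescoped STM revenue can drop below $p^{\mathrm{STM}}_d$, and the chain $\revenue^{\mathrm{SCM}}\le p^{\mathrm{SCM}}_{d'}\le p^{\mathrm{STM}}_{d}$ no longer suffices.

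The genuine gap is precisely the step you flag as the main obstacle, $q^{\mathrm{STM}}_j\ge q^{\mathrm{SCM}}_{f(j)}$ (equivalently, that every STM candidate winner is still a candidate under SCM, which drives both your welfare and revenue comparisons). Your proposed resolution --- that $\alpha^{\hat G}(c'_{f(j)+1})\supseteq\alpha^{G_s}(c_{j+1})$ ``contains the SCM $\beta$-contribution inside the STM one'' --- is false: with $\Gamma_0=\emptyset$ every STM set $\beta_j$ is empty (this is the content of Lemma~\ref{lem:0spdm-zero}), while SCM's $\beta$ sets are typically nonempty --- that is the whole point of SCM, and in the paper's running example (Figure~\ref{fig:runningExample}) SCM pays brokers $e,f$ strictly positive rewards while STM pays them zero, so no such containment can hold. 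Moreover the eligible roots $y$ for SCM's $\beta$ lie in $\alpha^{\hat G}(c_j)\cap\Gamma$, which are generally not eligible for STM's $\beta$ (that would require $y\in\alpha^{G_s}(c_j)$, impossible for $y\in\Gamma\setminus\{s,c_j\}$ when $\Gamma_0=\emptyset$), and the per-root contribution also goes the wrong way since $\alpha^{\hat G}(y)\supseteq\alpha^{G_s}(y)$. What actually saves the bound is a different fact: $\beta^{\mathrm{SCM}}_{f(j)}\cap\alpha^{G_s}(c_j)=\emptyset$, so every member of $\beta^{\mathrm{SCM}}_{f(j)}$ is already counted in $p^{\mathrm{STM}}_j=\maxbid{V(G_s)\setminus\alpha_{G_s}(c_j)}$. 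Proving this needs both $\Gamma_0=\emptyset$ (so any $y\in\Gamma\setminus\{s\}$ has no $G_s$-dominator besides $s$, hence admits a $G_s$-path from $s$ avoiding $c_j$) and a path-splicing argument: for $x\in\alpha^{\hat G}(y)$ with $y\in\alpha^{\hat G}(c_j)$, any simple $\hat G$-path from $s$ to $x$ meets $c_j$ before $y$, so its suffix from $y$ to $x$ avoids $c_j$; concatenating it with the $c_j$-avoiding $G_s$-path to $y$ shows $x\notin\alpha^{G_s}(c_j)$. Neither ingredient appears in your sketch, so the crucial claim on which the STM-versus-SCM comparison rests is unsupported as written.
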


We are curious whether STM achieves higher social welfare and revenue than all SP mechanisms. However, we'll show in Section~\ref{noawea} that none of SP mechanisms always has optimal social welfare and revenue.

The following theorems qualitatively examine the cost of Sybil-proofness. In Theorem~\ref{thm:revenue_compare_STM}, we find that STM achieve better revenue than the most cited diffusion auction, IDM \cite{li:IDM}. However, social welfare is sacrificed to achieve Sybil-proofness.
Theorem~\ref{thm:no_clear} reflects that there is no clear-cut comparison of the seller's revenue
between SCM and IDM, or between SCM and VCG.

\begin{theorem}\label{thm:revenue_compare_STM}
For any possible type profile \(\vek\theta\), we have
\begin{gather*}
\revenue^\mathrm{STM}(\vek\theta) \ge \revenue^\mathrm{IDM}(\vek\theta) \ge \revenue^\mathrm{VCG}(\vek\theta)\\
\welfare^\mathrm{SCM}(\vek\theta) \le \welfare^\mathrm{STM}(\vek\theta) \le \welfare^\mathrm{IDM}(\vek\theta) \le \welfare^\mathrm{VCG}(\vek\theta).
\end{gather*}
\end{theorem}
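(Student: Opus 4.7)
The plan is to exploit the fact that all four mechanisms are incentive-compatible, so it suffices to compute welfare and revenue under truthful reports $\vek\theta' = \vek\theta$. Each of VCG, IDM, and STM acts along the dominator sequence $\{c_0 = s, c_1, \ldots, c_\ell = x^*\}$ of $G_s(\vek\theta)$, where $x^*$ is the globally highest bidder. Abbreviating $p_j = \maxbid{V(G_s) \setminus \alpha(c_j)}$ and recalling the STM definitions, two monotonicity facts drive everything: (i) $p_j$ is non-decreasing in $j$ because $\alpha(c_{j+1}) \subsetneq \alpha(c_j)$; and (ii) $q_j \le p_{j+1}$, which is the containment $(V(G_s)\setminus\alpha(c_j))\cup\beta_j \subseteq V(G_s)\setminus\alpha(c_{j+1})$ already noted in Section~\ref{h1:stm}. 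In addition $p_{j+1} \le v_{x^*}$ since $x^*$ attains the overall maximum bid.

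For the revenue chain I would derive closed forms. Clarke payments give
\[
\revenue^\mathrm{VCG}(\vek\theta) = p_\ell - \sum_{j=1}^{\ell-1} (v_{x^*} - p_j).
\]
For IDM, the telescoping broker profits $p_{j+1} - p_j$ cancel against the winner's payment, leaving $\revenue^\mathrm{IDM}(\vek\theta) = p_1$. For STM, the rule $t_{c_j} = p_j - q_j$ together with $t_{c_{d_\mathrm{STM}}} = p_{d_\mathrm{STM}}$ yields
\[
\revenue^\mathrm{STM}(\vek\theta) = p_{d_\mathrm{STM}} - \sum_{j=1}^{d_\mathrm{STM}-1} (q_j - p_j).
\]
Applying $q_j \le p_{j+1}$ and telescoping bounds $\sum_{j}(q_j - p_j) \le p_{d_\mathrm{STM}} - p_1$, hence $\revenue^\mathrm{STM} \ge p_1 = \revenue^\mathrm{IDM}$. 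For $\revenue^\mathrm{IDM} \ge \revenue^\mathrm{VCG}$ the target inequality is $\sum_{j=1}^{\ell-1}(v_{x^*} - p_j) \ge p_\ell - p_1 = \sum_{j=1}^{\ell-1}(p_{j+1} - p_j)$, which follows from $v_{x^*} \ge p_{j+1}$ term by term.

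For the welfare chain, first $\welfare^\mathrm{VCG} = v_{x^*}$ is the pointwise maximum, delivering the top inequality. For $\welfare^\mathrm{STM} \le \welfare^\mathrm{IDM}$: since STM's threshold $q_j$ is dominated by IDM's threshold $p_{j+1}$, the chosen index satisfies $d_\mathrm{STM} \le d_\mathrm{IDM}$. When they coincide the welfares are equal; otherwise $v_{c_{d_\mathrm{STM}}} < p_{d_\mathrm{STM}+1}$ (from IDM's non-selection of $c_{d_\mathrm{STM}}$) and monotonicity of $p$ gives $p_{d_\mathrm{STM}+1} \le p_{d_\mathrm{IDM}+1} \le v_{c_{d_\mathrm{IDM}}}$, closing the comparison. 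For $\welfare^\mathrm{SCM} \le \welfare^\mathrm{STM}$ I would transport the same style of argument to the subgraph $\hat{G}$ built in SCM. Because only inter-cluster edges outside $T_H$ are removed, the vertex set and the non-Sybil set $\Gamma$ are preserved and the dominator sequence of $x^*$ in $\hat{G}$ refines the one in $G_s$; in particular STM's winner $c_{d_\mathrm{STM}}$ still appears on $\hat{G}$'s sequence, and a comparable threshold-monotonicity argument bounds SCM's winner value by $v_{c_{d_\mathrm{STM}}}$.

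The hardest step is the $\welfare^\mathrm{SCM} \le \welfare^\mathrm{STM}$ comparison. In the refined sequence on $\hat{G}$, previously-intermediate vertices may become dominators, and the thresholds $q'_k$ depend on the refined sets $\alpha_{\hat G}$ and $\beta'_k$, so a termwise comparison with $q_j$ is not immediate. I expect the delicate step to be showing that along the portion of $\hat{G}$'s sequence up to the copy of $c_{d_\mathrm{STM}}$, the SCM selection criterion cannot fire earlier than at an index with valuation at most $v_{c_{d_\mathrm{STM}}}$, which will require careful bookkeeping of how $\alpha_{\hat G}$ relates to $\alpha$ and of which new dominators the clustering can possibly insert.
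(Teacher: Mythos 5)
Your revenue chain and the \(\welfare^\mathrm{STM}\le\welfare^\mathrm{IDM}\le\welfare^\mathrm{VCG}\) comparisons are correct and essentially identical to the paper's argument: IDM's payments telescope to \(p_1\), STM's revenue is \(p_1+\sum_{j<d}(p_{j+1}-q_j)\ge p_1\), VCG's is \(p_1+\sum_{j\ge 2}(p_j-v_{x^*})\le p_1\), and the welfare step uses exactly the paper's observation that every IDM candidate winner is an STM candidate winner (only a cosmetic boundary slip in your \(p_{d_\mathrm{IDM}+1}\) when \(d_\mathrm{IDM}=\ell\); use \(q^\mathrm{IDM}_{d_\mathrm{IDM}}\le v_{c_{d_\mathrm{IDM}}}\) instead). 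The genuine gap is the \(\welfare^\mathrm{SCM}\le\welfare^\mathrm{STM}\) inequality, which you explicitly leave as ``careful bookkeeping'': you propose a termwise comparison of SCM's thresholds \(q'_k\) along the refined dominator sequence of \(\hat G\) against STM's \(q_j\), and as you yourself note, the new dominators and new \(\beta\)-sets make such a comparison non-immediate. As it stands, one of the four inequalities in the statement is not proved.

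The paper (in the proof of Theorem~\ref{thm:compare_STM&SCM}, on which this inequality rests) avoids the termwise comparison entirely. It uses three facts: (i) \(\hat G\) is a spanning subgraph of \(G_s\), so \(\alpha_{G_s}(x)\subseteq\alpha_{\hat G}(x)\) and the dominator sequence of \(x^*\) on \(G_s\) is a subsequence of the one on \(\hat G\); (ii) on any single run of STM, the lowest-index candidate winner has the \emph{lowest bid} among all candidate winners, because a later candidate \(c_k\) satisfies \(v_{c_k}\ge q_k\ge p_k=\maxbid{V\setminus\alpha(c_k)}\ge v_{c_d}\) (this is the same tool you used for STM vs.\ IDM); and (iii) the STM winner remains a candidate winner when STM is run on \(\hat G\), i.e.\ under SCM. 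Then (ii)+(iii) immediately give that SCM's winner bids no more than STM's winner, with no need to track how the refined \(\alpha_{\hat G}\) and \(\beta'\) sets evolve along the inserted dominators. The only substantive check is (iii) — that the enlarged \(\beta\)-sets on \(\hat G\) cannot push the STM winner's selling price above her bid — and the paper establishes the comparisons of this section under the standing assumption \(\Gamma_0=\emptyset\) (stated at the start of Section~\ref{h1:discussion} and in the appendix restatements), which your sketch does not invoke. So the missing idea is precisely to replace your threshold-by-threshold plan with the ``STM winner is still an SCM candidate, and the chosen candidate always has the minimal bid'' argument.
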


\begin{theorem}\label{thm:no_clear}
There exist two report profiles $\vek\theta_1, \vek\theta_2$, such that
\begin{gather*}
\revenue^\mathrm{SCM}(\vek\theta_1) > \revenue^\mathrm{IDM}(\vek\theta_1), 
\revenue^\mathrm{SCM}(\vek\theta_1) > \revenue^\mathrm{VCG}(\vek\theta_1), \\ \revenue^\mathrm{SCM}(\vek\theta_2) < \revenue^\mathrm{IDM}(\vek\theta_2), \revenue^\mathrm{SCM}(\vek\theta_2) < \revenue^\mathrm{VCG}(\vek\theta_2).
\end{gather*}

\end{theorem}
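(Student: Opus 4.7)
My plan is to prove this purely existential theorem by exhibiting two explicit counterexample profiles $\vek\theta_1, \vek\theta_2$ and computing the three revenues directly from their definitions. For VCG I would use the standard second-price-with-broker-subsidy formula; for IDM, the resale rule described in Section~\ref{vul_existing} (so that the seller collects $p_1^{\mathrm{IDM}}$); and for SCM, the definition in Section~\ref{h2:scm}, taking an expectation over uniformly sampled shortest-path trees of the auxiliary graph $H$.

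For $\vek\theta_1$, where I need $\revenue^{\mathrm{SCM}} > \max(\revenue^{\mathrm{IDM}}, \revenue^{\mathrm{VCG}})$, I would design a small graph with a critical chain from $s$ to a high-value winner $h$ plus one side branch at the first broker. The side branch keeps $p_1^{\mathrm{IDM}}$ small (at most the side-branch bid), while the chain forces VCG to subsidize every cut node and drive its revenue deeply negative. If I choose the graph so that the graphical non-Sybil set $\Gamma$ contains only $s$, the first broker, and the side-branch vertex, then $H$ has a unique shortest-path tree, $\hat{G}=G_s$, and SCM collapses to STM; tuning the bids so that the STM winner is the deep $h$ rather than the first broker yields $\revenue^{\mathrm{SCM}}=p_\ell$, which strictly exceeds both targets. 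A concrete instance is $s\to\{c_1,z\}$, $c_1\to\{c_2,w\}$, $c_2\to h$ with $v_{c_1}=v_{c_2}=0,\,v_z=3,\,v_w=5,\,v_h=10$, giving $\revenue^{\mathrm{SCM}}=5$, $\revenue^{\mathrm{IDM}}=3$, and $\revenue^{\mathrm{VCG}}=-7$.

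For $\vek\theta_2$, where I need $\revenue^{\mathrm{SCM}} < \min(\revenue^{\mathrm{IDM}}, \revenue^{\mathrm{VCG}})$, I would construct a graph whose $H$ admits multiple shortest-path trees, so that SCM's random tree selection sometimes collapses the dominator sequence onto a cheap first broker. A symmetric candidate is $s\to\{a,b\}$, $a\to c$, $b\to c$, $c\to h$ with asymmetric bids $v_a\ll v_b$: when $c$'s parent in $T_H$ is $a$ the STM winner is $h$ paying $\max(v_a,v_b)$, but when $c$'s parent is $b$ the broker $b$ wins cheaply at price $v_a$, dragging the expected revenue well below IDM's deterministic $p_1^{\mathrm{IDM}}=\max(v_a,v_b)$. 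Because the only true cut node $c$ costs welfare just $v_h-\max(v_a,v_b)$ when excluded, VCG's subsidy to $c$ is small and its revenue is close to $\max(v_a,v_b)$. For instance $v_a=1,\,v_b=9,\,v_c=3,\,v_h=10$ yields expected $\revenue^{\mathrm{SCM}}=5$, $\revenue^{\mathrm{VCG}}=8$, and $\revenue^{\mathrm{IDM}}=9$.

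The main obstacle is coordinating both inequalities within each profile while handling SCM's randomization over shortest-path trees. I would sidestep this in $\vek\theta_1$ by engineering $H$ to have a unique shortest-path tree so SCM is deterministic, and in $\vek\theta_2$ by choosing the asymmetry $v_a\ll v_b$ large enough that the expectation over the two trees lies strictly below both $\revenue^{\mathrm{IDM}}$ and $\revenue^{\mathrm{VCG}}$ with comfortable margin.
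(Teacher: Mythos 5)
Your proposal is correct and follows essentially the same route as the paper: both prove the theorem by exhibiting two explicit small profiles and directly computing the SCM, IDM and VCG revenues (your numbers check out: $5>3>-7$ for $\vek\theta_1$, and expected SCM revenue $5<8<9$ for $\vek\theta_2$). The only cosmetic difference is that you handle SCM's randomization by taking the expectation over shortest-path trees (engineering $\vek\theta_1$ so the tree is unique), whereas the paper fixes one realization of the tree in its second example.
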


The proofs of Theorem~\ref{thm:compare_STM&SCM}, \ref{thm:revenue_compare_STM}, and  \ref{thm:no_clear} can be found in Appendix~\ref{app:compare}.

\subsection{Worst-Case Efficiency Analysis and (No) Optimality}\label{noawea}

In this subsection, we conduct worst-case analysis on SP mechanisms to explore the optimality of social welfare and revenue.
We consider the concept of worst-case efficiency ratio, which is adopted from previous work \cite{iwasaki2010worst} to measure the social welfare of Sybil-proof combinatorial auctions in the worst case. The worst-case efficiency ratio of $\mechanism$ indicates the ratio of $\mechanism$'s social welfare and the optimal social welfare in the worst-case input.

\begin{definition}\label{def:wc-eff-ratio}
    Given a type profile \(\vek\theta\), the optimal social welfare \(\welfare^*(\vek\theta)\) is defined to be the highest private value \(\max_{x\in V(G_s)}{v_x}\).
    The worst-case efficiency ratio of a mechanism $\mechanism$ is defined as follows:
    \[\inf_{\vek\theta}\frac{\welfare^\mechanism(\vek\theta)}{\welfare^*(\vek\theta)}.\]
\end{definition}

\begin{theorem}\label{thm:sp-eff-ratio}
    The worst-case efficiency ratio of any non-deficit, IR, and Sybil-proof diffusion auction mechanism is zero.
\end{theorem}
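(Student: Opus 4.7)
The plan is to argue by contradiction. Suppose some non-deficit, IR, Sybil-proof mechanism $\mechanism$ achieves worst-case efficiency ratio $\rho > 0$. I would consider two families of profiles: let $\vek\theta^A_\epsilon$ be the profile in which the seller's only neighbor is $a$ with $v_a = \epsilon$ and $r(a) = \emptyset$, and let $\vek\theta^B_V$ extend $\vek\theta^A_\epsilon$ (with $\epsilon$ now fixed and small) by attaching a single child $b$ of $a$ with $v_b = V$. On $\vek\theta^A_\epsilon$ we have $\welfare^*(\vek\theta^A_\epsilon) = \epsilon$, so the ratio assumption forces $\pi_a(\vek\theta^A_\epsilon) = 1$. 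Together with IR, non-deficit, and the DSIC that follows from the $k=0$ case of Sybil-proofness, the diffusion mechanism restricted to a single reachable buyer is a posted-price auction with some reserve $r \geq 0$ (by a Myerson-style characterization, since the report space collapses to $a$'s bid). Requiring $\pi_a(\vek\theta^A_\epsilon) = 1$ for every $\epsilon > 0$ pins down $r = 0$, whence $t_a(\vek\theta^A_\epsilon) = 0$ and $u_a(\vek\theta^A_\epsilon) = \epsilon$.

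Next I would analyze $\vek\theta^B_V$. For the ratio to be at least $\rho$ once $V > \epsilon/\rho$, the mechanism must pick $\pi_b(\vek\theta^B_V) = 1$. To constrain the payments, I invoke Sybil-proofness for the attack in which $a$'s true type is $(0, \emptyset)$ and she fabricates a single Sybil identity $b$, reporting $\theta'_a = (\epsilon, \{b\})$ and $\theta'_b = (V, \emptyset)$; this Sybil report is admissible under the paper's constraints and is indistinguishable from the truthful report of $\vek\theta^B_V$. The attacker's truthful utility with $a$ alone and $v_a = 0$ equals $0$ by IR and non-deficit, so the SP inequality reads
\[
0 \;\geq\; \epsilon \cdot 0 - t_a(\vek\theta^B_V) + V \cdot 1 - t_b(\vek\theta^B_V).
\]
Combined with IR in the real $\vek\theta^B_V$ (so $t_a \leq 0$ from $\pi_a = 0$, and $t_b \leq V$ from $\pi_b = 1$), this forces $t_a(\vek\theta^B_V) = 0$ and $t_b(\vek\theta^B_V) = V$, making $a$'s truthful utility in $\vek\theta^B_V$ exactly zero.

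Finally I would close the argument via DSIC: in $\vek\theta^B_V$ agent $a$ can deviate by reporting $r'(a) = \emptyset$; by the ignorance-of-unreachable-vertices condition the mechanism then effectively sees $\vek\theta^A_\epsilon$ and, by the first step, awards $a$ utility $\epsilon$. DSIC requires her truthful utility $0$ to dominate this deviation utility $\epsilon$, i.e.\ $0 \geq \epsilon$, which is false for any $\epsilon > 0$. This contradiction rules out $\rho > 0$ and proves the theorem. The main obstacle I anticipate is the first step: one must argue cleanly that IR, non-deficit, and DSIC taken together across all single-buyer profiles genuinely force the reserve to be zero under the ratio assumption, and one must verify that the Sybil-attack scenario above is admissible under the paper's SP formulation---in particular, that the attacker is permitted to misreport her own value $v'_a \neq v_a$, which the quantification $\theta'_i \in \settype{i}$ does allow.
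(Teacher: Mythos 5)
Your steps 1 and 2 are fine (the ratio pins the single-buyer allocation and a threshold argument gives zero payment, and on the chain $s\to a\to b$ the high bidder $b$ must win), but step 3 rests on a misreading of the Sybil-proofness condition, and that is where the argument breaks. You evaluate the attack utility as $\epsilon\cdot 0 - t_a + V\cdot 1 - t_b$, crediting the attacker with the \emph{reported} value $V$ of the fake identity that wins. The intended semantics — made explicit in the paper's Appendix (Lemma~\ref{lem:*spdm-sp}), where the attacker's utility is $v_i\cdot\pi_i^+ - t_i^+$ with her \emph{true} value $v_i$ — values the item at the attacker's true valuation, here $0$. (The literal reading you use cannot be the right one: under it even STM fails SP, e.g.\ take $r(s)=\{a,z\}$, $v_z=5$, $a$ of true value $0$ creating a fake child bidding $100$; the fake wins and its ``utility'' $100-5$ would violate the inequality, contradicting Theorem~\ref{thm:main}.) With the correct accounting, the SP inequality for your attack only yields $t_a(\vek\theta^B_V)+t_b(\vek\theta^B_V)\ge 0$, which together with $t_a\le 0$ and $t_b\le V$ does not force $t_a=0$: the mechanism may charge the winner and hand part of it to the broker $a$ as a reward, in which case $a$'s truthful utility in $\vek\theta^B_V$ can be at least $\epsilon$ and your final IC contradiction (hide $b$ to gain $\epsilon$) disappears. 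No single-profile SP constraint can separate $t_a$ from $t_b$ in the way you need, since Sybil attacks only ever constrain the \emph{sum} of the identities' transfers.

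This is why the paper's proof is structured differently: on a longer chain $s\to a\to b\to c$ it first uses IC/IR/non-deficit across several profiles to show that in a configuration where a downstream buyer has a very high value, the intermediate buyer $b$ \emph{must} receive a strictly positive monetary reward for diffusing (its Case 4); it then moves to a different true situation (Case 5) where $a$ fabricates an identity that occupies $b$'s position and collects that reward \emph{as money}, without the fake identity winning the item — so the profitability of the attack is valid under the true-value semantics. To repair your proof you would need an analogous cross-profile argument establishing that some broker reward is unavoidable, rather than trying to show via SP that the broker gets nothing.
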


The above theorem shows that the social welfare of every Sybil-proof mechanism is far below the social optimum in some cases. Its proof is included in Appendix~\ref{app:sp-eff-ratio}.

Because every SP mechanism is sufficiently bad compared to social optimum, it is natural to compare their social welfare relative to other SP mechanisms.
However, this further impossibility result indicates that every SP mechanism would perform extremely worse than another SP mechanism in some cases. Therefore, we cannot find any optimal diffusion auction, even when the optimality is relative to each other. The proof can also be found in Appendix~\ref{app:sp-eff-ratio}.

\begin{theorem}\label{thm:no-best}
    For any non-deficit, SP, and IR diffusion auction mechanism \(\mechanism\), and for any \(\varepsilon > 0\), there exists another non-deficit, SP, and IR diffusion auction mechanism \(\mechanism'\) such that 
    \[\inf_{\vek\theta}\frac{\mathcal{SW}^\mechanism(\vek\theta)}{\mathcal{SW}^{\mechanism'}(\vek\theta)}<\varepsilon.\]
\end{theorem}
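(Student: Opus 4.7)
\begin{proof-sketch}
My plan is to combine Theorem~\ref{thm:sp-eff-ratio} with the construction of a tailored SP mechanism. Given $\mechanism$ and $\varepsilon > 0$, Theorem~\ref{thm:sp-eff-ratio} asserts that $\mechanism$'s worst-case efficiency ratio is zero, so I can pick a profile $\vek\theta^*$ on which $\welfare^\mechanism(\vek\theta^*) < (\varepsilon/2)\cdot\welfare^*(\vek\theta^*)$. Writing $V^* := \welfare^*(\vek\theta^*)$, it then suffices to exhibit a non-deficit, IR, Sybil-proof mechanism $\mechanism'$ for which $\welfare^{\mechanism'}(\vek\theta^*) \ge V^*/2$; the ratio $\welfare^\mechanism(\vek\theta^*) / \welfare^{\mechanism'}(\vek\theta^*)$ is then below $\varepsilon$, and so is the infimum of ratios over all profiles.

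To construct $\mechanism'$, I split on the location of the optimum-achieving agent $x^* \in V(G_s(\vek\theta^*))$ with $v_{x^*} = V^*$. When $x^* \in r(s)$, setting $\mechanism' = \mathrm{NSP}$ works: only the seller's (guaranteed non-Sybil) neighbours bid, $\mathrm{NSP}$ is trivially SP, and its welfare on $\vek\theta^*$ equals $V^*$. When $x^* \notin r(s)$, I design $\mechanism'$ based on STM but with a modified Sybil-tax schedule along the dominator path $c_0 = s, c_1, \dots, c_\ell = x^*$: by choosing $q_j$ and $p_j$ values whose pairwise gaps favour shifting the winner further down the dominator sequence, while preserving the chain $p_j \le q_j \le p_{j+1}$ that underlies STM's IR and non-deficit, the winner-selection condition $v'_{c_d} \ge q_d$ selects a $c_d$ near $x^*$, yielding welfare a constant fraction of $V^*$.

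The main obstacle is verifying that the tailored $\mechanism'$ is globally Sybil-proof, not only on $\vek\theta^*$. Because SP links outcomes across every pair of (truth, Sybil-attack) profiles, the altered tax schedule must be coherently defined over every possible reported graph and must be robust against every Sybil attack by every real agent. The dominator-tree framework of Theorem~\ref{thm:dom-tree} is the right tool: following the SP proof of Theorem~\ref{thm:main}, any Sybil attack by agent $i$ clusters her false identities into a contiguous block of the dominator sequence, so the altered taxes can be analysed block-by-block in a case analysis that mirrors the STM proof. Completing this case analysis is the technical crux; once done, combining $\welfare^{\mechanism'}(\vek\theta^*) \ge V^*/2$ with $\welfare^\mechanism(\vek\theta^*) < (\varepsilon/2)V^*$ yields $\inf_{\vek\theta} \welfare^\mechanism(\vek\theta)/\welfare^{\mechanism'}(\vek\theta) < \varepsilon$ as required.
\end{proof-sketch}
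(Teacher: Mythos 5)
Your first step coincides with the paper's: invoke Theorem~\ref{thm:sp-eff-ratio} to obtain a profile $\vek\theta^*$ with $\welfare^\mechanism(\vek\theta^*)<(\varepsilon/2)\,\welfare^*(\vek\theta^*)$, and then exhibit a non-deficit, IR, SP mechanism $\mechanism'$ whose welfare on that one profile is a constant fraction of $\welfare^*(\vek\theta^*)$. The gap is in the second step, and it is the whole technical content of the theorem. In your main case ($x^*\notin r(s)$) you propose to ``modify the Sybil-tax schedule'' of STM so that the winner-selection rule $v'_{c_d}\ge q_d$ lands near $x^*$, but you never specify the modified $p_j,q_j$ (in STM these are not free parameters: they are determined by $\alpha(c_j)$, $\beta_j$ and the reported bids, and the SP proof of Lemma~\ref{lem:*spdm-sp} depends on exactly those definitions), and you explicitly defer the verification that the altered mechanism is globally Sybil-proof. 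Since an arbitrary re-weighting of the taxes has no reason to survive the block-by-block Sybil analysis, what you have is a plan for a construction rather than a construction, and the claim $\welfare^{\mechanism'}(\vek\theta^*)\ge \welfare^*(\vek\theta^*)/2$ is unsupported.

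The paper closes this gap with a much simpler device that you may want to compare against: it defines $\mathrm{STM}[\kappa]$, which adds an auxiliary bidder $p$ with bid $\kappa$ and no out-edges as a neighbor of the seller, runs plain STM on the augmented instance, and lets the seller keep the item if $p$ would win. Because this is just STM on a slightly larger input in which the dummy agent is controlled by no real buyer, SP, IR and non-deficiency are inherited directly from Theorem~\ref{thm:main}; no new incentive analysis is needed. Choosing the constant $\kappa$ to lie strictly between the second-highest bid of $\vek\theta^*$ and $\welfare^*(\vek\theta^*)$ forces every selling price $q_j$ up to at least $\kappa$, so no broker can win and the item goes to the highest bidder: $\welfare^{\mathrm{STM}[\kappa]}(\vek\theta^*)=\welfare^*(\vek\theta^*)$ (Lemma~\ref{lem:reserve-price}), which is exactly the quantitative statement your sketch needs. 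Note that this reserve-price idea is the rigorous version of your intuition of ``raising the $q_j$'s to push the winner down the dominator sequence''; the difference is that it does so without touching STM's payment rule, which is what lets the Sybil-proofness argument be reused instead of redone. Your $x^*\in r(s)$ case via NSP is fine but is also subsumed by the same construction.
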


We can derive a similar result in terms of the seller's revenue.

\begin{theorem}\label{thm:no-best-rev}
    For any non-deficit, SP, and IR diffusion auction mechanism \(\mechanism\), and for any \(\varepsilon > 0\), there exists another non-deficit, SP, and IR diffusion auction mechanism \(\mechanism'\) such that 
    \[\inf_{\vek\theta}\frac{\mathcal{R}^\mechanism(\vek\theta)}{\mathcal{R}^{\mechanism'}(\vek\theta)}<\varepsilon.\]
\end{theorem}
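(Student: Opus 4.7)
The plan is to reduce the question to the simplest possible diffusion instance and exhibit a single type profile on which $\mechanism$ collects zero revenue while an alternative Sybil-proof mechanism $\mechanism'$ collects strictly positive revenue. Because the infimum ratio is then bounded above by $0$, which is less than $\varepsilon$ regardless of behaviour on other profiles, the theorem follows immediately.

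First I would consider the family of trivial profiles $\vek\theta_V$ in which the seller $s$ has a single neighbor $a$ reporting bid $V$ with $r'(a)=\emptyset$, so that the reachable reported graph $G_s(\vek\theta_V)$ is the fixed two-vertex graph on $\{s,a\}$. By anonymity and the ignorance of unreachable vertices, $\mechanism$'s allocation and payment on $\vek\theta_V$ depend only on $V$. Since SP implies DSIC, the classical Myerson characterization for single-agent single-item incentive-compatible auctions yields a threshold $r^{\mechanism}\in[0,\infty]$ such that $\pi_a(\vek\theta_V)=1$ iff $V\ge r^{\mechanism}$, and $t_a(\vek\theta_V)=r^{\mechanism}\cdot \pi_a(\vek\theta_V)$.

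Next I would pick $V^*$ forcing $\revenue^{\mechanism}(\vek\theta_{V^*})=0$. If $r^{\mechanism}>0$, choose any $V^*\in(0,r^{\mechanism})$, so $\mechanism$ does not allocate. If $r^{\mechanism}=0$, choose any $V^*>0$, so $\mechanism$ allocates at zero payment. In either case $\revenue^{\mechanism}(\vek\theta_{V^*})=0$. Then define $\mechanism'$ as the second-price auction with a constant reserve $r'\in(0,V^*]$ applied only to the seller's direct neighbors $r(s)$---a minor variant of NSP. Sybil identities of any buyer can never be injected into $r(s)$, since $r(s)$ is reported by the seller alone and is fixed in our adversarial model; thus $\mechanism'$ inherits NSP's Sybil-proofness, individual rationality, and non-deficiency, and introducing a constant reserve preserves all three properties. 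On $\vek\theta_{V^*}$ the unique neighbor $a$ bids $V^*\ge r'$, wins, and pays $r'$, giving $\revenue^{\mechanism'}(\vek\theta_{V^*})=r'>0$ and therefore $\revenue^{\mechanism}(\vek\theta_{V^*})/\revenue^{\mechanism'}(\vek\theta_{V^*})=0<\varepsilon$.

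The main obstacle, though minor, is the careful invocation of the Myerson characterization inside the diffusion model: one must verify that no feature of the framework (richer report structure for $a$, tie-breaking at $V=r^{\mechanism}$, or payment schemes outside the standard threshold form) can block the zero-revenue conclusion. The ignorance condition collapses $\vek\theta_V$ to a single-parameter bid of $a$, and choosing $V^*$ strictly smaller than $r^{\mechanism}$ when $r^{\mechanism}>0$ eliminates any tie at the threshold. With these checks in hand the rest of the argument is routine, and essentially the same template reproduces the analogous welfare statement of Theorem~\ref{thm:no-best}.
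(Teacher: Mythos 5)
Your argument is correct within the paper's framework and takes a genuinely different route. The paper never looks for a zero-revenue profile: it first proves that every non-deficit, IR, SP mechanism has worst-case efficiency ratio zero (Theorem~\ref{thm:sp-eff-ratio}, via a four-node network and an explicit Sybil attack), then builds \(\mathrm{STM}[\kappa]\), STM with a reserve price tailored to the bad profile, so that \(\mechanism'\) extracts almost the whole optimal welfare \(\welfare^*(\vek\theta)\) as revenue there (Lemma~\ref{lem:reserve-price}); the revenue ratio is then bounded by \(\revenue^{\mechanism}(\vek\theta)\leq\welfare^{\mechanism}(\vek\theta)<\frac{\eps}{2}\welfare^*(\vek\theta)\). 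You instead restrict to the trivial one-neighbor instances, invoke the single-parameter DSIC threshold characterization (losers pay zero by IR plus non-deficiency, winners pay the threshold) to locate a profile where \(\mechanism\) collects exactly zero, and beat it with NSP plus a reserve, which is easily checked to be SP since Sybil identities can never enter \(r(s)\). Your route is more elementary and needs no Sybil-attack construction at all for this theorem; the paper's route is heavier but proves the revenue and welfare impossibilities simultaneously at one profile (Theorem~\ref{thm:no-best-combined}). Two caveats. First, your threshold step uses the deterministic allocation of Definition~\ref{def:diffusion-auction}; for randomized mechanisms (such as SCM, which the paper treats as a distribution over deterministic ones) a zero-revenue single-neighbor profile need not exist (e.g.\ allocation probability \(x(V)=1-e^{-V}\) yields strictly positive revenue for every \(V>0\)), though your argument repairs easily because the Myerson payment satisfies \(p(V)/V\to 0\) as \(V\to 0^{+}\), so a small \(V^*\) with reserve \(r'=V^*\) still drives the ratio below \(\eps\); the paper's own proof of Theorem~\ref{thm:sp-eff-ratio} also reasons deterministically, so you are not at a disadvantage here. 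Second, your closing claim that the same template yields Theorem~\ref{thm:no-best} is too quick: a mechanism with zero threshold on one-neighbor instances attains optimal welfare on that whole family, so the welfare version genuinely needs richer instances, which is exactly what the paper's four-node argument supplies.
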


The theorems above indicate that all SP mechanisms have extremely low social welfare and revenue compared to some other SP mechanisms. These impossibility results are surprising and show the drastic difference between diffusion mechanisms and traditional auctions.

\subsection{Experiments}\label{h2:experiments}
Despite the qualitative comparison results in Section~\ref{h2:comparison}, we still wonder how much our mechanisms are better than NSP, 
and how much social welfare and revenue is sacrificed for Sybil-proofness with comparison to other diffusion mechanisms.
Therefore, we conduct simulations to analyze the performance of mechanisms in the average case. 
Such experiments have never been performed on diffusion auctions in previous literature, so we have to be innovative in the settings.

To test the diffusion auction mechanisms, we must specify the private value vector of buyers and the social network structure. Each buyer's bid is a one-dimensional continuous variable and can be captured with a distribution function. For simplicity, we assume the private values are drawn i.i.d. from a uniform distribution on \([0,1]\). However, the graph structure in diffusion auctions is highly complex. Since diffusion auctions are held on social networks, we take inspirations from network science to create distributions for our input. Price's model~\cite{Price:model} is a simple and classical model for directed networks, used to describe various scale-free networks in the real world~\cite{toivonen2009comparative}. It generates a graph of \(n\) vertices, each with a degree of \(m\). Despite IDM and VCG being not Sybil-proof, we assume that all agents act truthfully in the experiment.

The mechanisms are tested with graphs with \(n=100\) vertices, and the density can be controlled by changing the parameter \(m\). For each \(m\), 1,000 inputs are generated as specified above. We test five mechanisms: NSP, STM, SCM, IDM and VCG. We calculate and analyze their social welfare and revenue. The results are visualized with box plots in Figure~\ref{fig:exp-result}.

We have the following observations. Firstly, our mechanisms achieve significantly higher social welfare and revenue than the non-diffusion NSP mechanism. Secondly, the average-case social welfare distribution of either STM or SCM is very close to the social optimum (VCG), especially when the graph is denser. Thirdly, STM has the highest revenue, which is consistent with theoretical analysis. Finally, seller's revenue of SCM is slightly lower than IDM, and higher than VCG.

Experimental results indicate that our diffusion mechanisms have significantly better performance than NSP, and we do not sacrifice seller's revenue and social welfare much to achieve Sybil-proofness.

\begin{figure}[t]
    \centering
    \begin{subfigure}[b]{0.375\linewidth}
       \includegraphics[width=1\linewidth]{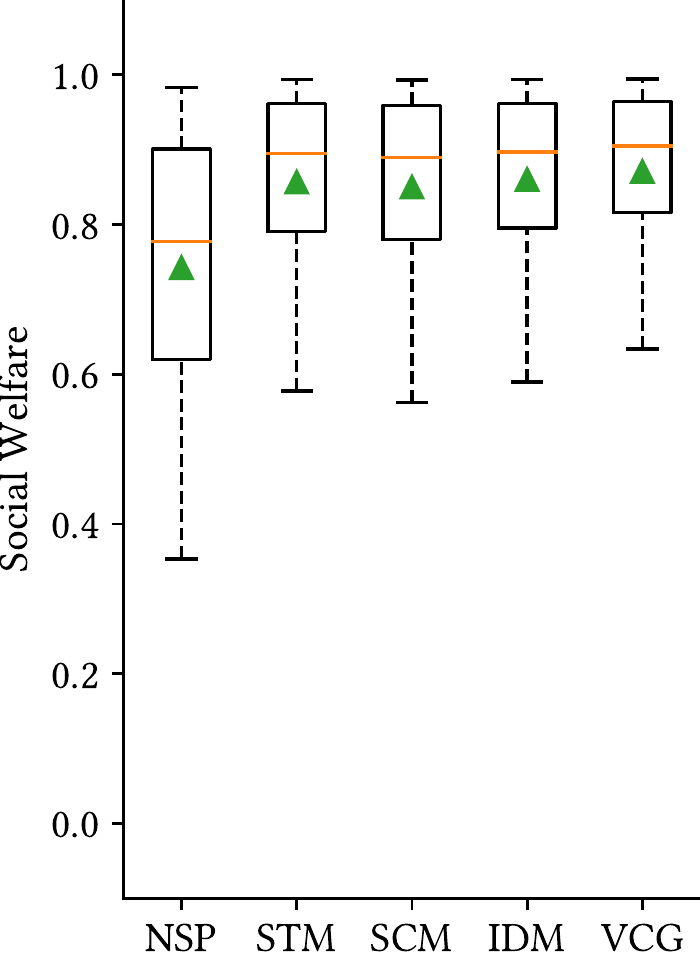}
       \captionsetup{width=.8\linewidth}
       \caption[]
        {{Social welfare of mechanisms when \(m=3\).}}
       \label{sfig:m3sw}
    \end{subfigure}
    \begin{subfigure}[b]{0.375\linewidth}
       \includegraphics[width=1\linewidth]{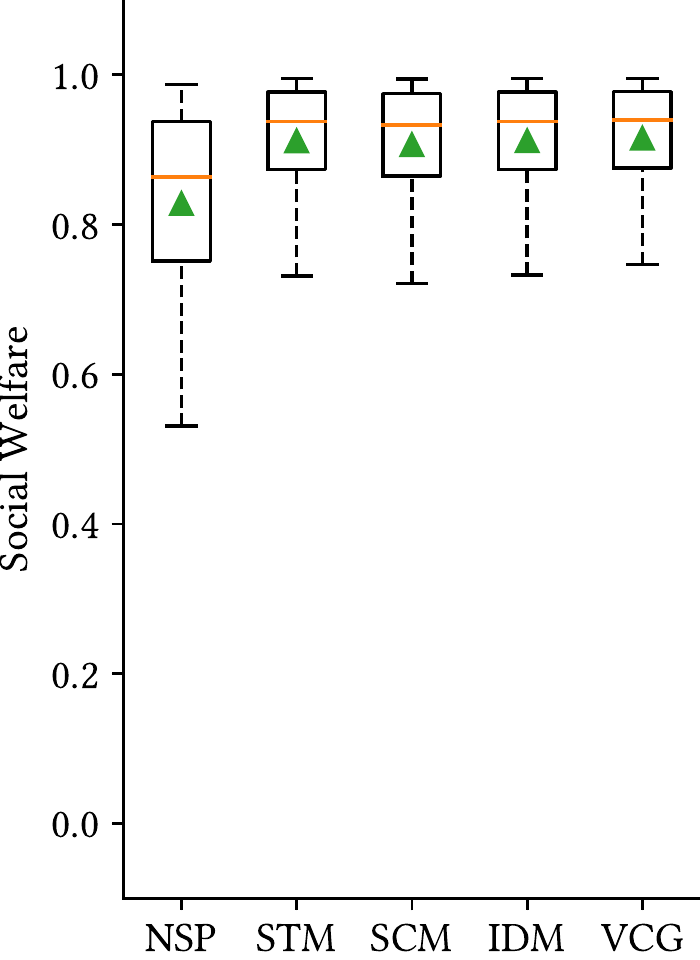}
       \captionsetup{width=.8\linewidth}
       \caption[]
        {{Social welfare of mechanisms when \(m=5\).}}
       \label{sfig:m5sw} 
    \end{subfigure}
    \begin{subfigure}[b]{0.375\linewidth}
       \includegraphics[width=1\linewidth]{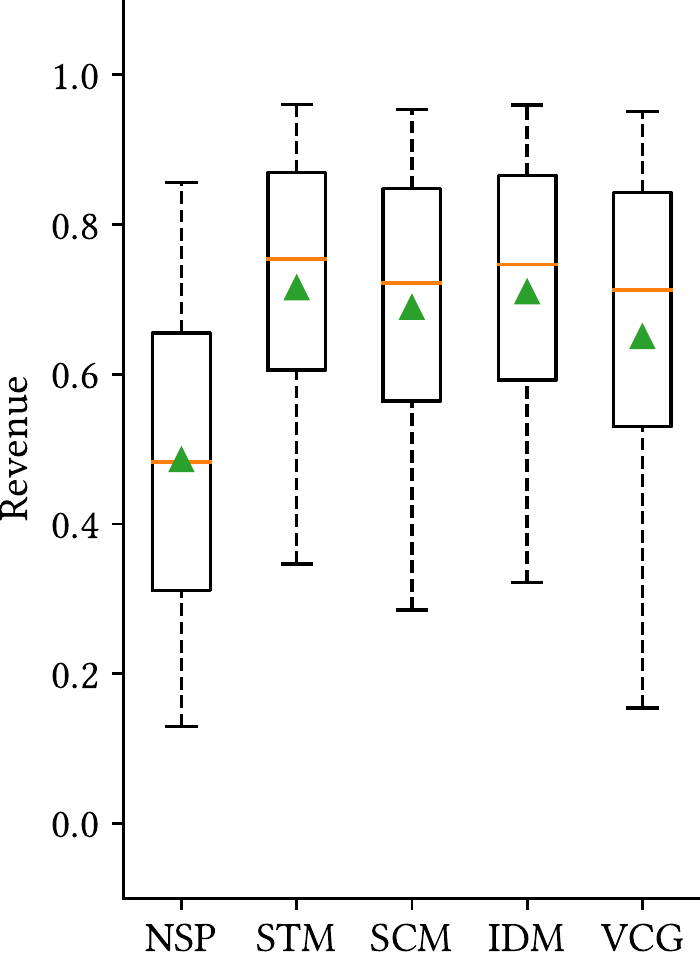}
       \captionsetup{width=.8\linewidth}
       \caption[]
        {{ Seller's revenue of mechanisms when \(m=3\).}}
       \label{sfig:m3r}
    \end{subfigure}
    \begin{subfigure}[b]{0.375\linewidth}
       \includegraphics[width=1\linewidth]{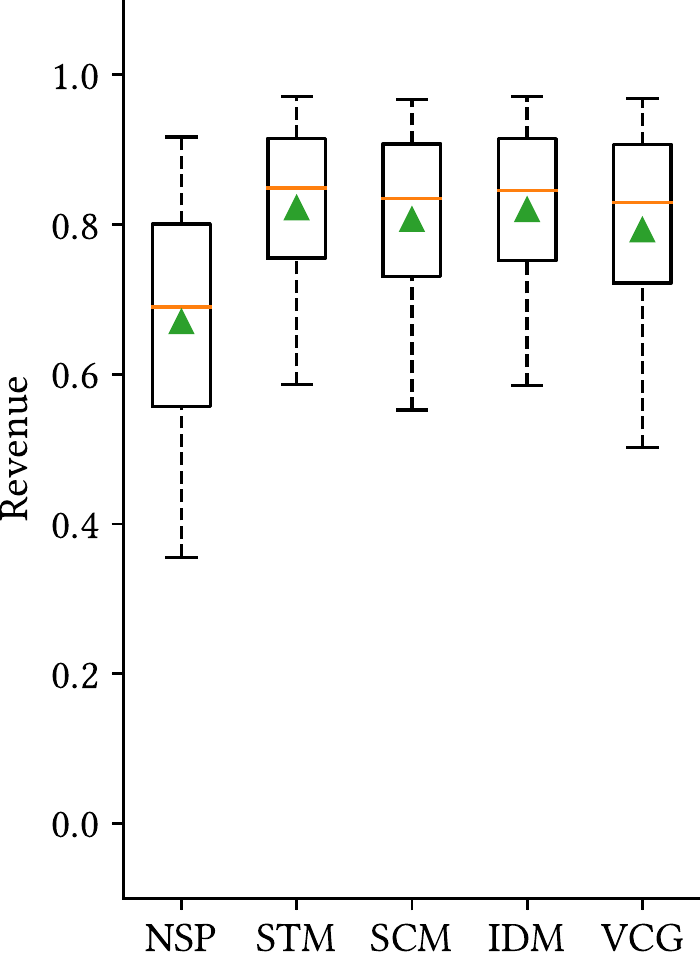}
       \captionsetup{width=.8\linewidth}
       \caption[]
        { Seller's revenue of mechanisms when \(m=5\).}
       \label{sfig:m5r} 
    \end{subfigure}
    \caption{The welfare and revenue distribution of five mechanisms on graphs of different densities. The orange line is the median, the green triangle is the mean, the box denotes the range between the first and the third quartile, and the whisker represents the range between the 5th and 95th percentile.}
    \label{fig:exp-result}
\end{figure}

\section{Conclusions}

In this paper, we study an important issue in diffusion auctions, the Sybil attack. We find that previous diffusion mechanisms are vulnerable to Sybil attacks.
We have proposed two novel solutions, STM and SCM, and proved that they are incentive compatible and Sybil-proof. We further discuss the social welfare and revenue of these two mechanisms. Theoretical analysis and experiments indicate that STM and SCM achieve Sybil-proofness with little sacrifice in the social welfare and revenue.

We also conduct worst-case analysis on all Sybil-proof diffusion mechanisms. We prove negative conclusions that the social welfare and revenue of every SP mechanism is far below some other SP mechanism in some cases.

Our work raises many open problems in the domain of Sybil-proof diffusion auctions. Firstly, how to develop Sybil-proof diffusion mechanisms for selling multiple items? Secondly, is there any other effective way to achieve Sybil-proofness? Thirdly, since we can't pick out the optimal Sybil-proof diffusion mechanism in the worst case, can we develop other methods to compare SP mechanisms? Or can we only compare a subset of all SP mechanisms to avoid such negative conclusions? Furthermore, how to reward the intermediate buyers fairly is also worth consideration.

\bibliographystyle{plainnat}
\bibliography{main}

\appendix
\section*{Appendix}
\section{Table of Notations}\label{app:notations}
The notations in this paper are organized in Table~\ref{table:notation}.

\begin{table}
\caption{Notations in this paper}
\begin{tabularx}{\linewidth}{@{}p{0.3\linewidth}X@{}}
\toprule
  \multicolumn{2}{l}{\ul{Model:}} \\
  $s$ & Seller \\
  $n$ & Number of buyers \\
  $N=\{1,2,\dots,n\}$ & Set of buyers \\
  $r(i)$ & Neighbors of \(i\), for \(i\in N\) or \(i=s\) \\
  $v_i$ & Private value of \(i\), for \(i\in N\) \\
  $\theta_i=(v_i,r(i))$ & Private type of \(i\), for \(i\in N\) \\
  $r'(i)$ & Set of neighbors that \(i\) diffuses to, for \(i\in N\) \\
  $v'_i$ & Bid of \(i\), for \(i\in N\) \\
  $\theta'_i=(v'_i,r'(i))$ & Reported type of \(i\), for \(i\in N\) \\
  $\vek\theta=(\theta_1,\dots,\theta_n)$ & Type profile \\
  $\vek\theta'=(\theta'_1,\dots,\theta'_n)$ & Report profile \\
  $\set\Theta_i=\mathbb{R}_{\geq 0}\times 2^N$ & Type space for buyer \(i\) \\
  $\set\Theta$ & Space of all possible profile\\
  \multicolumn{2}{l}{\ul{Mechanism:}} \\
  \(\pi_i(\vek\theta') \in \{0,1\}\) & Allocation function of \(i\) \\
  \(t_i(\vek\theta') \in {\mathbb{R}}\) & Payment function of \(i\) \\
  \(\mechanism=(\vek\pi(\cdot),\vek{t}(\cdot))\) & A diffusion auction mechanism \\
  \(u_i(\theta_i, \maskprofile,\mathcal{M})\) & Utility function of \(i\) under \(\mechanism\) \\
  \(\revenue^\mechanism(\vek\theta)\) & Seller's revenue under mechanism \(\mechanism\)\\
  \(\welfare^\mechanism(\vek\theta)\) & Social welfare under mechanism \(\mechanism\)\\
  \(\welfare^*(\vek\theta)\) & Optimal social welfare\\
  \multicolumn{2}{l}{\ul{Graphs and graph theoretical constructions:}} \\
  $G$ & Social network graph of \(\vek\theta\) \\
  $G(\maskprofile)$ & Social network graph of \(\vek\theta'\) \\
  $G_s(\maskprofile)$ & Subgraph of \(G(\maskprofile)\) containing only vertices reachable from \(s\) \\
  $x \dom y$ & Vertex \(x\) dominates vertex \(y\) \\
  $idom(x)$ & Immediate dominator of \(x\) \\
  $\alpha(x)$ & Set of vertices dominated by \(x\) \\
  $\Gamma_0$ & Externally provided trustworthy vertices \\
  $\Gamma(\maskprofile)$ & Guaranteed non-Sybil vertices on graph $G_s(\maskprofile)$\\
  $K_x$ & Sybil cluster of \(x\) for \(x\in \Gamma(\maskprofile)\) \\

  \multicolumn{2}{l}{\ul{Used by STM:}} \\
  \(x^*\) & Highest bidder \\
  \(\maxbid{S}\) & Highest bid in a set \(S\) \\
  \(C_x\) & Dominator sequence of vertex \(x\) \\
  \(c_j\) & The \(j\)-th element of dominator sequence $C_{x^*}$\\
  \(\beta_j\) & Set of vertices that is guaranteed not to be a Sybil identity of \(c_j\) \\
  \(p_j\) & buying price of \(c_j\) when she get the item from \(c_{j-1}\) \\
  \(q_j\) & selling price \(c_j\) gets when pass the item to \(c_{j+1}\) \\
  \multicolumn{2}{l}{\ul{Used by SCM:}} \\
  $H$ & Reconstructed social network with vertices in $\Gamma(\maskprofile)$\\
  $T_H$ & A random shortest-path tree of \(H\) \\
  $\hat{G}$ & Subgraph of \(G_s(\theta)\) that is constructed according to $T_H$\\
  \multicolumn{2}{l}{\ul{Used in the proofs:}} \\
  \(A(x)\) & The relative complement of $\alpha(x)$ with respect to vertex set $V(G_s)$\\
\bottomrule
\end{tabularx}
\label{table:notation}
\end{table}

\section{NSP, VCG and IDM}\label{app:idmvcg}
Various diffusion auction mechanisms have been proposed \cite{li:IDM} to achieve the incentive compatibility goal. In this section, we will provide three examples.

The most trivial way to achieve IC in diffusion auctions is to ignore diffusion completely, treat it as a single-item auction, and perform a second-price auction on the seller's neighbors \(r(s)\). We use \(\maxbid{S}\) to denote the highest bid in a set \(S\), or \(\maxbid{S}=\max_{x\in S} v'_x\).

\begin{framed}
 \label{mechanism:NSP}
 \noindent\textbf{Neighbor Second-Price Auction (NSP)}\\
 \rule{\textwidth}{0.5pt}

\begin{enumerate}
    \item Input the reported type profile \(\maskprofile\).
    \item The item is sold to the highest bidder among the seller's neighbor \(x^*=\arg\max_{x\in r(s)}{v'_x}\).
    \item The winner pays to the seller the second-highest bid in \(r(s)\), i.e. \(t_{x^*}=\maxbid{r(s)\setminus\{x^*\}}\).
\end{enumerate}
\end{framed}

However, this trivial mechanism defeats the purpose of diffusion auctions, i.e. to utilize the social network structure to advertise the auction to a broader audience. To fulfill this goal while still respecting incentive compatibility, we can apply the famous VCG mechanism~\cite{AGTbook2007} to our scenario. Recall that in VCG, each agent pays an amount equal to the social cost of their participation. In our scenario, an agent's diffusion may introduce new agents to the auction. Therefore, these agents' participation may increase the social welfare, instead of incurring a social cost, and should be rewarded by the mechanism.

The set \(\alpha(x)=\{y\mid x\dom y\}\subseteq V(G_s(\maskprofile))\) is defined to be the set of vertices that cannot be reached from \(s\) if \(x\) is not present. \(V(G_s(\maskprofile))\setminus \alpha(x)\) is the set of vertices that \(s\) can reach even if \(x\) is not present.

\begin{framed}
 \label{mechanism:VCG}
 \noindent\textbf{VCG Mechanism}\\
 \rule{\textwidth}{0.5pt}

\begin{enumerate}
    \item Input the reported type profile \(\maskprofile\).
    \item We can find the highest bidder among the visible agents \(x^*=\arg\min_{x\in V(G_s)}{v'_x}\). This agent wins the item. Formally\footnotemark{}, \(\pi_{x^*}=1\). She also pays the second-highest price among all visible agents \(t_{x^*}=\maxbid{V(G_s)\setminus \alpha(x)}\).
    \item The payment function of a buyer \(x\neq x^*\) is \(t_x=\maxbid{V(G_s)\setminus \alpha(x)}-\maxbid{V(G_s)}\). Note that this value is equal to the decrease in social welfare due to the participation of \(x\), which is non-positive (i.e., a reward instead of a payment).
\end{enumerate}
\end{framed}

\footnotetext{We omit the argument \((\maskprofile)\) in \(\pi_i(\maskprofile)\) and \(t_i(\maskprofile)\) when presenting mechanisms for better readability.}

We can observe that, in VCG, if an agent \(x\) receives a participation reward, she must be \emph{critical} for the diffusion to \(x^*\): Since \(x^*\in \alpha(x)\), \(s\) cannot reach \(x^*\) without the participation of \(x\). This is the exact definition of \(x \dom y\). Therefore, \(C_{x^*}\) is the exact set of agents that can receive rewards. 

In \cite{li:IDM}, Li et al. observed that VCG mechanism is IC and has optimal social welfare, but may have a budget deficit because of the payments made to each intermediate buyer. They proposed Information Diffusion Mechanism (IDM) to tackle with this problem. IDM rewards the same set of agents as VCG, but with different amounts to remain within budget.

\begin{framed}
 \label{mechanism:IDM}
 \noindent\textbf{Information Diffusion Mechanism (IDM)}\\
 \rule{\textwidth}{0.5pt}

\begin{enumerate}

    \item \label{step:idm:1} Given the reported type profile \(\maskprofile\), we can find the visible buyer with the highest bid, denoted by \(x^*\), where \(v'_{x^*}=\maxbid{V(G_s(\maskprofile))}\).
    
    \item \label{step:idm:2} Compute the dominator sequence \(C_{x^*}=\{c_0=s,c_1,\dots,c_\ell=x^*\}\). Specifically, we have \(c_j = idom(c_{j+1})\), for all \(0 \le j < \ell\).

    \item \label{step:idm:3} We define \(p_j\) as the optimal social welfare when \(c_j\) is not present, and \(q_j\) as the optimal social welfare when all agents further in the \(C_{x^*}\) sequence are not present.
\begin{align*}
    p_j &= \maxbid{V(G_s)\setminus \alpha(c_j)},\\
    q_j &= \left\{
            \begin{aligned}
            &p_{j+1}       & \text{for }j < \ell,\\
            &\maxbid{V(G_s)} & \text{for } j = \ell.
            \end{aligned}
        \right.
\end{align*}
    \item \label{step:idm:4} Pick a \(c_d\) satisfying \(v'_{c_d}\geq q_d\); when there are multiple candidates, pick the one with the lowest index \(d\). Since \(q_\ell=\maxbid{V(G_s)}=v'_{c_\ell}\), such \(c_d\) must exist. This agent wins the item with \(\pi_{c_d}=1\).
    \item \label{step:idm:5} The payment of \(c_j\) is calculated as follows.
\[
t_{c_j}(\vek\theta')=\left\{
\begin{aligned}
p_j &{} - q_j &\text{for } & 1 \le j < d,\\
p_j &{}       &\text{for } & j = d.
\end{aligned}
\right.
\]

\item \label{step:idm:6} The payment and allocation function of all other buyers are zero. The revenue of the seller is \(t_s(\vek\theta')=-\sum_{1\leq j\leq d}{t_{c_j}}\).
\end{enumerate}
\end{framed}

It is proven in \cite{li:IDM} that IDM is IC, IR, and non-deficit.

\section{Vulnerability of NRM and FPDM}\label{app:attack}
In this section, we use detailed examples to illustrate our observations that NRM~\cite{10.5555/3398761.3398947} and FPDM~\cite{zhang2021fixed} are not SP.

\begin{figure}[bht]
    \centering
    \includegraphics[width=0.66\linewidth]{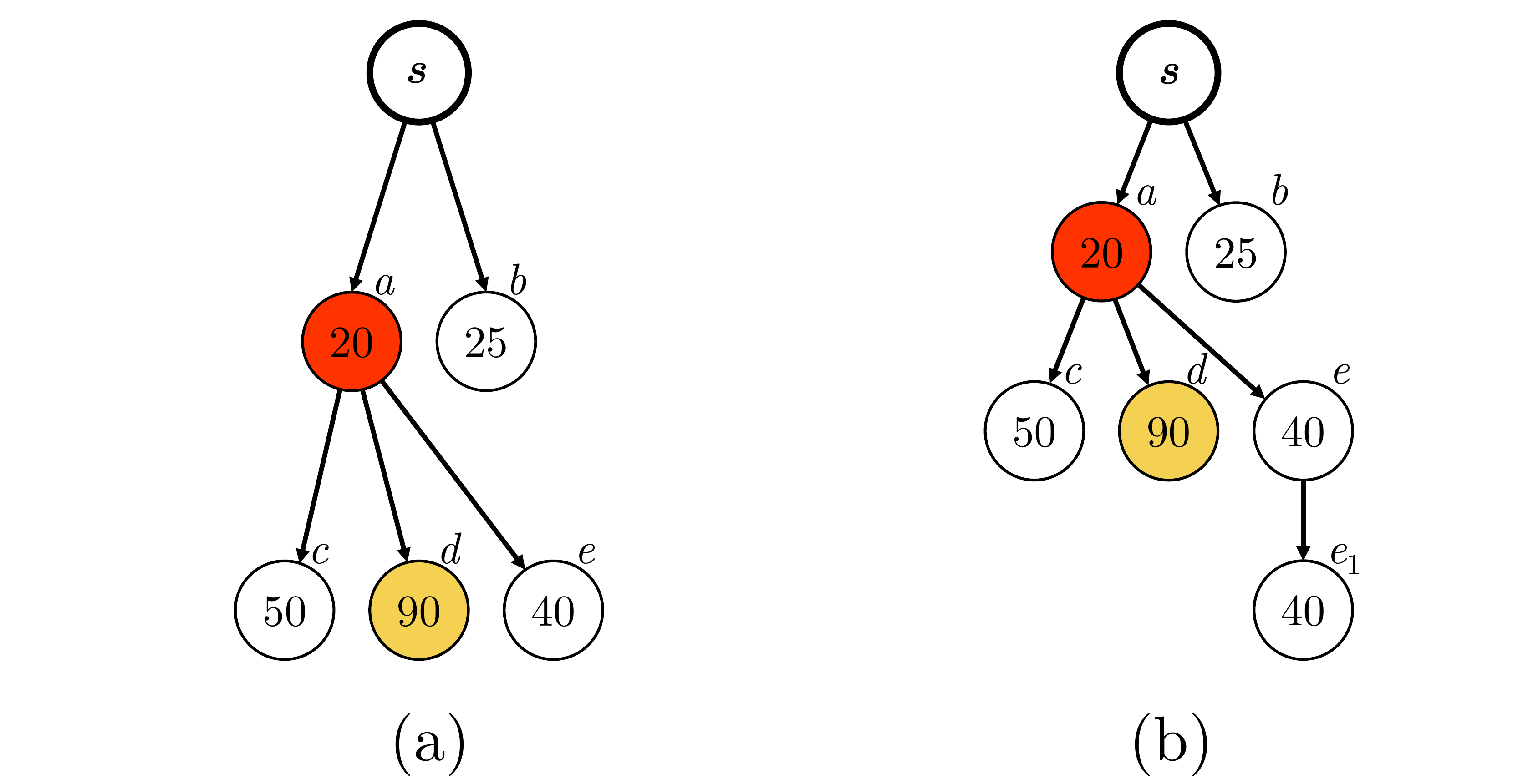}
    \caption{Sybil attack counterexamples of NRM and FPDM.}
    \label{fig:NRM}
\end{figure}

\begin{observation}
NRM is not Sybil-proof.
\end{observation}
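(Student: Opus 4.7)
The plan is to disprove Sybil-proofness by exhibiting a concrete counterexample, in line with how the analogous Observation~\ref{thm:nonSP-VCG} was handled for VCG and IDM. Since NRM is a ``neighbour reward'' mechanism that compensates intermediate agents for the marginal impact of their neighbourhood on the winner's identity or payment, the natural attack is for some intermediate agent $a$ to fabricate extra identities that appear as her neighbours (or inserted along the path from $a$ to a high-value downstream buyer), thereby inflating the quantity on which her reward depends without changing the set of real agents who can actually bid.

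Concretely, I would use a small network of the shape suggested by Figure~\ref{fig:NRM}: a seller $s$ with a single neighbour $a$, who in turn connects $s$ to a high-value subtree containing the eventual winner. First I would compute, under truthful reporting, the allocation and payments prescribed by NRM: the item goes to the highest downstream bidder, and $a$ receives some reward $R$ reflecting her role as the unique path from $s$ to the winner. Next, I would consider the Sybil report in which $a$ creates one or two false-name identities $a_1,a_2$ lying between $a$ and her true neighbours, keeping the set of real participants and their reports unchanged. Under this manipulated graph, the same real buyer wins at the same or higher price, but the rewards paid to the intermediate chain are split across $\{a,a_1,a_2\}$ \emph{in a way that strictly increases} the sum collected by the attacker; this is because NRM rewards each neighbour-hop of the critical path rather than the path as a whole. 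Comparing $u_a$ truthful with $u_a+u_{a_1}+u_{a_2}$ under the attack, I would read off a strict inequality, contradicting the definition of SP.

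The main obstacle, and the only nontrivial bookkeeping, is to ensure that the aggregate Sybil utility beats the truthful utility rather than merely matching it (some rewards split additively so that nothing is gained) and that the constructed $a_j$'s satisfy the adversarial model $r'(a_j)\subseteq \phi\cup r(a)$ and carry no incoming edges from non-Sybil agents. I would handle this by choosing the bids of the downstream real buyers so that the reward as a function of chain length is strictly increasing at the relevant configuration, and by routing every $a_j$'s outgoing edges only to other Sybils or to $a$'s genuine neighbours. Finally, I would note that truthful participation of $a$ herself (reporting $r(a)$ faithfully) is still permitted inside the Sybil report, so the counterexample fits the SP definition exactly, and the same template will subsequently be used to derive the analogous counterexample for FPDM.
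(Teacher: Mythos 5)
Your overall strategy (a concrete counterexample, in the spirit of Observation~\ref{thm:nonSP-VCG}) is the right one, but there is a genuine gap: you never pin down what NRM actually pays, and the step on which the whole argument rests --- that inserting Sybil identities along the critical chain makes the rewards ``split across $\{a,a_1,a_2\}$ in a way that strictly increases'' the attacker's total --- is asserted rather than derived. NRM is not a per-hop critical-path reward scheme; it is a redistribution mechanism in which the surplus is shared among agents in proportion to the number of nodes lying in their dominated subtrees. Consequently, your proposed profit mechanism (``rewards each neighbour-hop of the critical path rather than the path as a whole'') does not describe NRM, and without the mechanism's definition plus an explicit computation of payments before and after the attack, the required strict inequality is exactly the point left unproven. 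You even flag this as ``the main obstacle'' and propose to resolve it by tuning downstream bids, but that tuning cannot be carried out without knowing the reward rule, so the proof does not go through as written.

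The paper's counterexample exploits the proportional-redistribution feature directly, and with a different attacker: in Figure~\ref{fig:NRM}(a) the agent $e$ (who is not on the path from $s$ to the winner $d$) receives a redistribution share of $\frac{1}{3}\times(50-25)$ under truthful reporting; by creating a single Sybil identity $e_1$, the fraction of nodes credited to $e$'s subtree rises and her share becomes $\frac{2}{4}\times(50-25)$, a strict gain while the allocation and the winner's payment are unchanged. So the profitable manipulation is inflating the \emph{node count} attributed to an agent, not lengthening or splitting a reward-bearing chain. To repair your proposal you would need to state NRM's allocation, payment, and redistribution rules, fix concrete bids, and verify numerically that the attacker's aggregate utility strictly increases --- which is precisely what the paper's two-line computation does.
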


We show this by the counterexample in Figure~\ref{fig:NRM}(a).
The seller's neighbors are $a$ and $b$, and buyer $a$ knows the existence of $c$, $d$, and $e$. Their private values are $20$, $25$, $50$, $90$, and $40$ respectively. If everyone reports truthfully, $d$ will get the item. The utilities of $a$ and $b$ are zero. Buyers $c$ and $e$ get \(\frac{1}{3} \times (40 - 25) = 5\) and \(\frac{1}{3} \times (50 - 25) = \frac{25}{3}\) units of money respectively. The winner $d$ pays $50$ for the item and receives $\frac{1}{3} \times (40 - 25) = 5$ units of money in the redistribution stage, totaling a utility of \(90 - 50 + 5 = 45\).

If $e$ creates a Sybil identity \(e_1\) to raise the percentage of nodes inside its dominant tree, as shown in Figure~\ref{fig:nonSP}(b), it will receive \(\frac{2}{4} \times (50 - 25) = \frac{25}{2}\) in the redistribution stage, leading to a higher utility for $e$. 

\begin{observation}
FPDM is not Sybil-proof.
\end{observation}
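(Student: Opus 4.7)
The plan is to prove this by exhibiting an explicit counterexample, in the same spirit as the NRM observation just above. Concretely, I would construct a small social network and a private-value profile, compute each relevant buyer's utility under FPDM when everyone reports truthfully, and then show that some intermediate buyer can strictly increase her utility by inserting one or more Sybil identities on the path to a high-value downstream buyer.

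To set up the example, I would first recall FPDM's key ingredients from \cite{zhang2021fixed}: a pre-specified fixed reserve price $p^*$, a winner chosen as the first agent along a diffusion path whose bid is at least $p^*$, and a rebate to intermediate brokers whose size depends on how many distinct downstream participants they contributed. Because the broker's reward grows with the apparent downstream population that is credited to her (and because the fixed price $p^*$ insulates the winner's payment from the manipulation), the incentives line up in the attacker's favor. The expected counterexample is structurally similar to Figure~\ref{fig:NRM}(a,b): a single broker $x$ sits between the seller's neighborhood and a high-value buyer $y$, so that $x \dom y$ and every path from $s$ to $y$ passes through $x$.

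I would then compute two cases. In the truthful case, I would report $x$'s true neighbor set and verify that the winner is $y$ (or whichever agent FPDM selects), that $x$ receives a rebate $R_x$ computed from her actual downstream, and that $y$ pays the fixed price. In the Sybil case, $x$ injects $k$ fake identities $x_1,\dots,x_k$, each declaring a bid below $p^*$ and neighboring $y$ (or each other), so that the reachable graph appears strictly larger while the allocation to $y$ is preserved. Since FPDM rewards brokers monotonically in downstream contribution, $x$'s rebate strictly increases, producing a successful attack; I would tabulate the exact utilities to exhibit the gap numerically, analogously to the $\tfrac{1}{3}\to\tfrac{1}{2}$ jump shown for NRM.

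The main obstacle will be pinning down the precise form of FPDM's rebate rule so that the chosen Sybil topology provably increases the reward rather than triggering a corner case that cancels it out (for instance, mechanisms that only count dominator-tree children behave differently from ones that count all descendants). To guard against this, I would parameterize the counterexample, place the Sybils so that they are unambiguously in $x$'s dominated set $\alpha(x)$, and choose their bids strictly below both $p^*$ and the true value of $y$, so they cannot be selected as the winner or alter the winner's payment. With those constraints, the only quantity affected by the attack is $x$'s rebate, which is what must strictly increase to complete the proof.
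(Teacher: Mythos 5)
There is a genuine gap: your entire attack hinges on an assumed feature of FPDM --- a broker rebate that grows monotonically with the number of downstream participants credited to her --- and you never establish that FPDM actually has such a rule; you even flag this as your ``main obstacle'' without resolving it. That assumption is borrowed from the NRM attack (where the redistribution share really is a count-based fraction), but it is the wrong analogy for FPDM. In the paper's counterexample, the vulnerability is in the \emph{allocation} among claimers, not in any broker rebate: with the fixed price for branch $a$ set to $30$ in the network of Figure~\ref{fig:NRM}, the claimers $c$, $d$, $e$ (all bidding above the fixed price) have equal opportunities to win, and when $e$ adds a single Sybil identity $e_1$ she becomes the claimer with the largest number of neighbors and wins the item outright, so her expected utility jumps from $\tfrac{1}{3}(40-30)$ to $40-30$. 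Note that in the paper's attack the Sybil attacker is a claimer who changes who wins, whereas in your setup you deliberately freeze the winner and the winner's payment and place low-bid Sybils downstream of a broker; on the actual FPDM such Sybils need not change any payment at all, so the ``strict increase in $R_x$'' you rely on may simply not occur.

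To repair the argument you would either have to quote FPDM's reward rule precisely and verify your monotonicity claim, or switch to the paper's attack vector: make the attacker one of several agents whose bids clear the fixed price, and use Sybil identities to win the tie-breaking/selection rule (here, neighbor count) so that the attacker's probability of obtaining the item at the fixed price strictly increases. As written, the proposal is a plan contingent on an unverified description of the mechanism rather than a complete counterexample.
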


We still use the pair of examples shown in Figure \ref{fig:NRM}. Assuming that everyone reports truthfully and the fixed price for branch \(a\) is \(30\), buyers $c$, $d$, $e$ have equal opportunities to get the item. If $e$ again uses a false-name identity $e_1$, then she will get the item because she has the largest number of neighbors among the claimers.

By creating Sybil identities, the claimer can get an advantage in competing for the item.

\section{Proofs on Sybil Clustering}\label{app:clustering}

\begin{replemma}{lem:cluster-disjoint}
${\bigcup_{x \in \Gamma(\maskprofile)} K_x}$ is a (disjoint) partition of the set $V(G_s(\maskprofile))$.
\end{replemma}
\begin{proof}
First, we prove that every \(t\in V(G_s)\) belongs to some \(K_x\). Consider any simple path \([s, u_1, u_2, \dots, u_\mu, t]\) from \(s\) to \(t\); such a path must exist because every vertex is reachable from \(s\). In the sequence \(\{u_0=s,u_1,\dots,u_\mu\}\), we can find the highest index \(i\) such that \(p_i\in \Gamma\); such \(i\) must exist because \(s\in\Gamma\). Then \([u_i, u_{i+1},\dots, u_\mu, t]\) is a \(\Gamma\)-free path from \(u_i\) to \(t\), or equivalently, \(t\in K_{u_i}\).

Then, we prove that Sybil clusters are disjoint. Assume otherwise, we have \(t\in K_x\cap K_y\) where \(x \not = y\). Furthermore, we assume that among all vertices in \(K_x\cap K_y\), \(t\) has the shortest shortest-path from \(x\). Let \(U=[x, u_1, u_2, \dots, u_\mu, t]\) be a path from \(x\) to \(t\) such that none of \(u_i\) is in \(\Gamma\), and similarly, \(W=[y,w_1,w_2,\dots, w_\omega ,t]\).
There must be a \(t'=u_i=w_j\) for some \(i,j\) because \(\Gamma\) is a \emph{meeting point closure} and \(t \not \in \Gamma\).
Such \(t'\) must be in \(K_x\cap K_y\) because \([x,u_1,u_2,\dots, u_{i-1}, t']\) and \([y,w_1,w_2,\dots, w_{j-1}, t']\) are two \(\Gamma\)-free paths. However, \(t'\) has a shorter shortest-path from \(x\), contradicting the choice of \(t\). 
\end{proof}

We can find out that any inter-cluster edge must point to the root of the latter cluster.
The proofs of these lemmas are omitted here and can be found in Appendix~\ref{app:clustering}. 
\begin{lemma}\label{lem:cluster-graph-edge}
If there is a directed edge \((t,z)\in E(G_s(\maskprofile))\) between \(t\in K_x\) and \(z\in K_y\), where \(x\neq y\), then \(z=y\).
\end{lemma}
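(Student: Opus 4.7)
The plan is to argue by contradiction, producing a $\Gamma$-free path from $x$ to $z$ that would force $z \in K_x \cap K_y$, in conflict with the disjointness established in Lemma~\ref{lem:cluster-disjoint}. So I would begin by supposing $z \neq y$. Because $z \in K_y$ and $z$ is not the root of $K_y$, the defining property of Sybil clusters says that any witnessing path from $y$ to $z$ avoids $\Gamma(\maskprofile)$ apart from $y$; in particular this forces $z \notin \Gamma(\maskprofile)$, which is the single fact about $z$ I actually need in the rest of the argument.

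Next I would build an $x$-to-$z$ path by concatenation. Using $t \in K_x$, pick any path $[x, u_1, \dots, u_\mu, t]$ witnessing membership of $t$ in $K_x$, so that none of $u_1, \dots, u_\mu, t$ lies in $\Gamma(\maskprofile)$ (the case $t = x$ simply collapses this to the trivial path, and the rest of the argument still goes through). Appending the given edge $(t,z)$ produces a walk $[x, u_1, \dots, u_\mu, t, z]$ from $x$ to $z$; its only possible $\Gamma$-vertex is $x$, because $u_1, \dots, u_\mu, t$ are known to lie outside $\Gamma$ and we have just shown $z \notin \Gamma$. If this walk happens to revisit $z$ earlier, I would shortcut at the first occurrence to get a simple path; either way the resulting simple $x$-to-$z$ path avoids $\Gamma(\maskprofile)$ except at $x$, so by definition $z \in K_x$.

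Combining $z \in K_x$ with $z \in K_y$ and the hypothesis $x \neq y$ contradicts Lemma~\ref{lem:cluster-disjoint}, and therefore $z = y$ as claimed. There is essentially no obstacle here: the substantive observation is that a non-root member of a Sybil cluster must itself lie outside $\Gamma(\maskprofile)$, and everything else is routine walk-to-path surgery.
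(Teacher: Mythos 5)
Your proof is correct, and the core construction is the same as the paper's: assume \(z\neq y\) and append the edge \((t,z)\) to a path from \(x\) to \(t\) that avoids \(\Gamma(\maskprofile)\) except at \(x\). The difference lies in how the contradiction is closed. The paper pairs this extended path with the witnessing path from \(y\) to \(z\) inside \(K_y\), observes that the two are vertex-disjoint (which implicitly uses the disjointness of \(K_x\) and \(K_y\)), and concludes that \(z\) is a meeting point of \(x\) and \(y\); the closure rule defining \(\Gamma(\maskprofile)\) then forces \(z\in\Gamma(\maskprofile)\), so \(z\) must be the root of its cluster, i.e.\ \(z=y\). You instead note that \(z\notin\Gamma(\maskprofile)\) (since \(z\neq y\) lies on a \(\Gamma\)-free witnessing path in \(K_y\)), conclude \(z\in K_x\) directly from the extended path, and contradict Lemma~\ref{lem:cluster-disjoint}. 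Both arguments are sound and of comparable length; yours leans on the already-established partition lemma and avoids invoking the meeting-point closure property a second time, while the paper's version re-derives membership in \(\Gamma(\maskprofile)\) from the definition, which makes the role of meeting points more explicit but requires the (easy) vertex-disjointness check that you get to skip.
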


\begin{proof}
We assume otherwise that \(z\neq y\). We can find two paths \(U=[x, u_1, u_2, \dots, u_\mu, t]\) and \(W=[y,w_1,w_2,\dots, w_\omega ,z]\) because \(t\in K_x\) and \(z\in K_y\). The path \(U \| z=[x, u_1, u_2, \dots, u_\mu, t,z]\) and the path \(W\) form two vertex-disjoint paths from \(x\) and \(y\) resp. to \(z\). By the definition of \(\Gamma\), we have \(z\in \Gamma\), so \(z\) must be the root of its component, which contradicts the assumption.
\end{proof}

\section{Proofs on STM}\label{app:stm}
\subsection{The Proof of Theorem~\ref{thm:main}}
To ease our expression, we categorize all buyers into four roles by their outcome: the winner \(c_d\), the brokers \(C_{c_d}\setminus\{s, c_d\}\), the shadowed \(\alpha(c_d) \cap C_{x^*} \setminus \{c_d\}\), and the other buyers considered unrelated. We assume that every buyer mentioned below is reachable from \(s\); otherwise, their reports will not affect the outcome.

Some notations are introduced in the proof.

On the dominator sequence \(C_{x^*}=\{c_0=s,c_1,\dots,c_\ell=x^*\}\), a buyer \(c_j\) is said to be a candidate winner if \(v'_{c_j}\geq q_j\). The final winner is the candidate winner with the lowest index.

It is defined previously that $$\alpha(x)=\{y\mid x\dom y\}.$$

We denote the complementary set of \(\alpha(x)\) as $$A(x)=V(G_s)\setminus \alpha(x).$$

Intuitively, set \(\alpha(x)\) represents the possible identities of \(x\) without prior information, and \(\gamma(x)\) is the subset of \(\alpha(x)\) that must not be Sybil identities of \(x\) with \(\Gamma\) given. 

\begin{lemma}\label{lem:*spdm-ir}
{\STM} is individually rational.
\end{lemma}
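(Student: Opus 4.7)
The plan is to classify each buyer by her role in the mechanism's output and verify that every role yields non-negative utility under truthful reporting. Concretely, given $\vek\theta$, STM produces a dominator sequence $C_{x^*} = \{s, c_1, \ldots, c_\ell\}$ and a winner $c_d$; every buyer $i$ falls into exactly one of three classes: (i) an ``unrelated'' or ``shadowed'' buyer, i.e., $i \notin \{c_1, \ldots, c_d\}$, for which $\pi_i = 0$ and $t_i = 0$; (ii) a broker $c_j$ with $1 \le j < d$, for which $\pi_{c_j} = 0$ and $t_{c_j} = p_j - q_j$; or (iii) the winner $c_d$, for which $\pi_{c_d} = 1$ and $t_{c_d} = p_d$. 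This partition is forced by the statement ``the payment and allocation of all other buyers are zero.''

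Class (i) gives $u_i = 0$ immediately. For class (ii), I would observe the set inclusion $V(G_s) \setminus \alpha(c_j) \subseteq (V(G_s) \setminus \alpha(c_j)) \cup \beta_j$, which by monotonicity of $\maxbid{\cdot}$ yields $q_j \ge p_j$; hence $u_{c_j} = -(p_j - q_j) = q_j - p_j \ge 0$. This is precisely the first half of the chain $p_j \le q_j \le p_{j+1}$ already flagged in the paper as responsible for non-deficiency and IR.

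For class (iii), truthful reporting gives $v_{c_d} = v'_{c_d}$, and I would split on whether $d < \ell$ or $d = \ell$. In the former subcase the selection rule forces $v'_{c_d} \ge q_d$, and the monotonicity above yields $q_d \ge p_d$, so $v_{c_d} \ge p_d$. In the latter (default-winner) subcase, $c_\ell = x^*$ is by construction the global bid maximum on $V(G_s)$, so $v'_{c_d} = \maxbid{V(G_s)} \ge \maxbid{V(G_s) \setminus \alpha(c_d)} = p_d$, again by monotonicity under set inclusion. Either way $u_{c_d} = v_{c_d} - p_d \ge 0$.

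Combining the three classes gives $u_i \ge 0$ for every buyer $i$, establishing IR. There is essentially no obstacle: the entire argument reduces to the monotonicity of $\maxbid{\cdot}$ with respect to set inclusion. The only point requiring care is treating the default-winner branch $d = \ell$ separately, since the selection inequality $v'_{c_d} \ge q_d$ is unavailable there and must be replaced by the extremality of $x^*$.
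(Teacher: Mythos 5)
Your proof is correct and follows essentially the same argument as the paper: a case split by role (shadowed/unrelated, broker, winner) combined with the inclusion $V(G_s)\setminus\alpha(c_j)\subseteq (V(G_s)\setminus\alpha(c_j))\cup\beta_j$, which gives $p_j\le q_j$ and hence non-negative utility in each case. Your explicit handling of the default-winner branch $d=\ell$ (where the selection inequality $v'_{c_d}\ge q_d$ is unavailable and one instead uses that $c_\ell=x^*$ is the global bid maximum, so $v'_{c_\ell}\ge p_\ell$) is in fact slightly more careful than the paper's proof, which simply asserts $v'_{c_d}\ge q_d\ge p_d$ without distinguishing that case.
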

\begin{proof}
Because \(V(G_s)\setminus \alpha(c_j)\subseteq (V(G_s)\setminus \alpha(c_j))\cup \setbeta{j}\subseteq V(G_s)\setminus \alpha(c_{j+1})\), we can see that \(p_j\leq q_j\leq p_{j+1}\).

When a buyer \(i\) bids truthfully, her utility is \(v'_i\pi_i - t_i=v_i\pi_i - t_i\).
If she is shadowed or unrelated, her allocation and payment are defined to be zero, thus non-negative.
If \(i\) is a broker, i.e.,  \(i=c_j\), then \(t_i=p_j-q_j \leq 0\) and \(\pi_i=0\), summing up to be a non-negative utility.
If \(i=c_d\) is the winner, \(v'_{c_d}\geq q_d\geq p_d\) is satisfied, guaranteeing \(v'_{c_d}\geq p_d=\maxbid{A(c_d)}\); therefore, the utility \(v_{c_d} - p_d = v'_{c_d} - p_d \geq 0\).
\end{proof}

\begin{lemma}\label{lem:*spdm-ic}
{\STM} is incentive compatible. 

Formally, for any type profile \(\vek\theta\), any buyer \(i\in N\) and any \(\theta'_i \in \settype{i}\) satisfying \(r'(i)\subseteq r(i)\), we have \(v_i\cdot \pi_i(\vek\theta) - t_i(\vek\theta)\geq v_i\cdot \pi_i(\maskprofile) - t_i(\maskprofile)\), where \(\maskprofile=(\theta'_i,\action{-i})\).
\end{lemma}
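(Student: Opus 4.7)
The plan is to decompose the incentive compatibility argument into two chained steps by introducing a hybrid report. For a misreport $\theta'_i = (v'_i, r'(i))$ with $r'(i) \subseteq r(i)$, define the intermediate report $\tilde\theta_i = (v_i, r'(i))$, so that it suffices to show (i) the truthful report $(v_i, r(i))$ weakly dominates $\tilde\theta_i$ (diffusion truthfulness at the truthful bid) and (ii) $\tilde\theta_i$ weakly dominates $\theta'_i$ (bid truthfulness at fixed diffusion). Step (ii) should be the easier direction: with $r'(i)$ and $\vek\theta_{-i}$ fixed, both $G_s(\maskprofile)$ and $\Gamma(\maskprofile)$ are determined independently of $i$'s bid, and if $i$ would sit on the dominator sequence at some position $c_j$, then both $p_j = \maxbid{V(G_s)\setminus\alpha(c_j)}$ and $q_j = \maxbid{(V(G_s)\setminus\alpha(c_j))\cup\setbeta{j}}$ exclude $i$'s own bid. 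The winner-selection rule $v'_{c_d}\geq q_d$ and the payment $p_d$ then give $i$ a standard second-price threshold at her position, making truthful bidding weakly optimal.

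Step (i) relies on three monotonicity properties of the mechanism in $i$'s diffusion set $r'(i)$, with $v_i$ and $\vek\theta_{-i}$ fixed: (a) $V(G_s)\setminus\alpha(i)$ and hence $p_j$ at $c_j = i$ is invariant, since $i$'s own diffusion only populates $\alpha(i)$; (b) enlarging $r'(i)$ weakly enlarges $\alpha(i)$ and $\setbeta{j}$ at $c_j = i$, so $q_j$ weakly increases; and (c) by the meeting-point closure defining $\Gamma$, adding an edge from $i$ can only add vertices to $\Gamma(\maskprofile)$. I would then case-split on $i$'s role under truthful diffusion: if $i = c_d$, her utility $v_i - p_d$ is unchanged by (a); if $i = c_j$ with $j<d$ is a broker, her utility $q_j - p_j$ can only weakly shrink under less diffusion by (b); and if $i$ is shadowed or unrelated her utility is already zero. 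Role-transitions — for instance, when under-diffusion moves $i$ from broker to shadowed or causes $x^*$ itself to shift — are bounded using the same three properties together with the individual rationality inequality $v_i \geq q_d \geq p_d$ established in the winner case.

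The hard part will be controlling the cascade of structural changes triggered by under-diffusion: the highest bidder $x^*$ may change if the old highest-bidder becomes unreachable, the dominator sequence may truncate or re-route, and the set $\Gamma$ may shrink. I expect to package these effects into a structural claim that whenever $i$ appears on the dominator sequence under both profiles her position index agrees and the prices at her position move monotonically, and whenever $i$ falls off the sequence under under-diffusion her resulting utility is zero and therefore dominated by the non-negative truthful utility guaranteed by Lemma~\ref{lem:*spdm-ir}. The Sybil-cluster viewpoint of Section~\ref{h2:clustering}, in particular the characterization of $\Gamma(\maskprofile)$ as the maximal non-Sybil set, should be the main technical lever for carrying out this monotonicity argument cleanly.
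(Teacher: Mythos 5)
Your proposal is essentially the paper's own argument: the same role-based case analysis (winner, broker, shadowed, unrelated) resting on the same three facts — invariance of \(p_j\) (since \(V(G_s)\setminus\alpha(i)\) is unaffected by \(i\)'s report), monotonicity of \(\Gamma\) and \(\setbeta{j}\) (hence \(q_j\)) in \(i\)'s diffusion, and zero utility plus IR when \(i\) leaves the dominator sequence — with the hybrid bid/diffusion decomposition being only a cosmetic restructuring of how the paper treats the joint deviation within each role case. The level of rigor you defer (the "cascade" of changes to \(x^*\), \(c_{j+1}\) and \(\Gamma\) under under-diffusion) is exactly what the paper's proof also compresses into the one-line claim that \(\tilde{\beta}_j\) shrinks while \(A(c_j)\) is unchanged, so the proposal is sound and aligned.
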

\begin{proof}
Let us split cases by the role of \(i\) when she reports truthfully.

A shadowed buyer \(c_t\) with \(d < t \le \ell\) cannot escape zero utility. Her non-diffusion may change the highest bid the mechanism can see, but as long as \(\tilde{x}^*\in \alpha(c_t)\), she will remain shadowed. Otherwise, if \(\tilde{x}^* \not\in \alpha(c_t)\), \(c_t\) will become unrelated. 

For an unrelated buyer \(i\), her diffusion choice does not affect the selection of \(x^*\) because \(i\) does not dominate \(x^*\); therefore, the only way she can become related is to bid higher than \(x^*\), which results in an unaffordable price of \(\maxbid {A(i)}\geq v'_{x^*}\geq v'_i=v_i\).

For a broker \(i=c_j\), her utility is \(\maxbid{A(c_j)\cup \setbeta{j}} - \maxbid{A(c_j)}\). If she hides some of her neighbors in \(r(c_j)\) but still be a broker, the difference between \(p_j\) and \(q_j\) will not increase because \(\setbeta[\tilde]{j}\) will become smaller but \(A(c_j)\) remains unchanged. Therefore, if \(i\) remains a broker, reporting truthfully is not worse. Changing role 
will not be good: a shadowed or unrelated buyer implies a zero utility, and being a winner lessens \(c_j\)'s utility because \(v_{c_j} < \maxbid{A(c_j)\cup \setbeta{j}}\).

For a winner \(i=c_d\), the utility is independent of her report as long as she retains her winner role. Becoming shadowed or unrelated results in a zero utility. Becoming a broker by lowering her bid to \(q_d\) will lower her utility from \(v_i-p_d\) to \(\tilde q_d-p_d\),  where \(\tilde q_d < v_i\).
\end{proof}

Lemma \ref{lem:*spdm-ic} implies incentive compatibility.
\begin{lemma}\label{lem:*spdm-sp}
Under {\STM}, using Sybil identities does not increase utility.

Formally, for all \(i\in N\), all \(\action{-i}\), all \(\theta_i'\)
and all \(\theta'_{i_1},\dots,\theta'_{i_k}\), there exists a report \(\tilde{\theta}'_i\) such that \(v_i\cdot\pi_i(\tilde{\theta}'_i ,\action{-i}) - t_i(\tilde{\theta}'_i ,\action{-i})\geq v_i\cdot \pi_i^+ - t_i^+\), where \(\maskprofile=(\action{-i},\theta'_{i},\theta'_{i_1},\dots,\theta'_{i_k})\), \(\pi_i^+=\pi_i(\maskprofile)+\sum_{1\leq j\leq k}\pi_{i_j}(\maskprofile)\), and \(t_i^+=t_i(\maskprofile)+\sum_{1\leq j\leq k}t_{i_j}(\maskprofile)\).
\end{lemma}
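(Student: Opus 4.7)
The plan is to construct, for any Sybil attack by $i$ with identities $\phi = \{i, i_1, \ldots, i_k\}$, a Sybil-free alternative report $\tilde{\theta}'_i$ that gives $i$ at least as much utility as the combined utility of all her identities. The central structural observation is that on the dominator sequence $C_{x^*}$ in the Sybil-attack graph, the identities of $i$ must appear as a contiguous block $c_{j_1}, c_{j_1+1}, \ldots, c_{j_2}$ with $c_{j_1}=i$. This is because a Sybil's only in-neighbors are identities of $i$, so its immediate dominator must also be an identity of $i$; once a non-identity appears on the sequence, no later vertex can be a Sybil, and the block must start with $i$ itself (which alone has non-identity dominators).

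Using this block structure, I would case-split on where the winner $c_d$ falls relative to $[j_1, j_2]$. (i) If no identity lies on $C_{x^*}$, or if $d<j_1$ so all identities are shadowed past the winner, the combined utility is zero and the truthful alternative $\tilde{\theta}'_i = \theta_i$ gives non-negative utility by Lemma~\ref{lem:*spdm-ir}. (ii) If all identities in the block are brokers ($d>j_2$), the combined reward $\sum_{m=j_1}^{j_2}(q_m - p_m)$ telescopes via $q_m \le p_{m+1}$ (from the IR proof) to at most $q_{j_2} - p_{j_1}$. (iii) If an identity $c_d$ wins with $j_1\le d\le j_2$, the combined utility is $\sum_{m=j_1}^{d-1}(q_m - p_m) + (v_i - p_d)$, which the same telescoping and $q_{d-1}\le p_d$ bound by $v_i - p_{j_1}$.

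For cases (ii) and (iii) I set $\tilde{r}'(i) = \bigcup_{j\in \phi} r'(i_j) \setminus \phi$ and let $\tilde{v}'_i = v'_{x^*}$ when $x^*\in\phi$ (so the top bid is inherited by $i$) and $\tilde{v}'_i = v_i$ otherwise. The reachable reported graph under $\tilde{\vek{\theta}}' = (\tilde{\theta}'_i,\vek{\theta}'_{-i})$ is then precisely the contraction of $G_s(\maskprofile)$ that collapses all Sybils into $i$. Three structural facts close the argument: (a) $A^{\mathrm{coal}}(i) = A^{\mathrm{Sybil}}(c_{j_1})$, so $i$'s buying price in the coalesced graph equals $p_{j_1}$; (b) $\Gamma^{\mathrm{coal}} \supseteq \Gamma^{\mathrm{Sybil}}$, because any two vertex-disjoint paths to a meeting point in the Sybil graph can share $i$ through at most one of them, and hence remain vertex-disjoint after contraction; (c) $i$'s selling price $\tilde{q}_i$ in the coalesced graph is at least $q_{j_2}^{\mathrm{Sybil}}$. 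In case (ii) this gives a broker reward $\tilde{q}_i - p_{j_1} \ge q_{j_2} - p_{j_1}$; in case (iii), either $i$ wins in coalesced with utility $v_i - p_{j_1}$ or, when $v_i < \tilde{q}_i$, she brokers with reward $\tilde{q}_i - p_{j_1} > v_i - p_{j_1}$. Moreover, no earlier $c_m$ with $m<j_1$ can steal the winner slot in coalesced, because $A(c_m)$ is unchanged and $\Gamma$ can only grow, so $\tilde{q}_m \ge q_m^{\mathrm{Sybil}}$ keeps their winner condition violated.

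The main obstacle will be establishing fact (c). The subtlety is that a malicious $i$ can inflate $q_{j_2}^{\mathrm{Sybil}}$ by placing high bids on her Sybil ancestors of $c_{j_2}$ in the dominator tree, which lie in $A(c_{j_2})$ but disappear under contraction. Handling this requires a tighter accounting: such Sybil bids are bounded by $v'_{x^*}$, and any mass they add to $q_m$ is matched by identical mass in $p_{m+1}$ (through the same Sybil appearing in both $A$ sets), so the increments cancel inside the telescoping sum and the effective bound $\sum(q_m-p_m)$ is still dominated by the coalesced quantity $\tilde{q}_i - p_{j_1}$. Once this delicate cancellation is verified together with (a) and (b), the utility inequality in the lemma follows immediately, completing the Sybil-proofness argument.
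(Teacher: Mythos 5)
Your skeleton coincides with the paper's own proof: the same coalesced report \(\tilde r(i)=\bigcup_{x\in\phi}r'(x)\setminus\phi\), the same observation that \(\phi\cap C_{x^*}\) occupies a contiguous block of the dominator sequence starting at \(i\), and the same case split on whether the block contains the winner. The degenerate cases and the winner-in-block case are sound (there the payments telescope to \(p_{j_1}\) because every Sybil lies in \(\alpha(c_{j_1})\), exactly as in the paper). The genuine gap is the all-brokers case, and it sits exactly where you flag it: your fact (c), \(\tilde q_i \ge q_{j_2}\), is simply false, not merely delicate. Take \(r(s)=\{a,b\}\) with \(v_b=10\), \(v_a=1\); let \(a\)'s only neighbor be \(i\) (true value \(0\)), and let \(i\)'s only real neighbor be \(h\) with bid \(100\). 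Let \(i\) create Sybils \(y\) (bid \(0\)) and \(z\) (bid \(99\)) and report edges \(i\to y\), \(i\to z\), \(y\to h\). The dominator sequence is \((s,a,i,y,h)\), the block \(\{i,y\}\) consists of brokers, and the last block member has \(q_{j_2}=99\) because \(z\in A(y)\); after coalescing (\(i\) reports only \(h\)) her selling price is \(\tilde q_i=10\). Moreover the actual Sybil utility is \(0\) while your intermediate bound \(q_{j_2}-p_{j_1}=89\), so the failure cannot be repaired by strengthening (c): the chain ``Sybil utility \(\le q_{j_2}-p_{j_1}\le \tilde q_i-p_{j_1}\)'' is already broken at the first step, because the coarse telescoping \(q_m\le p_{m+1}\) discards the fact that the attacker's own inflated bids appear in both \(A\)-sets.

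What you describe as the ``delicate cancellation'' is therefore not a finishing touch but the entire content of this case, and your sketch asserts it rather than proves it. The paper never uses the coarse bound: it sets \(B_m=\beta_{j_1}\cup\cdots\cup\beta_m\), notes \(A(c_{j_1})\cup B_{m-1}\subseteq A(c_m)\), and applies \(\max\{x,y,z\}-\max\{y,z\}\le\max\{x,y\}-\max\{y\}\) to replace each broker reward \(\maxbid{A(c_m)\cup\beta_m}-\maxbid{A(c_m)}\) by \(\maxbid{A(c_{j_1})\cup B_m}-\maxbid{A(c_{j_1})\cup B_{m-1}}\); these telescope to \(\maxbid{A(c_{j_1})\cup B_{j_2}}-\maxbid{A(c_{j_1})}\). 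It then shows \(B_{j_2}\) contains no Sybil identity of \(i\) (each \(\beta_m\) consists of vertices dominated by some member of \(\Gamma\) strictly below \(c_m\), and no such vertex dominates a Sybil of \(i\)) and \(B_{j_2}\subseteq A(c_{j_2+1})\), so \(B_{j_2}\) is contained in the set still credited to \(i\) as a broker in the coalesced graph; this containment, not any comparison between \(\tilde q_i\) and \(q_{j_2}\), closes the case. Until you carry out an argument of this kind, the all-brokers case of your proof remains open.
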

\begin{proof}
Let \(\phi=\{i, i_1, \dots, i_k\}\) be the set of both real identities that \(i\) uses. We can take the report \(\tilde{\theta}'_i =(\tilde{v}'_i,\tilde{r}(i))\), where \(\tilde{v}'_i = \max_{x\in \phi} v'_x \), and \(\tilde{r}(i) = \bigcup_{x\in\phi}r'(x)\setminus\phi\).

Set \(\Gamma\) remains unchanged because the externally provided \(\Gamma_0\) and \(r(s)\) are constant.

Assume \(i\) has a non-zero utility (otherwise, it's covered by Lemma~\ref{lem:*spdm-ir}), then at least one identity \(x \in\phi\) is on the dominator sequence \(C_{x^*}\). Using graph-theoretic properties of Sybil identities, it is easy to find that \(\phi\cap C_{x^*}\) occupies a consecutive sub-sequence of the dominator sequence \(C_{x^*}\), namely \([c_a, c_{a+1},\dots,c_b]\) with \(c_a=i\).

The proof is given by cases on the roles of the identities of \(i\).

Case 1: if \([c_a,\dots,c_b]\) contains the winner, or equivalently \(a \leq d \leq b\).
When {\STM} is performed on the new reported profile \((\tilde{\theta}_i,\action{-i})\), \(\tilde{x}^*\) either equals to \(i\) (if \(x^*\in\phi\) before) or remains unchanged (and \(i\) remains on the dominator sequence as \(\tilde{x}_a\)). Either way, \(i\) will be the new winner since \(\tilde{v}'_i = \max_{x\in\phi} v'_{x} \geq v'_{c_d}\geq \maxbid{A(c_d)\cup \setbeta{d}}\geq \maxbid{\tilde{A}(\tilde{c}_a)\cup \setbeta[\tilde]{a}}\). This means \(i\) still wins the item. Considering
\begin{align*}
    t_i^+
    ={}& \sum_{a\leq j\leq d}{t_{c_j}}\\
    ={}& \sum_{a\leq j < d}{\left(p_j-q_j\right)}+p_d\\
    ={}& p_a+\sum_{a\leq j < d}{\left(p_{j+1}-q_j\right)}\\
    \geq{}& p_a=\maxbid{A(i)}=\maxbid{\tilde{A}(i)}=\tilde{t}_i,
\end{align*}
the payment \(i\) receives has a non-negative gain when she does not perform a Sybil attack.

Case 2: if none of \(c_a, c_{a+1},\dots,c_b\) is the winner, they must be all brokers.
Under the new report profile {\((\tilde{\theta}_i,\action{-i})\)}, \(x^*\) remains unchanged, and \(i\) will be a broker \(\tilde{c}_a\). She will become the immediate dominator of \(c_{b+1}\) because \(c_{a+1}, \dots, c_b\) are removed, and therefore \(\tilde{c}_{a+1}=c_{b+1}\). Because \(\tilde\alpha(\tilde{c}_{a+1})=\alpha(c_{b+1})\) remains unchanged, we can see \(\tilde{A}(\tilde{c}_{a+1})=A(c_{b+1})\setminus (\phi \setminus \{i\})\). In addition to \(\tilde{v}'_{i} = \maxbid{\phi}\), we have \(\maxbid{\tilde{A}(\tilde{c}_{a+1})}=\maxbid{A(c_{b+1})}\).

For \(j = a, \dots, b\) we can define \(B_j = \setbeta{a}\cup \setbeta{a+1}\cup\dots\cup \setbeta{j}\). By induction, we can prove that \(B_{j-1} \subseteq B_j\subseteq A(c_j)\cup \beta_{j}\). From the inequality on real numbers \(\max\{x,y,z\}-\max\{y,z\}\leq \max\{x,y\}-\max\{y\}\) we can deduce that 
\begin{align*}
&\maxbid{A(c_j)\cup \setbeta{j}}-\maxbid{A(c_j)}\\
\leq{}& \maxbid{A(c_a)\cup B_{j-1}\cup\setbeta{j}}-\maxbid{A(c_a)\cup B_{j-1}}\\
={}&\maxbid{A(c_a)\cup B_{j}}-\maxbid{A(c_a)\cup B_{j-1}}.\\
\end{align*}
Therefore,
\begingroup
\allowdisplaybreaks
\begin{align*}
    t_i^+
    ={}&    \sum_{a\leq j\leq b}{t_{c_j}}\\
    ={}&    \sum_{a\leq j\leq b}{\left(\maxbid{A(c_j)}-\maxbid{A(c_j)\cup \setbeta{j}}\right)}\\
    \geq{}&    \maxbid{A(c_a)}-\maxbid{A(c_a)\cup B_a}\\
       &    +\sum_{a < j\leq b}{\left(\maxbid{A(c_a)\cup B_{j-1}}-\maxbid{A(c_a)\cup B_j}\right)}\\
    ={}&    \maxbid{A(c_a)}-\maxbid{A(c_a)\cup B_b}.
\end{align*}
\endgroup

We are certain that \(B_b\subseteq A(c_a)\cup \gamma(c_a)\) because every \(\setbeta{j}\) is a subset of \(\gamma(c_a)\). Together with the fact that \(B_b\subseteq A(c_{b+1})\), we have \(B_b\subseteq (A(c_a)\cup \gamma(c_a))\cap A(c_{b+1})\). Therefore,
\begin{align*}
    t_i^+
    \geq{}&    \maxbid{A(c_a)}-\maxbid{A(c_a)\cup B_b}\\
    \geq{}& \maxbid{A(c_a)}-\maxbid{(A(c_a)\cup \gamma(i))\cap  A(c_{b+1})}\\
    ={}&    \maxbid{\tilde{A}(c_a)}-\maxbid{(\tilde{A}(\tilde{c}_a)\cup \tilde{\gamma}(i))\cap \tilde{A}(\tilde{c}_{a+1})}\\
    ={}&    \maxbid{\tilde{A}(c_a)}-\maxbid{\tilde{A}(\tilde{c}_a)\cup (\tilde{\gamma}(i)\cap \tilde{A}(\tilde{c}_{a+1}))}\\
    ={}&    \tilde{t}_i.
\end{align*}

In both cases, the utility achieved by not using Sybil identities is no less than using them.
\end{proof}

With Lemma \ref{lem:*spdm-ir}, \ref{lem:*spdm-ic}, and \ref{lem:*spdm-sp} above and the fact that the winner's payment and the tax between brokers are non-negative, we can conclude that

\begin{reptheorem}{thm:main}
{\STM} is non-deficit, individually rational and Sybil-proof.\qed
\end{reptheorem}

\subsection{The Proof of Lemma~\ref{lem:0spdm-zero}}
\begin{replemma}{lem:0spdm-zero}
In \emph{{\STM}} when \(\Gamma_0=\emptyset\), every buyer, except the item winner, has a zero payment, and thus, a zero utility. In an essence, every possible profit of the brokers is taxed by the seller.
\end{replemma}

\begin{proof}

When \(\Gamma_0\) is an empty set, \(\Gamma\) will only contain the children of \(s\) on the dominator tree, i.e. \(\Gamma=\{i\}\cup\{x\mid idom(x)=i\}\), so every \(\gamma(x)\) and every \(\setbeta{j}\) will be empty. For any broker \(c_j\), the set \(A(c_j)\cup \setbeta{j}\) would be equal to \(A(c_j)\), thus \(q_j=p_j\) implying that \(t_{c_j}=0\). Other buyers' utilities are defined as zero.
\end{proof}

\section{Proofs on SCM}\label{app:scm}

To investigate the properties of a randomized mechanism, we here re-express it as a probability distribution of deterministic mechanisms.

Let \(M\) be the set of all possible identities, where \[M=\{s\}\cup\left(\bigcup_{i\in N}\{i, i_1,\allowbreak i_2,\dots\}\right).\] For each pairs of identities \((x,y)\), let \(f(x,y)\) be a random variable with a uniform distribution on \((0,1)\), where all the random variables \(\{f(x,y)\mid x,y\in M\}\) are independent. When we are choosing a parent in \(T_H\) for a vertex \(x\) in \(\Gamma(\maskprofile)\), we choose the vertex \(y\) such that \((y,x)\in E(T)\) and  \(dis_y(H)+1=dis_x(H)\); if multiple such candidates exist, we choose the one with the lowest \(f(x,y)\). The shortest-path tree \(T_H\) formed by this process has the same distribution as in section \ref{h1:scm}.

Because the distribution of \(f\) is independent of the input of the mechanism, we can perceive {\SCM} as a family of deterministic mechanisms, with one \(\mathrm{SCM}_f\) for every possible \(f : M\times M\to (0,1)\). Throughout our analysis, we will prove that some inequalities hold for every \(\mathrm{SCM}_f\), which implies that incentive compatibility and Sybil-proofness hold uniformly. Furthermore, we will only consider cases where all the values of \(f(x,y)\) (\(x \in M, y \in M\)) are distinct because the event that some two \(f\) values coincide will happen with a probability of \(0\).

\begin{lemma}\label{lem:*aspdm-sp}
   Under all derandomized cases of {\SCM}, using Sybil identities does not increase utility.

Formally, for all \(f : M\times M\to (0,1)\) without ties, under mechanism \(\mathrm{SCM}_f=(\pi,t)\), for all \(i\in N\), all \(\action{-i}\),
and all \(\theta'_{i},\theta'_{i_1},\dots,\theta'_{i_k}\), there exists a report \(\tilde{\theta}'_i\)
such that \(v_i\cdot \pi_i(\tilde{\theta}'_i ,\action{-i}) - t_i(\tilde{\theta}'_i ,\action{-i})\geq v_i\cdot \pi_i^+ - t_i^+\), where \(\maskprofile=(\action{-i},\theta'_{i}, \theta'_{i_1},\dots,\theta'_{i_k})\), \(\pi_i^+=\pi_i(\maskprofile)+\sum_{1\leq j\leq k}\pi_{i_j}(\maskprofile)\), and \(t_i^+=t_i(\maskprofile)+\sum_{1\leq j\leq k}t_{i_j}(\maskprofile)\).
\end{lemma}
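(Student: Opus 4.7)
The plan is to reduce Sybil-proofness of $\mathrm{SCM}_f$ to the already-established Sybil-proofness of STM (Lemma~\ref{lem:*spdm-sp}), by exploiting a structural invariance of SCM's clustering step under Sybil attacks. Following the pattern of the STM proof, I would take the aggregated report $\tilde{\theta}'_i = (\tilde{v}'_i, \tilde{r}(i))$ with $\tilde{v}'_i = \max_{x \in \phi} v'_x$ and $\tilde{r}(i) = \bigcup_{x \in \phi} r'(x) \setminus \phi$, where $\phi = \{i, i_1, \dots, i_k\}$, and show that the utility under $(\tilde{\theta}'_i, \action{-i})$ weakly dominates the total utility $v_i \pi_i^+ - t_i^+$ delivered by the Sybil attack.

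The key structural claim to establish is that the non-Sybil set $\Gamma$ and the reconstructed graph $H$ produced by SCM coincide in the two scenarios. Since every path from $s$ to a Sybil identity must pass through $i$, two vertex-disjoint paths from $\Gamma$-vertices cannot both traverse the Sybil part of the graph; hence the meeting-point closure generating $\Gamma$ is unaffected by the Sybil identities, and $\Gamma_1 = \Gamma_2$. For the clustering, all identities in $\phi$ land in a single cluster whose root in $\Gamma$ is the nearest $\Gamma$-ancestor of $i$; and by the construction of $\tilde{r}(i)$, the inter-cluster edges leaving this cluster in the Sybil-attacked graph reach the same external cluster roots as the edges leaving $i$ in the aggregated graph. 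Hence $H_1 = H_2$, and because $f$ is fixed, the derandomized shortest-path tree $T_H$ coincides in both scenarios.

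With $T_H$ fixed, the subgraphs $\hat{G}_1$ and $\hat{G}_2$ differ only within the cluster containing $i$: $\hat{G}_1$ additionally carries the Sybil identities of $\phi$ together with their incident intra-cluster edges. Thus $\hat{G}_1$ is precisely what $i$ would obtain by mounting the same Sybil attack directly on $\hat{G}_2$. Since step~4 of SCM simply runs STM on $\hat{G}$, I then invoke Lemma~\ref{lem:*spdm-sp} applied to STM on the input $\hat{G}_2$: the aggregated truthful report (which matches $\tilde{\theta}'_i$) weakly dominates the total Sybil utility already at the STM layer, and therefore also at the level of $\mathrm{SCM}_f$. Because this argument works uniformly for every tie-free $f$, Sybil-proofness of SCM follows.

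The main obstacle is rigorously verifying the structural invariance, in particular the equality $\Gamma_1 = \Gamma_2$ and the matching of inter-cluster edges out of the cluster containing $\phi$. I expect this to require a careful induction on the meeting-point closure, repeatedly invoking the fact that $i$ dominates every Sybil identity in $\phi$ and that Sybil identities receive no external incoming edges. Once this invariance is settled, the reduction to Lemma~\ref{lem:*spdm-sp} is essentially mechanical.
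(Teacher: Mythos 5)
Your proposal is correct and follows essentially the same route as the paper's proof: aggregate the identities' reports into $\tilde{\theta}'_i$, argue that $\Gamma$, the cluster graph $H$, and (for fixed $f$) the shortest-path tree $T_H$ are unchanged, and then reduce to Lemma~\ref{lem:*spdm-sp} on the pruned graph. The paper packages the last step slightly differently — it defines the set $S_i$ of ``shielded'' vertices cut off by the tree selection and applies Lemma~\ref{lem:*spdm-sp} to a fictional instance with neighbor sets restricted to $r(i)\setminus S_i$ — but this is the same reduction you describe by viewing $\hat{G}_1$ as a Sybil attack mounted on $\hat{G}_2$.
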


\begin{proof}
    We construct the same non-Sybil-attacking report as Lemma~\ref{lem:*spdm-sp}, i.e., we take \(\tilde{\theta}'_i=(\tilde{v}'_i,\tilde{r}(i))\), where \(\tilde{v}'_i = \max_{x\in \phi} v'_x \), \(\tilde{r}(i) = \bigcup_{x\in\phi}r'(x)\setminus\phi\), and \(\phi=\{i, i_1, \dots, i_k\}\).
    
    Similar to Lemma~\ref{lem:*spdm-sp}, \(\Gamma\) does not change under the new report profile.
    
    Notice that given a specific \(f\) and report profile \(\action{-i}\), some inter-cluster edges from \(\phi\) to some \(y \in \Gamma\) are impossible to remain on \(\hat{G}\) due to a lower \(f\) value of another edge. We say that \(y\) is \emph{shielded} from \(i\) and denote the set of all shielded vertices \(S_i\).
    
    On the subgraph \(\hat{G}\) where STM is performed, the remaining edges from \(i\)'s identities are \(\{(x, y) \mid x \in \phi, y \in r'(x) \setminus S_i\}\). This is equivalent to a fictional case where the neighbor set of \(i\) is restricted to \(r(i)\setminus S_i\). In this fictional case, every \(x \in \phi\) reports \(r''(x) = r'(x) \setminus S_i\).
    
    Similarly, on subgraph \(\tilde{G}\), we have \(i\) report \(\tilde{r}''(i) = \tilde{r}'(i) \setminus S_i\) in the fictional case. We then can apply Lemma~\ref{lem:*spdm-sp} to the fictional case.
\end{proof}

\begin{lemma}\label{lem:*aspdm-ic}
   Under the {\SCMfull}, if a buyer is not allowed to perform Sybil attacks, then bidding and diffusing truthfully is a dominant strategy. 
   
   Formally, for all possible \(f : M\times M\to (0,1)\) without ties, under the mechanism \(\mathcal{M}_f\), for any \(i\in N\) with arbitrary \(\theta_i = (v_i, r(i))\), any \(\action{-i}\) and any \(\theta_i'\in\settype{i}'\), we have \(v_i\pi_i(\theta_i,\action{-i})-t_i(\theta_i,\action{-i})\geq v_i\pi_i(\theta_i',\action{-i})-t_i(\theta_i',\action{-i})\). 
\end{lemma}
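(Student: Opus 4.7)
The plan is to fix any $f : M \times M \to (0,1)$ without ties and prove incentive compatibility of the deterministic mechanism $\mathrm{SCM}_f$ pointwise, which immediately yields the uniform statement. The strategy is to decompose any misreport $\theta_i' = (v_i', r'(i))$ (with $r'(i) \subseteq r(i)$) through the intermediate report $(v_i, r'(i))$, splitting the analysis into a diffusion-only change (at truthful bid) followed by a bid-only change (at fixed diffusion).

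The bid-only change is clean: the reports $(v_i, r'(i))$ and $(v_i', r'(i))$ induce identical diffusion actions, so $G_s$, $\Gamma$, the Sybil clusters, the quotient graph $H$, the shortest-path tree $T_H$ (for fixed $f$), and hence $\hat{G}$ are all identical under the two profiles. Running STM on the common $\hat{G}$ with only $i$'s bid differing, Lemma~\ref{lem:*spdm-ic} applied to the inner STM step directly yields $u_i((v_i, r'(i)), \cdot) \geq u_i((v_i', r'(i)), \cdot)$.

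The main obstacle is the diffusion-only change from $\theta_i = (v_i, r(i))$ to $(v_i, r'(i))$ with $r'(i) \subsetneq r(i)$, because hiding neighbors cascades through every stage of the construction: $G_s$ may lose vertices reachable only through $i$; $\Gamma$ may lose meeting points; clusters and $H$ may reorganize; and $T_H$ may change. I would mirror the role-based analysis of Lemma~\ref{lem:*spdm-ic}: (i) a truthful-case shadowed or unrelated buyer cannot profitably flip roles, since becoming a winner requires bidding at least $p_d = \maxbid{A(c_d)}$, and under-diffusion does not lower this quantity; (ii) a truthful-case winner's payment $p_d$ depends on vertices outside her dominance in $\hat{G}$, a set that is weakly enlarged when $i$ hides outgoing edges, so her net utility is weakly reduced; (iii) a truthful-case broker $c_j = i$'s profit $\maxbid{A(c_j) \cup \setbeta{j}} - \maxbid{A(c_j)}$ can only weakly decrease when $\setbeta{j}$ shrinks and $A(c_j)$ grows.

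The delicate technical step is tracking how $T_H$ reacts to under-diffusion. For fixed $f$, $T_H$ is selected by a greedy tie-break on the $f$-priorities; removing inter-cluster edges from $H$ (as a result of $i$ hiding neighbors) can only cause $T_H$ to drop edges emanating from $i$'s cluster, which weakly shrinks the subset of $\Gamma$ that $\hat{G}$ credits to $i$'s dominance. Since $\setbeta{j}$ for the broker $i = c_j$ is monotone in this credited set, the broker's profit is monotone in her diffusion action. Combined with the role-based monotonicity observations above and STM's incentive compatibility on $\hat{G}$, this completes the argument.
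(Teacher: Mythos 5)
Your proposal follows essentially the same route as the paper's proof: fix $f$, work with the derandomized mechanism, and repeat the role-based case analysis of Lemma~\ref{lem:*spdm-ic} (shadowed, unrelated, winner, broker), the key input being that a unilateral under-diffusion can only enlarge $A(i)$ and shrink $\beta_j$ on the reconstructed graph $\hat G$; your preliminary decomposition into a bid-only deviation (under which $G_s$, $\Gamma$, the clusters, $H$, $T_H$ and hence $\hat G$ are all unchanged, so STM's bid-incentive argument applies verbatim) followed by a diffusion-only deviation is a clean refinement, whereas the paper treats both deviations at once inside the case analysis. Two remarks. First, your structural claim in the delicate step is stronger than what is true: hiding neighbors does not merely delete inter-cluster edges leaving $i$'s cluster from $H$ --- it can destroy meeting points and thereby remove vertices from $\Gamma$, merge Sybil clusters, and increase shortest-path distances in $H$, so $T_H$ can be reshuffled away from $i$'s cluster as well; what is actually needed (and what the paper itself asserts, also without a detailed argument) is only the monotonicity of the quantities entering $i$'s utility, namely that $\maxbid{A(i)}$ weakly increases and $\beta_j$ weakly shrinks under under-diffusion for fixed $f$, so your conclusion is the right one even though the intermediate claim as stated is too strong. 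Second, your sketch spells out only the role-preserving deviations; the paper's proof also checks the role-switching cases (winner becoming broker and broker becoming winner), which follow from the same monotonicity facts and should be included for completeness.
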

\begin{proof}
    Similar to the proof of Lemma~\ref{lem:*spdm-ic}, we split cases by the role of \(i\) when she reports truthfully.
    
    A shadowed buyer \(c_t\) with \(t > d\) cannot escape zero utility, regardless of what action she takes. 
    
    An unrelated buyer cannot become related unless she bids higher, which causes a negative utility.
    
    For a winner \(i = c_d\), her utility is \(v_i - \maxbid{A(i)}\). If \(i\) changes her report but still be a winner, \(v_i\) does not change, but \(\maxbid{A(i)}\) may only increase. This is because \(i\)'s under-diffusion could make \(\tilde{A}(i)\) a superset of \(A(i)\). This can only decrease the utility of \(i\). If \(i\) turns into a broker, we have \(v_i - \maxbid{A(i)} \ge q_d - p_d \ge \tilde{q_d} - \tilde{p_d}\). This is because, by misreporting, \(\beta_d\) might shrink while \({A}(x_d)\) might extend, resulting in a lower \(\tilde{q}_d - \tilde{p}_d\).
    
    For a broker \(i = c_j\), her utility is \(q_j - p_j = \maxbid{A(c_j) \cup \beta_j} - \maxbid{A(c_j)}\). Similar to the analysis above, if she remains to be a broker, its utility will not increase since \(q_j - p_j \ge \tilde{q_j} - \tilde{p_j}\). If she turns into a winner, we have \(v_i - \maxbid{\tilde{A}(i)} \le v_i - \maxbid{A(i)} \le q_j - p_j\).
    
    Therefore, a buyer cannot increase her utility by under-diffusion or misreporting her bid.
\end{proof}

With Lemma \ref{lem:*aspdm-sp} and Lemma \ref{lem:*aspdm-ic}, we can conclude that 
\begin{reptheorem}{thm:scm-main}
{\SCM} is non-deficit, IR and Sybil-proof. \qed
\end{reptheorem}

\section{The Proofs of Comparison among Mechanisms}\label{app:compare}

\begin{reptheorem}{thm:compare_STM&SCM}
When \(\Gamma_0=\emptyset\), it is guaranteed that
\begin{gather*}
\revenue^\mathrm{STM}(\vek\theta) \ge \revenue^\mathrm{SCM}(\vek\theta) \ge \revenue^\mathrm{NSP}(\vek\theta),\\
\welfare^\mathrm{STM}(\vek\theta) \ge \welfare^\mathrm{SCM}(\vek\theta) \ge \welfare^\mathrm{NSP}(\vek\theta)
\end{gather*}
for all \(\vek\theta\).
\end{reptheorem}
\begin{proof}
Recall that $G_s(\maskprofile)$ is the graph of reachable vertices under type profile \(\vek\theta\), and \(\hat{G}\) be the graph constructed by the edge-removing process of SCM. 

We first compare STM and SCM.

    The highest bidder \(x^*\) does not change between SCM and STM. Because \(\hat{G}\) is a spanning subgraph of \(G_s\), the dominator sequence of \(x^*\) on \(G_s\) is a subsequence of its dominator subsequence on \(\hat{G}\), and we have \(\alpha_{G_s}(x)\subseteq \alpha_{\hat{G}}(x)\) for every buyer \(x\). 
    
    Any candidate winner \(x\) under STM satisfies \(v_x\geq \maxbid{V(G_s)\setminus \alpha_{G_s}(x)}\), so she is still a candidate winner under SCM. The candidate winner with the lowest index has the lowest bid because set \(\alpha(c_j)\) is monotonically decreasing by \(j\).
    Therefore, the winner of SCM has the lowest bid among all candidate winners of SCM, hence not higher than the winner of STM.
    
    Under both STM and SCM, the winner \(x\) pays \(\maxbid{V(G_s)\setminus \alpha(x)}\). We say the winner \(x_1\) of SCM pays a lower price than the winner \(x_2\) of STM because \(x_1 \in \alpha_{\hat{G}}(x_2)\) and \(\alpha_{G_s}(x_1) \subseteq \alpha_{\hat{G}}(x_1) \subseteq \alpha_{\hat{G}}(x_2)\). With an empty \(\Gamma_0\), STM does not pay other buyers any money. Therefore, the seller under STM has a higher earning and a zero expense to incentivize diffusion, thus a higher revenue than SCM.
    
Now, we compare SCM with NSP.

    The neighbor set of $s$ in $G_s$, denoted as $r_{G_s}(s)$, is equal to the neighbor set of $s'$ in $\hat{G}$, denoted as $r_{\hat{G}}(s)$. This is because the arc-removing process follows the random shortest-path tree $T'$.
    
    The winner under SCM, denoted as $x$, satisfies \(v'_x\geq \maxbid{V(\hat{G})\setminus \alpha_{G_s}(x)}\). When $x \not\in r_{\hat{G}}(s)$, since a vertex in $r_{\hat{G}}(s)$ does not exist in the $\alpha_{G_s}(\cdot)$ set of anyone other than seller $s$ and herself, we claim that $v'_x \ge \maxbid{r_{\hat{G}}(s)} = \maxbid{r_{G_s}(s)}$.
    
    The revenue under NSP is the second-highest bid of these bidders, while the revenue of SCM is no less than $p_1 = \maxbid{V(\hat{G})\setminus \alpha_{\hat{G}}(c_1)}$. When $c_1 \not \in r_{\hat{G}}(s)$, $r_{\hat{G}}(s) \subseteq V(\hat{G})\setminus \alpha_{\hat{G}}(c_1)$, so
    \begin{align*}c_1 \ge \text{the second price of }r_{\hat{G}}(s) = \text{the second price of }r_{G_s}(s).\end{align*} 
    When $c_1 \not \in r_{\hat{G}}(s) = r_{G_s}(s)$, whether she is the highest bidder among the neighbors or not, the highest bid among other neighbors is no less than the second price. Therefore, SCM achieves a higher revenue than NSP.
\end{proof}

\begin{reptheorem}{thm:revenue_compare_STM}
When \(\Gamma_0=\emptyset\), for all \(\vek\theta\), we have
\begin{gather*}
\revenue^\mathrm{STM}(\vek\theta) \ge \revenue^\mathrm{IDM}(\vek\theta) \ge \revenue^\mathrm{VCG}(\vek\theta)\\
\welfare^\mathrm{SCM}(\vek\theta) \le \welfare^\mathrm{STM}(\vek\theta) \le \welfare^\mathrm{IDM}(\vek\theta) \le \welfare^\mathrm{VCG}(\vek\theta).
\end{gather*}
\end{reptheorem}

\begin{proof}
     Under these three mechanisms, the reachable graph $H$ does not change, so do the highest bidder $x^*$ and diffusion critical sequence $C_{x^*} = \{c_0 = s, c_1, \dots, c_l = x^*\}$.
  
     In IDM, the revenue of the seller stays at $p_1$ no matter who wins the item.
     
     In STM, the revenue is $$\revenue(\vek\theta')=\sum_{1\leq j\leq d}{t_{c_j}} = p_d + \sum_{1 \leq j < d}{(p_j - q_j)} = p_1 + \sum_{1 \leq j < d}({p_{j+1} - q_j)} \ge p_1.$$
     
     In VCG, the payment of non-critical buyers is zero. For critical buyer $c_j$ where \(1 \le j < d\), the payment of $c_j$ is $v'_{x^*} - p_j > 0$, which means that the mechanism should reward $c_j$ for her contribution of introducing $x^*$. The revenue is $$p_d - \sum_{1 \le j < d}{\left(v'_{x^*} - p_j\right)} = \sum_{1 \le j \le d}{p_j} - (d-1)v'_{x^*} = p_1 + \sum_{2 \le j \le d}{\left(p_j - v'_{x^*}\right)} \le p_1.$$
     
     VCG mechanism achieves the optimal social welfare, i.e., $\welfare^\mathrm{VCG}(\vek\theta)=v'_{x^*}$. Notice that any candidate winner $x$ under IDM satisfies $v_x \ge \maxbid{V(G_s)\setminus \alpha_{G_s}(x)}$, so she is still a candidate winner under STM. Among all the candidate winners under STM, the one with the lowest index in diffusion critical sequence (namely, the winner under STM) has the lowest bidder. Therefore, the winner of STM has a lower bid than the winner of IDM. 
\end{proof}

The following theorem shows that there is no clear-cut comparison of revenue between SCM and IDM, or between SCM and VCG, by giving concrete examples.

\begin{reptheorem}{thm:no_clear}
When \(\Gamma_0=\emptyset\), $\exists$ report profile $\vek\theta_1, \vek\theta_2,$ such that
\begin{gather*}
\revenue^\mathrm{SCM}(\vek\theta_1) > \revenue^\mathrm{IDM}(\vek\theta_1), 
\revenue^\mathrm{SCM}(\vek\theta_1) > \revenue^\mathrm{VCG}(\vek\theta_1), \\ \revenue^\mathrm{SCM}(\vek\theta_2) < \revenue^\mathrm{IDM}(\vek\theta_2), \revenue^\mathrm{SCM}(\vek\theta_2) < \revenue^\mathrm{VCG}(\vek\theta_2).
\end{gather*}
\end{reptheorem}
\begin{proof}
    \begin{figure}
    \centering
    \includegraphics[width=0.85\linewidth]{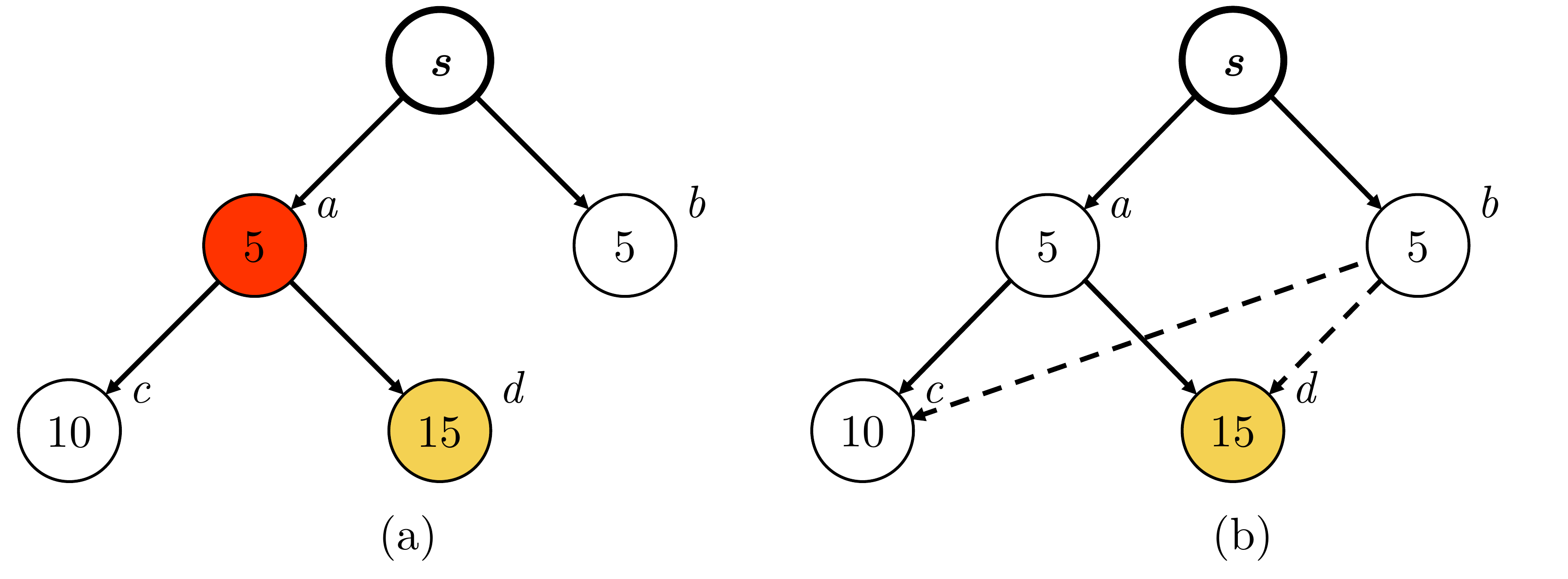}
    \caption{The examples used to prove Theorem \ref{thm:revenue_compare_STM}. The dotted lines are the arcs removed by SCM.}
    \label{fig:no-clear-cut}
    \end{figure}
    
    We set \(\vek\theta_1\) to be the case in Figure \ref{fig:no-clear-cut}(a) and \(\vek\theta_2\) as in Figure \ref{fig:no-clear-cut}(b).
    
    Under \(\vek\theta_1\), the seller $s$ only knows the existence of $a$ and $b$, agent $a$ has neighbors $c$ and $d$. The private value of $a, b, c, d$ are $5, 5, 10, 15$ respectively. Under SCM, the winner $d$ pays $10$ to get the item and other agents get zero utility. The revenue of the seller is $10$. In IDM, the agent $a$ gets a reward of $10-5=5$ for introducing $d$, and the seller's revenue is $5$. In VCG, $d$ gets the item and pays the second-highest price $10$, $b$ and $c$ gets zero utility while $a$ gets a reward of $15 - 5 = 10$ for her contribution to improving social welfare. The revenue of the seller under VCG is zero. This case shows that there exist some report profiles such that the SCM mechanism achieves a strictly higher revenue than IDM and VCG. 
    
    Under \(\vek\theta_2\), as shown in Figure \ref{fig:no-clear-cut}(b), $s$ knows $a$ and $b$, and both of them have a connection to $c$ and $d$. The private value of each buyer is the same as \(\vek\theta_1\). We assume that SCM removes the arcs $(b, c)$ and $(b, d)$ and runs STM on the remaining subgraph. Then agent $a$ will get a reward of $10 - 5 = 5$ because $c$ is in $\Gamma$. Winner $d$ pays $10$ for the item, so the revenue under SCM is $10 - 5 = 5$. In IDM, the seller gets the second price $10$ because the immediate dominator of $d$ is $s$. The revenue of the seller under VCG is also $10$. In this case, SCM achieves a strictly lower revenue than IDM and VCG. 
\end{proof}

\section{The Proof of the Impossibility Theorems}\label{app:sp-eff-ratio}
\begin{reptheorem}{thm:sp-eff-ratio}
    The worst-case efficiency ratio of any non-deficit, IR, and Sybil-proof diffusion auction mechanism is zero.
\end{reptheorem}

\begin{proof}
    Assume otherwise, a Sybil-proof diffusion mechanism \(\mechanism\) can achieve a non-zero efficiency ratio lower bound of \(\eta > 0\). We can observe that, if a buyer's private value is sufficiently high, i.e. \(\frac{1}{\eta}\) times higher than the private value of any other buyer, this buyer must win the item.
    
\begin{figure}
    \centering
    \includegraphics[width=0.85\linewidth]{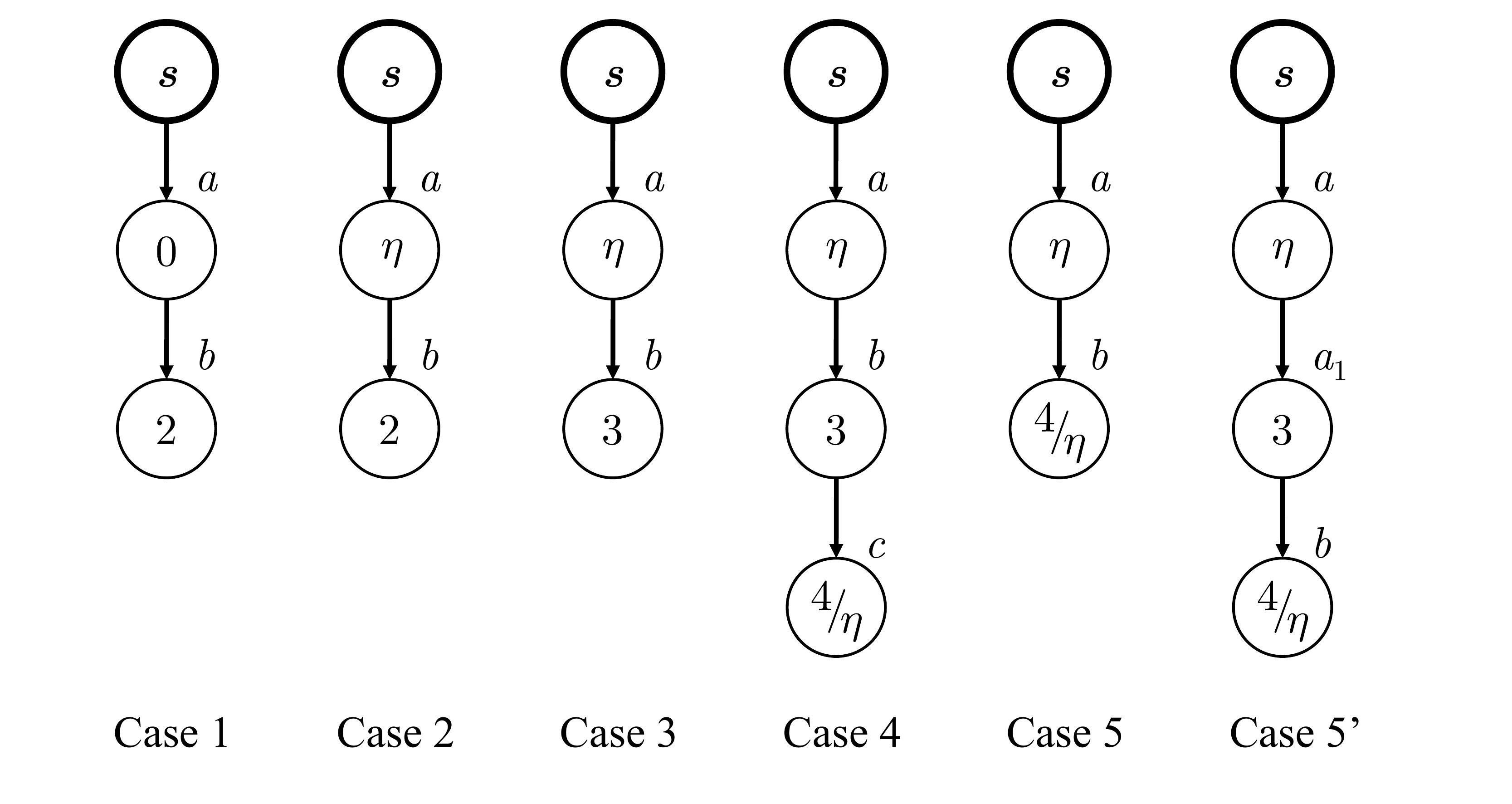}
    \caption{The social network structure used to prove Theorem \ref{thm:sp-eff-ratio}.}
    \label{fig:eff-ratio}
\end{figure}

    We consider the social network graphs in Figure \ref{fig:eff-ratio}. In the following proof, we always have \(V=\{s,a,b,c\}, r(s)=\{a\},r(a)=\{b\},r(c)=\emptyset\).

    \textbf{Case 1.} When \(r(b)=\emptyset\), \(v_a=0\), and \(v_b=2\), the item must be sold to the agent \(b\), and the prize for her must be zero (otherwise, she can bid below that prize and still win the item). By non-deficiency, the payment of \(a\) must also be zero.
    
    \textbf{Case 2.} When \(r(b)=\emptyset\), \(v_a=\eta\), and \(v_b=2\), the item still have to be sold to the agent \(b\). Assume that \(b\) pays \(x_1\leq 2\). By comparing this case with the previous one, we know that the reward that \(a\) received must also be zero, otherwise the agent \(a\) in case 1 will have an incentive to overbid.
    
    \textbf{Case 3.} When \(r(b)=\emptyset\), \(v_a=\eta\), and \(v_b=3\), the item still have to be sold to the agent \(b\). If \(b\)'s price differs between Case 2 and 3, one of them will have the incentive to strategically bid to lower the price, so we are certain that \(b\) still pays \(x_1\), and therefore, has a utility of at least \(3-x_1\geq 3-2=1\).
    
    \textbf{Case 4.} When \(r(b)=\{c\}\), \(v_a=\eta\), \(v_b=3\), and \(v_c=\frac{4}{\eta}\), the item will be allocated to \(c\). The money received by \(b\) must be no lower than \(3-x_1\) as in Case 3, otherwise \(b\) will not diffuse to \(c\).
    
    \textbf{Case 5.} When \(r(b)=\emptyset\), \(v_a=\eta\), and \(v_b=\frac{4}{\eta}\), the item will be allocated to \(b\). By the same reasoning as case 2, \(a\) must have a payment of zero. However, \(a\) can create a Sybil identity \(a_1\) with a bid of \(3\) and insert it between \(a\) and \(b\) (Case 5' in Figure \ref{fig:eff-ratio}), and the situation will look identical to Case 4 for the mechanism. This Sybil attack gives \(a\) an increase in utility of at least 1, contradicting Sybil-proofness.
\end{proof}

For the proof of Theorem~\ref{thm:no-best}, we construct the mechanism \emph{STM with a reserve price \(\kappa\)}, namely \(\mathrm{STM}[\kappa]\), below.

\begin{framed}
 \noindent\textbf{STM with reserve price \(\kappa\)}\\
 \rule{\textwidth}{0.5pt}

\begin{enumerate}
    \item We are given the reported type profile \(\maskprofile =(\theta'_1,\dots,\theta'_m)\) and the neighbor set of the seller \(r(s)\).

    \item We create an auxiliary agent \(p\) with a bid of \(\kappa\) and no outbound edge. Formally, \(\theta'_p=(\kappa, \emptyset)\).
    
    \item\label{step:stmkappa:stm} Perform STM on the type profile \((\theta'_p,\theta'_1,\dots,\theta'_m)\) with the neighbor set of the seller being \(\tilde{r}(s)=\{p\}\cup r(s)\).
    
    \item If STM sells the item to \(p\), \(\mathrm{STM}[\kappa]\) reserves the item
    and sets the payment function of everyone to be zero. Otherwise, \(\mathrm{STM}[\kappa]\) gives the same output as STM in step \ref{step:stmkappa:stm}.
\end{enumerate}
\end{framed}

For every fixed \(\kappa\), \(\mathrm{STM}[\kappa]\) is a non-deficit, Sybil-proof, and IR mechanism. This follows directly from Theorem~\ref{thm:main}.

\begin{lemma}
\label{lem:reserve-price}
For every type profile \(\vek\theta\) and every positive number \(\eps>0\), there exists a non-deficit, SP, IR diffusion auction mechanism \(\mechanism\), such that \(\revenue^{\mechanism}(\vek\theta)+\eps=\welfare^{\mechanism}(\vek\theta)=\welfare^*(\vek\theta)\).
\end{lemma}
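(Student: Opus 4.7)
The plan is to use the just-introduced mechanism $\mathrm{STM}[\kappa]$ with the reserve $\kappa$ chosen as a function of $\vek\theta$ and $\eps$. Because Theorem~\ref{thm:main} already guarantees non-deficiency, SP, and IR of the inner STM for every fixed $\kappa$, all that needs verifying are the two value equalities on the specific profile $\vek\theta$.

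Let $x^*$ attain $\welfare^*(\vek\theta)=v_{x^*}$ and take, as a first attempt, $\kappa = v_{x^*}-\eps$. In the augmented report profile inside $\mathrm{STM}[\kappa]$, the phantom $p$ appears as a new direct neighbor of $s$ with bid $\kappa<v_{x^*}$ and no outgoing arcs, so $p$ becomes its own leaf of the dominator tree while the dominator sequence $C_{x^*}=(c_0=s,c_1,\dots,c_\ell=x^*)$ is unchanged from $G_s(\vek\theta)$. Since $p\notin\alpha(c_j)$ for every $j\ge1$, the bid $\kappa$ contributes to every buying price $p_d$ and every selling price $q_d$ computed by the inner STM. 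I would then argue two points in sequence: first, as long as $\kappa$ strictly exceeds the maximum broker bid on $C_{x^*}$, the candidate test $v'_{c_d}\ge q_d$ is failed by every $d<\ell$ (because $q_d\ge\kappa>v'_{c_d}$), and the winner selection falls through to $c_\ell=x^*$, giving $\welfare^{\mathrm{STM}[\kappa]}(\vek\theta)=v_{x^*}=\welfare^*(\vek\theta)$; second, since $\Gamma_0=\emptyset$ is preserved and the only addition to $\Gamma$ is the child-of-$s$ vertex $p$ (which is not dominated by any $c_j$ with $j\ge1$), every $\setbeta{j}$ remains empty and Lemma~\ref{lem:0spdm-zero} applies to force every broker payment to zero. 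The seller's revenue therefore equals the winner's payment $p_\ell=\maxbid{(V(G_s)\setminus\alpha(x^*))\cup\{p\}}$, which collapses to $\kappa=v_{x^*}-\eps$ whenever $\kappa$ dominates every real bid outside $\alpha(x^*)$.

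The principal obstacle is the boundary case where some real agent outside $\alpha(x^*)$---either a broker on $C_{x^*}$ or a sibling-branch bidder reachable from $s$ without passing through $x^*$---has value at least $v_{x^*}-\eps$; in that regime the naive reserve is either overruled inside $p_\ell$ (revenue too high) or allows a broker to capture the item (welfare too low). I would handle this by forming a randomized mixture $\lambda\,\mathrm{STM}[\kappa_1]+(1-\lambda)\,\mathrm{STM}[\kappa_2]$ of two instances that each realize welfare $\welfare^*(\vek\theta)$ but produce different revenues---taking $\kappa_1$ just below $v_{x^*}$ so that $x^*$ still wins and the revenue exceeds $v_{x^*}-\eps$, and $\kappa_2$ paired with a broker-friendly $\Gamma_0$ (for instance, including the high-bid sibling vertices) whose rebates drive the revenue strictly below $v_{x^*}-\eps$---and then selecting $\lambda\in[0,1]$ via the intermediate value theorem so that the expected revenue is exactly $v_{x^*}-\eps$. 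The mixture inherits non-deficit, SP, and IR from each summand and preserves welfare $v_{x^*}$ since both components do. The remaining technical point, which I expect to be the hardest part, is verifying via the Sybil-attack argument of Lemma~\ref{lem:*spdm-sp} that the enlarged $\Gamma_0$ in $\mathrm{STM}[\kappa_2]$ continues to forbid profitable Sybil attacks on the inner STM.
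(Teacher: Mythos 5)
Your first paragraph is, in substance, the paper's own proof: take the reserve \(\kappa=\welfare^*(\vek\theta)-\eps\), note that the phantom \(p\) is a neighbor of \(s\) lying outside every \(\alpha(c_j)\), so that when \(\kappa\) exceeds every real bid other than \(v_{x^*}\) no broker can pass the test \(v'_{c_d}\geq q_d\), the top bidder wins at price \(p_\ell=\kappa\), and the broker payments vanish, giving both equalities. The only difference is that the paper does not attempt your ``boundary case'' at all: its proof explicitly chooses \(\eps\) ``small enough that \(\kappa\) is higher than the second-highest bid,'' i.e.\ it assumes away exactly the regime (large \(\eps\), where \(p_\ell\) is overruled by a sibling bid or a broker can win) that you try to repair.

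The repair is where the genuine gap sits. The mixture \(\lambda\,\mathrm{STM}[\kappa_1]+(1-\lambda)\,\mathrm{STM}[\kappa_2]\) with an intermediate-value choice of \(\lambda\) is fine in principle (IR and non-deficiency hold pointwise, SP is preserved in expectation, as in the derandomization used for SCM), but it requires a second component that is itself a valid non-deficit, IR, SP mechanism with welfare \(\welfare^*(\vek\theta)\) and revenue strictly below \(\welfare^*(\vek\theta)-\eps\), and your candidate---STM with a ``broker-friendly'' \(\Gamma_0\) containing the high-bid sibling vertices of this particular profile---is not a legitimate instantiation. In the paper, \(\Gamma_0\) is external information that is \emph{guaranteed} to contain no Sybil identities; it is not a design knob the mechanism may populate from the reported profile. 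If \(\Gamma_0\) is computed from the reports (e.g.\ ``the high-bid siblings''), the premise of Lemma~\ref{lem:*spdm-sp}---that \(\Gamma\) contains no Sybil identities and is unaffected by an attack---fails: an attacker can plant fake high-bid vertices that qualify for \(\Gamma_0\), inflate some \(\setbeta{j}\), and harvest the broker rebate, which is precisely the failure mode the meeting-point construction of \(\Gamma\) (and the OSM counterexample) is meant to exclude; fixing \(\Gamma_0\) by naming specific vertices instead would violate the anonymity condition. You acknowledge this verification as open yourself, so as written the proposal establishes the lemma only in the small-\(\eps\) regime already covered by the paper, and the part that goes beyond the paper rests on an unproven (and, as sketched, doubtful) SP claim.
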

\begin{proof}
Let \(\kappa=\welfare^*(\vek\theta)-\eps\), where \(\varepsilon\) is a positive real number small enough that \(\kappa\) is higher than the second-highest bid in \(\vek\theta\). We can see that \(\mathrm{STM}[\kappa]\) satisfies the conditions of the lemma because it is SP and IR, and by its definition, it has to sell the item to the highest bidder with a price of \(\kappa\). Since 
\end{proof}

With this lemma, we can prove the following theorem, combining Theorem~\ref{thm:no-best} and \ref{thm:no-best-rev}.

\begin{theorem}\label{thm:no-best-combined}
   For any non-deficit, SP, IR diffusion auction mechanism \(\mechanism\), for any \(\varepsilon > 0\), there exists another non-deficit, SP, IR diffusion auction mechanism \(\mechanism'\) such that
    \[\inf_{\vek\theta}\frac{\welfare^\mechanism(\vek\theta)}{\welfare^{\mechanism'}(\vek\theta)}<\varepsilon,\text{ and }\inf_{\vek\theta}\frac{\revenue^\mechanism(\vek\theta)}{\revenue^{\mechanism'}(\vek\theta)}<\varepsilon.\]
\end{theorem}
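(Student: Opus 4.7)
The plan is to combine Theorem~\ref{thm:sp-eff-ratio} with Lemma~\ref{lem:reserve-price} in a direct way. Given an arbitrary non-deficit, SP, IR mechanism $\mathcal{M}$ and $\varepsilon > 0$ (which I may assume satisfies $\varepsilon < 1$, since otherwise any smaller $\varepsilon' < 1$ also yields a valid witness), I would first use Theorem~\ref{thm:sp-eff-ratio} to locate a single profile $\vek\theta^*$ on which $\mathcal{M}$ is catastrophically inefficient against the social optimum, and then construct $\mathcal{M}'$ as an instance of $\mathrm{STM}[\kappa]$ whose reserve price $\kappa$ is tuned so that $\mathcal{M}'$ nearly exhausts the optimal welfare as revenue on exactly that profile. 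The same $\vek\theta^*$ will witness both infima, so a single $\mathcal{M}'$ suffices for both claims.

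More concretely, applying Theorem~\ref{thm:sp-eff-ratio} to $\mathcal{M}$ produces a type profile $\vek\theta^*$ with $\welfare^*(\vek\theta^*) > 0$ and $\welfare^\mathcal{M}(\vek\theta^*)/\welfare^*(\vek\theta^*) < \varepsilon/2$. Fixing this profile, I would then invoke Lemma~\ref{lem:reserve-price} with $\vek\theta^*$ and $\delta = \tfrac{\varepsilon}{2}\welfare^*(\vek\theta^*) > 0$ to obtain a non-deficit, SP, IR mechanism $\mathcal{M}' = \mathrm{STM}[\welfare^*(\vek\theta^*) - \delta]$ satisfying $\welfare^{\mathcal{M}'}(\vek\theta^*) = \welfare^*(\vek\theta^*)$ and $\revenue^{\mathcal{M}'}(\vek\theta^*) = (1 - \varepsilon/2)\welfare^*(\vek\theta^*)$. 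Note that $\mathcal{M}'$ is a fixed mechanism (its definition does not depend on the argument $\vek\theta$), so it is a legitimate witness of the existential quantifier.

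Verification of the two inequalities then reduces to simple algebra. For welfare, evaluating the infimum at $\vek\theta^*$ gives $\welfare^\mathcal{M}(\vek\theta^*)/\welfare^{\mathcal{M}'}(\vek\theta^*) = \welfare^\mathcal{M}(\vek\theta^*)/\welfare^*(\vek\theta^*) < \varepsilon/2 < \varepsilon$. For revenue, I would first observe that IR forces every non-winner's payment $t_i$ to be non-positive (otherwise her utility would be strictly negative) and forces the winner's payment to be at most her valuation, so $\revenue^\mathcal{M}(\vek\theta^*) \le \welfare^\mathcal{M}(\vek\theta^*)$. Combining this with $\revenue^{\mathcal{M}'}(\vek\theta^*) = (1-\varepsilon/2)\welfare^*(\vek\theta^*)$ yields $\revenue^\mathcal{M}(\vek\theta^*)/\revenue^{\mathcal{M}'}(\vek\theta^*) < (\varepsilon/2)/(1-\varepsilon/2) < \varepsilon$, where the last inequality uses $\varepsilon < 1$.

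The real obstacle in this proof is not the assembly but the two ingredients that have already been established: Theorem~\ref{thm:sp-eff-ratio}, which crucially exploits a Sybil-attack gadget to generate arbitrarily inefficient profiles, and Lemma~\ref{lem:reserve-price}, whose $\mathrm{STM}[\kappa]$ construction requires that inserting an auxiliary reserve bidder preserves Sybil-proofness. The only minor subtleties in gluing these pieces together are keeping denominators positive (handled by the fact that the worst-case construction in Theorem~\ref{thm:sp-eff-ratio} automatically gives a profile with strictly positive optimal welfare) and the WLOG reduction to $\varepsilon < 1$.
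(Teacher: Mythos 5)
Your proposal is correct and follows essentially the same route as the paper: find a profile \(\vek\theta\) with \(\welfare^{\mechanism}(\vek\theta)<\tfrac{\eps}{2}\welfare^*(\vek\theta)\) via Theorem~\ref{thm:sp-eff-ratio}, instantiate \(\mathrm{STM}[\kappa]\) via Lemma~\ref{lem:reserve-price} at that profile, and use \(\revenue^{\mechanism}\leq\welfare^{\mechanism}\) (from IR) to get both ratio bounds at the single witness profile. The only cosmetic difference is the slack you feed into Lemma~\ref{lem:reserve-price}: the paper takes it to be \(\tfrac{1}{2}\welfare^*(\vek\theta)\), which makes \(\revenue^{\mechanism'}(\vek\theta)=\tfrac{1}{2}\welfare^{\mechanism'}(\vek\theta)\) and removes the need for your \(\eps<1\) reduction, whereas your choice \(\tfrac{\eps}{2}\welfare^*(\vek\theta)\) works equally well with that reduction.
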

\begin{proof}
By Theorem~\ref{thm:sp-eff-ratio}, \(\mechanism\) has a worst-case efficiency ratio of zero. This means that there exists a type profile \(\vek\theta\) such that \(\welfare^{\mechanism}(\vek\theta)<\frac{\eps}{2} \welfare^*(\vek\theta)\).

By Lemma~\ref{lem:reserve-price}, there exists a  \(\mechanism'=\mathrm{STM}[\kappa]\) such that
\[\revenue^{\mechanism'}(\vek\theta)+\frac{1}{2}\welfare^*(\vek\theta)=\welfare^{\mechanism'}(\vek\theta)=\welfare^*(\vek\theta)\]
Therefore,
\[\revenue^{\mechanism}(\vek\theta)\leq\welfare^{\mechanism}(\vek\theta)<\frac{\varepsilon}{2}\welfare^*(\vek\theta)=\frac{\varepsilon}{2}\welfare^{\mechanism'}(\vek\theta)=\eps\revenue^{\mechanism'}(\vek\theta).\]
Hence,
\[\frac{\welfare^\mechanism(\vek\theta)}{\welfare^{\mechanism'}(\vek\theta)}<\varepsilon,\text{ and }\frac{\revenue^\mechanism(\vek\theta)}{\revenue^{\mechanism'}(\vek\theta)}<\varepsilon,\]
and the infimum will not be higher.
\end{proof}

\end{document}